\newcommand{\PreserveBackslash}[1]{\let\temp=\\#1\let\\=\temp}
\newcolumntype{C}[1]{>{\PreserveBackslash\centering}p{#1}}
\newtheorem{lem}{Lemma}[section]
\newtheorem{prop}[lem]{Proposition}
\newtheorem{thm}[lem]{Theorem}
\newtheorem{cor}[lem]{Corollary}
\newtheorem{df}[lem]{Definition}
\newtheorem{rem}[lem]{Remark}
\def\R{{\mathbb{R}}}
\def\N{{\mathbb{N}}}
\numberwithin{equation}{section}
\begin{document}

\title[Generalized hydrodynamic limit for the box-ball system]{Generalized hydrodynamic limit\\for the box-ball system}

\author[D.~A.~Croydon]{David A. Croydon}
\address{Research Institute for Mathematical Sciences, Kyoto University, Kyoto 606-8502, Japan}
\email{croydon@kurims.kyoto-u.ac.jp}

\author[M.~Sasada]{Makiko Sasada}
\address{Graduate School of Mathematical Sciences, University of Tokyo, 3-8-1, Komaba, Meguro-ku, Tokyo, 153--8914, Japan}
\email{sasada@ms.u-tokyo.ac.jp}

\begin{abstract}
We deduce a generalized hydrodynamic limit for the box-ball system, which explains how the densities of solitons of different sizes evolve asymptotically under Euler space-time scaling. To describe the limiting soliton flow, we introduce a continuous state-space analogue of the soliton decomposition of Ferrari, Nguyen, Rolla and Wang (cf.\ the original work of Takahashi and Satsuma), namely we relate the densities of solitons of given sizes in space to corresponding densities on a scale of `effective distances', where the dynamics are linear. For smooth initial conditions, we further show that the resulting evolution of the soliton densities in space can alternatively be characterised by a partial differential equation, which naturally links the time-derivatives of the soliton densities and the `effective speeds' of solitons locally.
\end{abstract}

\keywords{box-ball system, cellular automaton, generalized hydrodynamics, hydrodynamic limit, integrable system, soliton}

\subjclass[2010]{37B15 (primary), 82C22, 82C23, 82C70 (secondary)}
\date{\today}

\maketitle

\section{Introduction}

The box-ball system (BBS) was introduced by Takahashi and Satsuma in \cite{takahashi1990} as a simple example of a discrete soliton system. In particular, it was shown in \cite{takahashi1990} that finite configurations of the BBS could be decomposed into `basic sequences', or solitons, that are conserved by the dynamics. Moreover, it is now understood that by suitably encoding these solitons, it is possible to linearise the dynamics, see the introduction to the inverse scattering method that appears in \cite[Section 3.3]{IKT}, for example. Motivated by studying invariant measures for the BBS supported on bi-infinite configurations, a new soliton decomposition exhibiting linear dynamics was presented by Ferrari, Nguyen, Rolla and Wang in \cite{Ferrari}. In the present work, we explain how the framework of the latter article can be applied to deduce a generalized hydrodynamic limit for the BBS. To describe the limit, we introduce a continuous state-space analogue of the soliton decomposition of \cite{Ferrari}, namely we relate the densities of solitons of given sizes in space to corresponding densities on a scale of `effective distances', where the dynamics are linear. For smooth initial conditions, we further show that the resulting evolution of the soliton densities in space can alternatively be characterised by a partial differential equation, which naturally links the time-derivatives of the soliton densities and the `effective speeds' of solitons locally. We highlight that our results contribute to the growing literature concerning the randomization of the BBS \cite{CKST,CS,CSMont,FG,FG2,Ferrari,Kondo,KL,KLO,Lev,LLPS}, and that it also connects to current work on generalized hydrodynamics \cite{Doyon,DSY,KMP,Sp}.

Let us now be a little more precise about the model of interest here. We will consider a one-sided infinite version of the BBS, with configurations being represented by $\eta=(\eta(x))_{x\in\mathbb{N}}\in\{0,1\}^\mathbb{N}$. We interpret $\eta(x)=1$ as meaning there is a ball in the box at spatial location $x$, and $\eta(x)=0$ as meaning the relevant box is empty. For any such configuration, we can define the dynamics by introducing an auxiliary process $W=(W(x))_{x\in\mathbb{Z}_+}$ taking values in $\mathbb{Z}_+$, which is defined by setting $W_0=0$, and, for $x\geq1$,
\begin{equation}\label{carrierdef}
W(x)=\left\{\begin{array}{ll}
               W(x-1)+1, & \mbox{if }\eta(x)=1,\\
               W(x-1), & \mbox{if }\eta(x)=0\mbox{ and }W(x-1)=0,\\
               W(x-1)-1, & \mbox{if }\eta(x)=0\mbox{ and }W(x-1)>0.
             \end{array}\right.
\end{equation}
(NB.\ We distinguish between $\mathbb{N}:=\{1,2,\dots\}$ and $\mathbb{Z}_+:=\{0,1,\dots\}$.) The quantity $W(x)$ represents the load transported by a `carrier' from $x$ to $x+1$ as it moves from left to right (that is, from negative to positive), picking up each ball as it passes, and dropping off a ball when it is holding at least one ball and sees an empty box. A single pass of the carrier from $0$ to $\infty$ gives one discrete-time step of the BBS dynamics. With this description, we find that the updated configuration $T\eta=(T\eta(x))_{x\in\mathbb{N}}$ is given by
\begin{equation}\label{discretedynamics}
T\eta(x)=\min\left\{1-\eta(x),W(x-1)\right\}.
\end{equation}
Whilst not explicit in the above definition, one readily sees that the basic strings of \cite{takahashi1990}, i.e.\ sequences of the form $(1,0)$, $(1,1,0,0)$, $(1,1,1,0,0,0)$, etc., act like solitons, being preserved by the action of the carrier, travelling at a constant speed (depending on their length) when in isolation, and experiencing interactions when they meet. See Figure \ref{fig:bbs-2soliton} for a simple example of a two soliton interaction in the BBS (adapted from \cite{CKST}).

\newcommand{\BA}{\circle{10}}
\newcommand{\BB}{\circle*{10}}
\newcommand{\BC}{\circle{11}}
\newcommand{\BD}{\circle*{11}}
\makeatother
\begin{figure}[t]
\begin{center}
\noindent
{
\hspace*{-1mm}\BA\BB\BB\BB\BA\BA\BA\BA \BA\BA\BB\BA\BA \BA\BA\BA\BA\BA \BA\BA\BA\BA\BA\BA \BA}\\
\BA\BA\BA\BA\BB\BB\BB\BA \BA\BA\BA\BB\BA \BA\BA\BA\BA\BA \BA\BA\BA\BA\BA\BA \BA\\
\BA\BA\BA\BA\BA\BA\BA\BB \BB\BB\BA\BA\BB \BA\BA\BA\BA\BA \BA\BA\BA\BA\BA\BA \BA\\
\BA\BA\BA\BA\BA\BA\BA\BA \BA\BA\BB\BB\BA \BB\BB\BA\BA\BA \BA\BA\BA\BA\BA\BA\BA \\
\BA\BA\BA\BA\BA\BA\BA\BA \BA\BA\BA\BA\BB \BA\BA\BB\BB\BB \BA\BA\BA\BA\BA\BA \BA\\
\BA\BA\BA\BA\BA\BA\BA\BA \BA\BA\BA\BA\BA \BB\BA\BA\BA\BA \BB\BB\BB\BA\BA\BA \BA\\
\BA\BA\BA\BA\BA\BA\BA\BA \BA\BA\BA\BA\BA \BA\BB\BA\BA\BA \BA\BA\BA\BB\BB\BB\BA
\end{center}
\caption{A two soliton interaction of the box-ball system. (Time runs from the top row to the bottom row.)} \label{fig:bbs-2soliton}
\end{figure}

As was observed in the case of finite configurations in \cite{takahashi1990}, and extended to the case of certain bi-infinite configurations in \cite{Ferrari}, even during interactions, it is possible to identify individual solitons within the configuration of balls. In our setting such a soliton decomposition can be defined on the following subset of configurations:
\begin{equation}\label{omegadef}
\Omega:=\left\{\eta=(\eta(x))_{x\in\mathbb{N}}\in\{0,1\}^\mathbb{N}:\:\limsup_{x\rightarrow\infty}\sum_{x'=1}^x\left(1-2\eta(x')\right)=\infty\right\};
\end{equation}
details of the procedure for doing so will be presented in Section \ref{soldecompsec}. Related to this, we introduce the notation, for $i,x\in\mathbb{N}$, $k\in\mathbb{Z}_+$,
\begin{equation}\label{sigmaidef}
\sigma_i(x,k):=\mathbf{1}_{\left\{\exists\mbox{ a soliton of size $i$ in $T^k\eta$ starting at spatial location $x$}\right\}},
\end{equation}
where $T^k\eta$ is the configuration after $k$ steps of the dynamics. Throughout, we will only ever consider initial configurations $\eta\in\Omega$ that incorporate solitons of size at most $I$ for some $I\in\mathbb{N}$, and so we also define
\begin{equation}\label{omegaidef}
\Omega_I:=\left\{\eta\in\Omega:\:\sigma_i(x,0)=0,\:\forall i>I,\:x\in\mathbb{N}\right\}.
\end{equation}
For random initial configurations with distribution supported on this set, our first main result (see Theorem \ref{reallythemainthm} below) is a generalized hydrodynamic limit for the rescaled empirical distributions of solitons of different sizes. In particular, we consider for $i\in\{1,2,\dots,I\}$ and $N\in\mathbb{N}$,
\[\pi^{N,t}_i(du):=\frac{1}{N}\sum_{x \in \N} \sigma_i\left( x, \lfloor Nt \rfloor\right) \delta_{x/N}(du),\qquad u,t\in\mathbb{R}_+,\]
where $\delta_x$ is the probability measure on $\mathbb{R}_+$ placing all of its mass at $x$.

Before proceeding, it will be helpful to discuss the effect of interactions on soliton speeds. In the case of a single size $i$ soliton overtaking a size $j$ soliton, where $i>j$, it is an easy exercise to check that the larger soliton receives a push forward of $2j$ when compared to its free evolution, and the smaller soliton a push backward by the same amount. (See Figure \ref{fig:bbs-2soliton} again.) The entirety of such interactions was considered in \cite{Ferrari} in the case of random, spatially stationary, bi-infinite configurations, with a system of equations being presented that relates the asymptotic speeds of solitons of different sizes, dependent on the densities of the variously-sized solitons. The latter formulae will also be relevant for determining the effective speeds of solitons in our limiting model. In particular, for $\rho \in \mathbb{R}_+^I:=[0,\infty)^I$ satisfying $\sum_{i=1}^I 2i \rho_i <1$, where $\rho_i$ represents the density of solitons of size $i$, we define the effective speeds $v^{\mathrm{eff}}(\rho)=(v^{\mathrm{eff}}_i(\rho))_{i=1}^I$ via the equations:
\begin{equation}\label{effectivespeed}
v^{\mathrm{eff}}_i(\rho)=v_i-\sum_{j=1}^I 2(i\wedge j) \rho_j(v^{\mathrm{eff}}_j(\rho)-v^{\mathrm{eff}}_i(\rho)),\qquad i\in\{1,2,\dots,I\},
\end{equation}
where $v_i:=i$ is the velocity of a soliton of size $i$ in isolation. (NB.\ Throughout the article, we use the notation $i\wedge j:=\min\{i,j\}$ and $i\vee j:=\max\{i,j\}$.) To give an explicit formula for $v^{\mathrm{eff}}(\rho)$, let $M(\rho)$ be the matrix defined by
\begin{equation}\label{matrix}
\left\{
  \begin{array}{ll}
    \displaystyle{M_{ii}(\rho)=1-\sum_{j \neq i} 2(i \wedge j) \rho_j}, &\vspace{3pt} \\
    \displaystyle{M_{ij}(\rho)=2(i\wedge j)\rho_j}, & i\neq j.
  \end{array}
\right.
\end{equation}
It is possible to check that $M(\rho)$ is invertible (see Lemma \ref{minvert} below), and so $v^{\mathrm{eff}}(\rho)$ is uniquely given by
\begin{equation}\label{effv}
v^{\mathrm{eff}}(\rho)=M(\rho)^{-1}v,
\end{equation}
where $v^{\mathrm{eff}}(\rho)$ and $v$ are column vectors with entries $v^{\mathrm{eff}}_i(\rho)$ and $v_i$ respectively.

As the final ingredient we need to state Theorem \ref{reallythemainthm}, we introduce the set of initial density profiles that the result will cover. To this end, we let $\mathcal{D}_{\mathrm{density}}$ be the collection of $(\rho_i)_{i=1}^I\in \mathcal{C}^1(\mathbb{R}_+,\mathbb{R}_+)^I$, where $\mathcal{C}^1(\mathbb{R}_+,\mathbb{R}_+)$ will be defined shortly, that satisfy both
\begin{equation}\label{cond1}
\sum_{i=1}^I2i\rho_i(u)<1,\qquad \forall u\in\mathbb{R}_+,
\end{equation}
and
\begin{equation}\label{cond2}
\sum_{i=1}^I i \sup_{u}\frac{\rho_i(u)}{1-\sum_{j=1}^I2(i\wedge j)\rho_j(u)}<\frac{1}{2}.
\end{equation}
As will be expanded upon later, the first condition here enables us to reparameterize solitons from their spatial position into the scale of effective distances that we will introduce, and the second condition is a density condition for solitons in their effective distances that is preserved by the dynamics. The requirement $(\rho_i)_{i=1}^I\in \mathcal{C}^1(\mathbb{R}_+,\mathbb{R}_+)^I$ is a differentiability assumption that ensures we have enough smoothness to solve the partial differential equation of interest. In particular, for any $q \in \N$, we write $\mathcal{C}^q(\R_+,\R_+)$ for the set of functions $f:\R_+\rightarrow\R_+$ such that the derivatives of $f^\mathbb{R}$ of order less than or equal to $q$ are continuous everywhere in $\R$, where $f^{\R}(u):=0$ for $u < 0$, and $f^{\R}(u):=f(u)$ for $u \ge 0$. We highlight that $f \in \mathcal{C}^q(\R_+,\R_+)$ implies $f(0)=0$, and so $\mathcal{C}^1(\R_+,\R_+) \subsetneq C^1(\R_+,\R_+)$. We also denote by $C_0(\mathbb{R}_+,\mathbb{R})$ the set of $f:\R_+\rightarrow\R$ that are continuous on $\mathbb{R}_+$ (right-continuous at 0), and compactly supported.

\begin{thm}\label{reallythemainthm} Fix $I\in \mathbb{N}$ and $\rho^0=(\rho^0_i)_{i=1}^I\in \mathcal{D}_{\mathrm{density}}$. For each $N\in\mathbb{N}$, consider the BBS initiated from a random configuration $\eta$ with distribution $\mathbf{P}_N$ supported on $\Omega_I$, and suppose that, for every $\varepsilon>0$ and $(F_i)_{i=1}^I \in C_0(\mathbb{R}_+,\mathbb{R})^I$,
\[\lim_{N\rightarrow\infty} \mathbf{P}_N\left(\sup_{i\in\{1,2,\dots,I\}} \left| \int_{\R_+} F_i(u) \pi_i^{N,0}(du) - \int_{\R_+} F_i(u) \rho^0_i(u)du \right|  > \varepsilon\right) =0.\]
It then holds that, for every $\varepsilon>0$, $t\in(0,\infty)$ and $(F_i)_{i=1}^I \in C_0(\mathbb{R}_+,\mathbb{R})^I$,
\[\lim_{N\rightarrow\infty} \mathbf{P}_N\left(\sup_{i\in\{1,2,\dots,I\}} \left| \int_{\R_+} F_i(u)\pi_i^{N,t} (du)-  \int_{\R_+} F_i (u)\rho_i(u,t)du \right|  > \varepsilon\right) =0,\]
where $(\rho_i(u,t))_{u,t\in\mathbb{R}_+,i=1,2,\dots,I}$ is the unique classical solution of the partial differential equation
\begin{equation}\label{PDErho}
\left\{\begin{array}{l}
\vspace{3pt}\displaystyle{\partial_t \rho_i =- \partial_u \left( v_i^{\mathrm{eff}}(\rho) \rho_i\right),}\\
\displaystyle{\rho_i(\cdot,0)=\rho^0_i(\cdot),}
\end{array}
\qquad i=1,2,\dots,I,\right.
\end{equation}
amongst the class of functions $\rho\in C^1(\mathbb{R}_+^2,\mathbb{R}_+)^I$ satisfying $\rho (\cdot,t)  \in \mathcal{D}_{\mathrm{density}}$ for all $t \ge 0$.
\end{thm}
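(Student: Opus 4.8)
The plan is to \emph{linearise} the dynamics via an effective-distance change of variables, reduce the hydrodynamic limit to a rigid translation in the linearising coordinate, and then transport back to physical space, with \eqref{PDErho} emerging as the Eulerian form of the resulting flow. First I would make rigorous the soliton decomposition of Section \ref{soldecompsec} (following \cite{Ferrari}) in the present one-sided setting, where membership of the set $\Omega$ in \eqref{omegadef} guarantees infinitely many carrier records and hence a well-defined decomposition. For a configuration $\zeta$ and each size $i$ I introduce the microscopic \emph{effective distance} $a_i(\zeta,x)$, an increasing function of $x\in\N$ whose increments are local functionals of the soliton content, normalised so that in the continuum $\tfrac1N a_i(\zeta,\lfloor Nu\rfloor)$ has $u$-gradient $1-\sum_j 2(i\wedge j)\rho_j$. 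The crucial algebraic input, extracted from the decomposition, is that $a_i$ linearises $T$: the effective positions of the size-$i$ solitons of $T\zeta$ are exactly those of $\zeta$ advanced by $i$, so after $k$ steps they advance by $ik$. In particular there is no dynamical randomness in the effective coordinate; all the probability sits in the initial configuration.

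Next I would push the hypothesis forward through this change of variables. Writing $\tilde\pi_i^{N,t}$ for the empirical measure of size-$i$ solitons placed at their rescaled effective positions, the initial assumption together with a law of large numbers for the (locally additive) effective-distance profile, $\tfrac1N a_i(\eta,\lfloor Nu\rfloor)\to\alpha^0_i(u):=\int_0^u(1-\sum_j 2(i\wedge j)\rho^0_j)\,du'$, gives $\tilde\pi_i^{N,0}\to\bar\rho^0_i(y)\,dy$, where $\bar\rho^0_i$ is the pushforward of $\rho^0_i\,du$ under $\alpha^0_i$; condition \eqref{cond1} is exactly what makes each $\alpha^0_i$ a finite, strictly increasing bijection. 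By the linearisation, $\tilde\pi_i^{N,t}$ is $\tilde\pi_i^{N,0}$ translated by $i\lfloor Nt\rfloor/N$, whence $\tilde\pi_i^{N,t}\to\bar\rho_i(\cdot,t)\,dy$ with $\bar\rho_i(y,t)=\bar\rho^0_i(y-it)$. This is the step in which ``the dynamics are linear'', and it is essentially immediate once the decomposition is in place.

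The substance is then to transport back to physical space. I would define $\rho_i(\cdot,t)$ as the self-consistent physical density whose effective pushforward is $\bar\rho_i(\cdot,t)$: at each $(u,t)$ the vector $(\rho_i)_i$ solves the linear system $\rho_i=\bar\rho_i(y_i)(1-\sum_j 2(i\wedge j)\rho_j)$ coupled to $\tfrac{d}{du}y_i=1-\sum_j 2(i\wedge j)\rho_j$. Since $\bar\rho_i=\rho_i/(1-\sum_j 2(i\wedge j)\rho_j)$, condition \eqref{cond2} reads $\sum_i i\sup_y\bar\rho_i<\tfrac12$, which is manifestly preserved under the rigid translation of the $\bar\rho_i$ and is precisely what keeps this reconstruction well-posed with $(\rho_i)_i\in\mathcal{D}_{\mathrm{density}}$. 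Because the reconstruction depends continuously on the full family $(\bar\rho_j)_j$, convergence of the $\tilde\pi_j^{N,t}$ transfers to $\pi_i^{N,t}\to\rho_i(\cdot,t)\,du$. Finally, a direct computation—differentiating the reparametrisation $y=\alpha_i(u,t)$ together with the count-conservation $\rho_i\,du=\bar\rho_i\,dy$, and invoking \eqref{effectivespeed} (equivalently $M(\rho)v^{\mathrm{eff}}=v$ of \eqref{effv}, which is solvable by Lemma \ref{minvert})—shows that free transport of $\bar\rho_i$ at speed $i$ is equivalent to $\rho_i$ solving \eqref{PDErho}; uniqueness in the stated class follows by running the same correspondence in reverse.

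I expect the principal difficulty to be the transport-back step: transferring weak convergence through a nonlinear, density-dependent and microscopically random change of variables, uniformly in $i$ and over $F_i\in C_0(\R_+,\R)$. Two points require care—establishing the law of large numbers for the profile $a_i$ with enough uniformity that the random coordinate may be replaced by its deterministic limit inside the empirical measures, and proving that the effective-to-physical reconstruction is genuinely continuous (indeed Lipschitz) on the set cut out by \eqref{cond2}, since the different sizes live on different effective scales and are coupled through the weights $2(i\wedge j)\rho_j$. The accompanying PDE analysis—global existence of a classical solution remaining in $\mathcal{D}_{\mathrm{density}}$ via the method of characteristics, with \eqref{cond2} preventing the crossing of characteristics—is the other substantial ingredient.
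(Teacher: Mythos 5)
Your proposal follows essentially the same route as the paper: linearise via the effective-distance (scattering) change of variables built from the slot decomposition of Ferrari et al., observe that the dynamics there are a rigid shift by $ik$, transport back through the inverse map, and obtain \eqref{PDErho} by the chain rule together with the invertibility of $M(\rho)$, with uniqueness by reversing the correspondence to the free transport equation. The only differences are presentational — the paper works throughout with integrated densities $\psi_i$ and their uniform convergence on compacts (which makes the transport-back step a matter of composing monotone functions via the iteratively defined $\gamma_i$, rather than solving your coupled pointwise system), and the "law of large numbers" for the effective-distance profile is in fact a deterministic consequence of the assumed convergence, since $\phi_i^N=u-\sum_j 2(i\wedge j)\psi_j^N$.
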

\medskip

\begin{rem}\label{3rem}
(a) Whilst we restrict to the case of finite $I$ in Theorem \ref{reallythemainthm}, we anticipate that essentially the same strategy can be applied to deduce the corresponding result when $I=\infty$. Since this involves a number of additional technical arguments, however, we prefer to postpone this to a future work.\\
(b) Using the time-reversibility of the BBS (i.e.\ that $T^{-1}$ is given by running the carrier from right-to-left, see \cite{CKST, takahashi1990}), it would be possible to adapt the proof of Theorem \ref{reallythemainthm} to deduce the corresponding result for the BBS started from initial configurations $\eta=(\eta(x))_{x\in\mathbb{Z}}$ such that $\eta(x)=0$ for $x\geq 0$ and for which $(\eta(-x))_{x\in\mathbb{N}}$ satisfies the conditions of the theorem (including that $I<\infty$, which ensures a finite maximum soliton speed).\\
(c) Another natural extension would be to consider truly bi-infinite initial configurations, namely $\eta=(\eta(x))_{x\in\mathbb{Z}}$ such that $\eta(x)=1$ infinitely often as $x\rightarrow\pm\infty$ (cf.\ \cite{CKST,Ferrari}). As in \cite[Theorem 2.1]{Ferrari}, in this situation one would need to be careful about the handling of solitons spanning 0 in the discrete and continuous soliton decompositions (and their dynamics). We leave the treatment of this issue as an open problem.
\end{rem}

The previous theorem is in fact a corollary of two further results, which we now set out. The first of these is a generalized hydrodynamic limit for the integrated densities of solitons. This picture allows for a wider class of initial conditions, including densities that are step functions (see Figure \ref{solfig} for an illustrative example). Moreover, it gives a clear description of how the continuous system evolves in terms of the linear evolution of the solitons seen on their effective scale.

The discrete object of interest is now defined by setting
\begin{equation}\label{psiindef}
\psi_i^N(u,t):=\pi_i^{N,t}\left([0,u]\right)=\frac{1}{N}\sum_{x=1}^{\lfloor Nu \rfloor}\sigma_i\left(x,\lfloor Nt\rfloor\right).
\end{equation}
For each fixed time, the limiting continuous analogue will be an element of
\begin{equation}\label{ddef}
\mathcal{D}:=\left\{(\psi_i)_{i=1}^I\in\mathcal{C}^I:\phi_I\in \mathcal{C}^{\uparrow},
\sum_{i=1}^I i \sup_{u_1,u_2}\frac{\psi_i(u_1)-\psi_i(u_2)}{\phi_i(u_1)-\phi_i(u_2)}<\frac{1}{2}\right\},
\end{equation}
where
\begin{align}
\mathcal{C}&:=\left\{f\in C(\R_+,\R_+):\:f(0)=0,\:f\mbox{ non-decreasing}\right\},\label{cdef}\\
\mathcal{C}^{\uparrow}&:=\left\{f\in \mathcal{C}:\:f\mbox{ strictly increasing}, \lim_{u\rightarrow\infty}f(u)=\infty\right\},\label{cupdef}
\end{align}
and for $(\psi_i)_{i=1}^I\in\mathcal{C}^I$, we define
\begin{equation}\label{phiidef}
\phi_i(u):=u-\sum_{j=1}^I2(i\wedge j)\psi_j(u),\qquad i=1,2,\dots,I.
\end{equation}
In particular, $\psi_i$ represents the integrated density of size $i$ solitons in space, and, as will be discussed further below, $\phi_i(u)$ is the effective distance accumulated by size $i$ solitons over the spatial interval $[0,u]$. The condition $\phi_I\in\mathcal{C}^\uparrow$ (cf.\ \eqref{cond1}) ensures that the changes of scale are well-defined, see Lemma \ref{philem}. Moreover, the final condition in the definition of $\mathcal{D}$ (cf.\ \eqref{cond2}) represents an assumption on the soliton density in terms of the effective distance, which will be shown to be preserved by the dynamics, see Proposition \ref{tpprop}. We next define $\Upsilon:\mathcal{D}\rightarrow \Upsilon(\mathcal{D})\subseteq\mathcal{C}^I$ by setting $\Upsilon(\psi):=\bar{\psi}$, where $\bar{\psi}=(\bar{\psi}_i)_{i=1}^I$ is given by
\begin{equation}\label{barpsidef}
\bar{\psi}_i:=\psi_i\circ\phi_i^{-1};
\end{equation}
we note that, as can be understood from Proposition \ref{gobet} below, $\bar{\psi}_i$ represents the integrated density of size $i$ solitons with respect to their effective distance in the continuous state-space model. Importantly, the effective scaling, or scattering, map $\Upsilon:\mathcal{D}\rightarrow\Upsilon(\mathcal{D})$ is a bijection, and so we can also recover the spatial picture from the one in terms of effective distances, see Proposition \ref{tpprop}.

We highlight that, in addition to the generalized hydrodynamic limit results that we present, a main contribution of this article is the introduction of the picture we have just set out involving the functions $(\psi_i)_{i=1}^I$ and $(\bar{\psi}_i)_{i=1}^I$, as well as the explicit description we give of the scattering map $\Upsilon$ and its inverse. Indeed, whilst the discrete versions of $\psi_i$ and $\bar{\psi}_i$ are essentially contained in \cite{Ferrari}, as is a description of the link between these, since \cite{Ferrari} only deals with the asymptotically homogeneous case, the continuous limits of all the maps in question are linear, and their understanding reduces to a computation of the relevant coefficients. (Similarly, as already noted, the asymptotic effective speeds of solitons of different sizes were also constant by assumption in \cite{Ferrari}.) We believe our more conceptual approach clarifies the soliton decomposition in a way that will be useful in other contexts (including for other discrete integrable systems) where asymptotic inhomogeneity is present. We further note that the choice to base our definitions on the soliton decomposition in the discrete case on \cite{Ferrari} is with the prospective of eventually studying the two-sided case (see Remark \ref{3rem}(c)), to which the decomposition in \cite{Ferrari} already applies. Since we are only dealing with the one-sided case here, however, it would alternatively been possible to present the definitions in terms of the more classical linearization of the BBS, based on the Kerov-Kirillov-Reshetikhin (KKR) bijection (\cite[Section 3]{IKT}, see also \cite{KNTW}).

As the last piece of notation we need to state our generalized hydrodynamic limit for integrated densities, for $\bar{\psi}\in\mathcal{C}^I$ we define $\theta_t\circ\bar{\psi}\in\mathcal{C}^I$ by setting
\begin{equation}\label{thetatdef}
(\theta_t\circ\bar{\psi})_i(z):=\bar{\psi}_i\left((z-v_it)\vee 0\right),
\end{equation}
where we recall $v_i:=i$;
this operator gives the free evolution of soliton densities, that is, the temporal dynamics of soliton densities with respect to their effective distances. (This is by definition in the continuous case, see \eqref{thetacon} for the appearance of $\theta_t$ in the scaling limit of the discrete model.)

\begin{thm}\label{mainthm} Fix $I\in \mathbb{N}$ and $\psi^0=(\psi^0_i)_{i=1}^{I}\in\mathcal{D}$. For each $N\in\mathbb{N}$, consider the BBS initiated from a random configuration $\eta$ with distribution $\mathbf{P}_N$ supported on $\Omega_I$, and suppose that, for every $\varepsilon>0$ and $u_0\in(0,\infty)$,
\begin{equation}\label{initconv}
\lim_{N\rightarrow\infty}\mathbf{P}_N\left(\sup_{i\in\{1,2,\dots,I\}}\sup_{u\leq u_0}\left|\psi_i^N(u,0)-\psi_i^0(u)\right|>\varepsilon\right)=0.
\end{equation}
It then holds that, for every $\varepsilon>0$, $t\in(0,\infty)$ and $u_0\in(0,\infty)$,
\[\lim_{N\rightarrow\infty}\mathbf{P}_N\left(\sup_{i\in\{1,2,\dots,I\}}\sup_{u\leq u_0}\left|\psi_i^N(u,t)-\psi_i(u,t)\right|>\varepsilon\right)=0,\]
where
\begin{equation}\label{BBSflow}
\psi_i(u,t):=\left(\Upsilon^{-1}\circ\theta_t\circ\Upsilon\circ\psi^0\right)_i(u).
\end{equation}
\end{thm}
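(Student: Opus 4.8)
The plan is to conjugate the interacting discrete dynamics through the scattering map so that the evolution becomes the linear transport $\theta_t$, and then to transport the assumed convergence at time $0$ along this conjugation. Writing $\phi^N_i(u):=u-\sum_{j=1}^I2(i\wedge j)\psi^N_j(u,k)$ for the discrete effective distance computed from the time-$k$ configuration, and $\bar\psi^N_i(\cdot,k):=\psi^N_i(\cdot,k)\circ(\phi^N_i)^{-1}$ for the resulting effective-scale integrated densities (the discrete scattering map $\Upsilon_N$), the first and central step is to establish the discrete linearization. The soliton decomposition of Section \ref{soldecompsec}, following Ferrari, Nguyen, Rolla and Wang, should give that every size-$i$ soliton advances by exactly $v_i=i$ per time step on its effective scale, so that
\[\bar\psi^N_i\left(z,\lfloor Nt\rfloor\right)=\bar\psi^N_i\left(\left(z-v_i\lfloor Nt\rfloor/N\right)\vee0,\,0\right)\]
at the level of individual soliton positions, and hence for the integrated densities up to an $O(1/N)$ error from the integer-valued effective scale and from the boundary at $z=0$. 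In operator form this reads $\psi^N(\cdot,t)=\Upsilon_N^{-1}\circ\theta^N_{t}\circ\Upsilon_N\big(\psi^N(\cdot,0)\big)$, the exact discrete counterpart of \eqref{BBSflow}. Finiteness of $I$ is used here to guarantee a maximal soliton speed $I$, so that $\psi^N_i(u,t)$ on any compact $u$-interval depends only on the configuration in a bounded spatial window, supplying the locality needed to work on compacts.

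With this identity in hand, the theorem reduces to a continuity-and-convergence statement for the composition $\Phi:=\Upsilon^{-1}\circ\theta_t\circ\Upsilon$ and its discrete approximations $\Phi_N:=\Upsilon_N^{-1}\circ\theta^N_t\circ\Upsilon_N$, measured throughout in the topology of uniform convergence on compact $u$-intervals. I would prove (i) that $\Phi$ maps $\mathcal{D}$ continuously into itself, using Lemma \ref{philem} and the invariance of the density condition under the dynamics from Proposition \ref{tpprop} to remain within $\mathcal{D}$, and Proposition \ref{gobet} to identify $\bar\psi_i$ as the effective-scale integrated density so that the continuous scattering and its inverse are indeed the correct limits; and (ii) that $\Phi_N\to\Phi$ uniformly on a neighbourhood of $\psi^0$, the only genuine discrepancies being the convergences $\phi^N_i\to\phi_i$ and $\theta^N_t\to\theta_t$ (a shift by $v_i\lfloor Nt\rfloor/N\to v_it$ with truncation at $0$). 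Since the limit $\psi^0$ is deterministic, the decomposition
\[\psi^N(\cdot,t)-\psi(\cdot,t)=\left(\Phi_N-\Phi\right)\!\left(\psi^N(\cdot,0)\right)+\left(\Phi\!\left(\psi^N(\cdot,0)\right)-\Phi(\psi^0)\right)\]
then propagates the hypothesis $\psi^N(\cdot,0)\to\psi^0$ in probability to time $t$: the first term vanishes by (ii) and the second by (i), both uniformly on compacts and in probability.

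The main obstacle is the continuity of the inverse scattering map $\Upsilon^{-1}$, together with the matching claim $\Upsilon_N^{-1}\to\Upsilon^{-1}$. Recovering the spatial profile from the effective-scale profile is implicit and coupled: one must solve for $\psi(\cdot,t)$ whose own effective distances $(\phi_i)_{i=1}^I$, which depend simultaneously on all of the $\psi_j$, reproduce the prescribed $\bar\psi_i=\psi_i\circ\phi_i^{-1}$, so a perturbation of a single effective-scale density feeds back into every spatial effective distance. I expect to close this by a fixed-point/contraction argument in which the sub-criticality encoded in the definition \eqref{ddef} of $\mathcal{D}$ (the strict bound by $\tfrac12$) furnishes both a uniform lower bound on the slope of $\phi_I$, hence a controlled modulus of continuity for the inversions $\phi_i^{-1}$ via Lemma \ref{philem}, and a feedback constant strictly less than one; verifying that this contraction holds uniformly over the relevant compact sets and over the discrete approximations $\Upsilon_N^{-1}$ is the technical heart of the proof. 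A secondary point needing care is the uniform control of the discretization and boundary error in the linearization near $z=0$, where the truncation $(z-v_it)\vee0$ in \eqref{thetatdef} reflects solitons entering or leaving the effective half-line.
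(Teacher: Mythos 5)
Your overall strategy -- conjugate the dynamics through the scattering map, use the exact linearization on the effective scale, and transport the time-zero convergence through $\Upsilon$ and back through $\Upsilon^{-1}$ -- is precisely the strategy of the paper (the linearization is Proposition \ref{pfres}, giving \eqref{thetacon}, and the transport in both directions is Proposition \ref{gobet}). However, there are two concrete problems with your execution. First, the decomposition $\psi^N(\cdot,t)-\psi(\cdot,t)=(\Phi_N-\Phi)(\psi^N(\cdot,0))+(\Phi(\psi^N(\cdot,0))-\Phi(\psi^0))$ requires evaluating the continuous map $\Phi=\Upsilon^{-1}\circ\theta_t\circ\Upsilon$ at the empirical profile $\psi^N(\cdot,0)$, but this profile does not lie in the domain $\mathcal{F}$ of $\Upsilon$: the discrete effective distance $\phi_I^N(u)=u-\sum_j2j\psi_j^N(u,0)$ has slope $1$ between soliton heads and drops by $2j/N$ at each size-$j$ soliton head, so it is not monotone and $\phi_I^N\notin\mathcal{C}^\uparrow$. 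The second term of your decomposition is therefore undefined as written. The paper avoids this entirely by never applying the continuous maps to discrete data; instead Proposition \ref{gobet} proves directly that convergence of the discrete \emph{spatial} profiles to a limit in $\mathcal{D}$ is equivalent to convergence of the discrete \emph{effective-scale} profiles (defined combinatorially via the slot decomposition, \eqref{barpsiindef}) to $\Upsilon$ of that limit, with the discretization errors controlled by explicit slot-counting bounds such as $|S_i(Nu)-N\phi_i^N(u)|\le 4I^2$ and by the high-probability event that $\phi_i^N$ gains at least $\varepsilon\delta N$ over windows of length $\delta N$, which the density condition in $\mathcal{D}$ guarantees.

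Second, the ``technical heart'' you anticipate -- a fixed-point/contraction argument to invert the coupled system defining $\Upsilon^{-1}$ -- is unnecessary, because the inversion is triangular rather than genuinely coupled: the number of $i$-slots depends only on solitons of size $j>i$, so the spatial reconstruction proceeds explicitly from the largest size downward. This is exactly the content of the maps $\gamma_I(z)=z$ and $\gamma_i(z)=z+\sum_{j>i}2(j-i)\bar{\psi}_j\circ\gamma_j(z)$ in Subsection \ref{32}, which give the closed-form inverse $\Upsilon^{-1}(\bar\psi)_i=\bar\psi_i\circ\gamma_i\circ\gamma_0^{-1}$ (Proposition \ref{upsinv}); the only analytic input needed to pass limits through this formula is that uniform convergence of non-decreasing functions to a continuous, strictly increasing limit implies uniform convergence of right-continuous inverses on compacts, which is how the second half of the proof of Proposition \ref{gobet} runs. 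Your contraction constant (the strict bound by $\tfrac12$ in \eqref{ddef}) would indeed be usable, but the paper uses that condition only to preserve $\bar{\mathcal{D}}$ under $\theta_t$ (Proposition \ref{tpprop}) and to guarantee the quantitative increase of $\phi_i^N$ mentioned above, not to invert the scattering map.
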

\medskip

\begin{figure}
\begin{center}
\includegraphics[width=0.47\textwidth,height=0.35\textwidth]{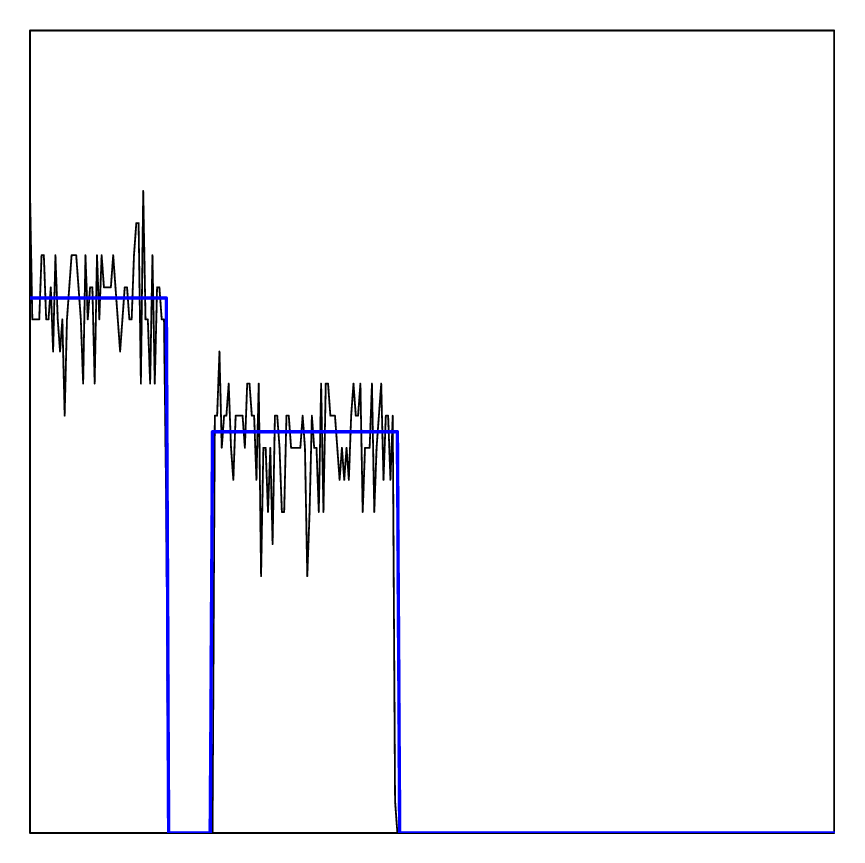}
\includegraphics[width=0.47\textwidth,height=0.35\textwidth]{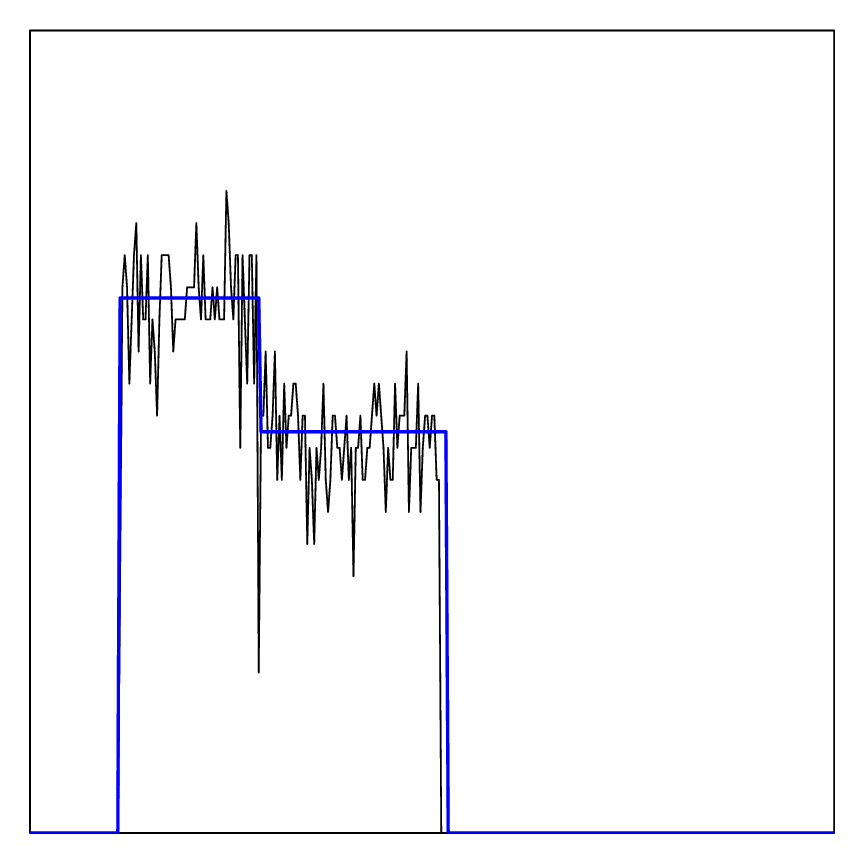}
\includegraphics[width=0.47\textwidth,height=0.35\textwidth]{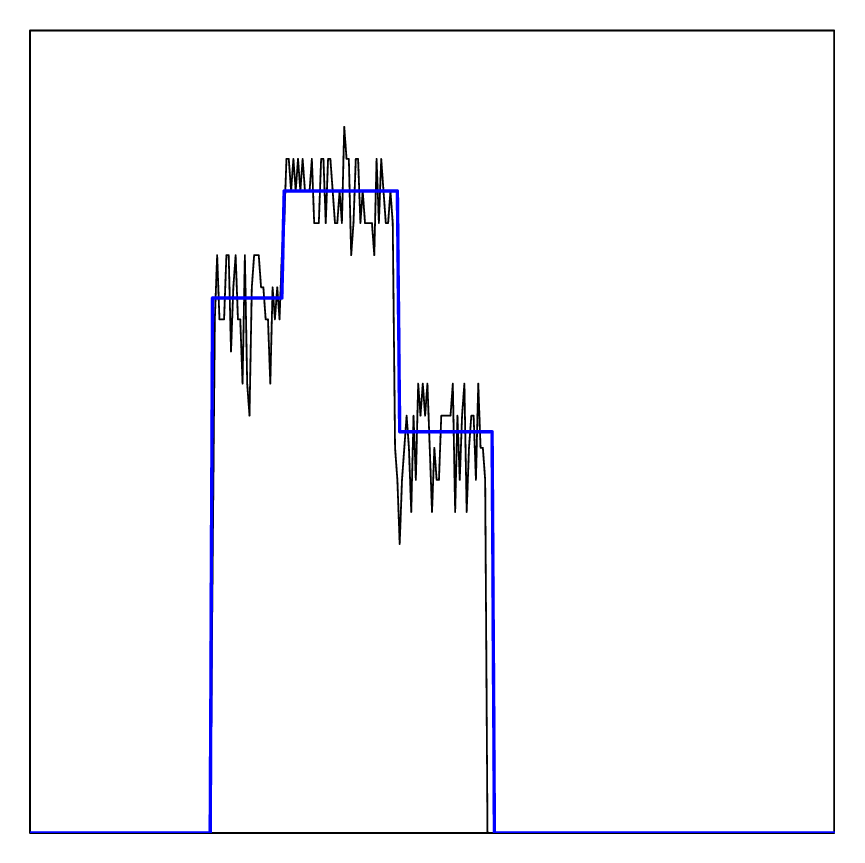}
\includegraphics[width=0.47\textwidth,height=0.35\textwidth]{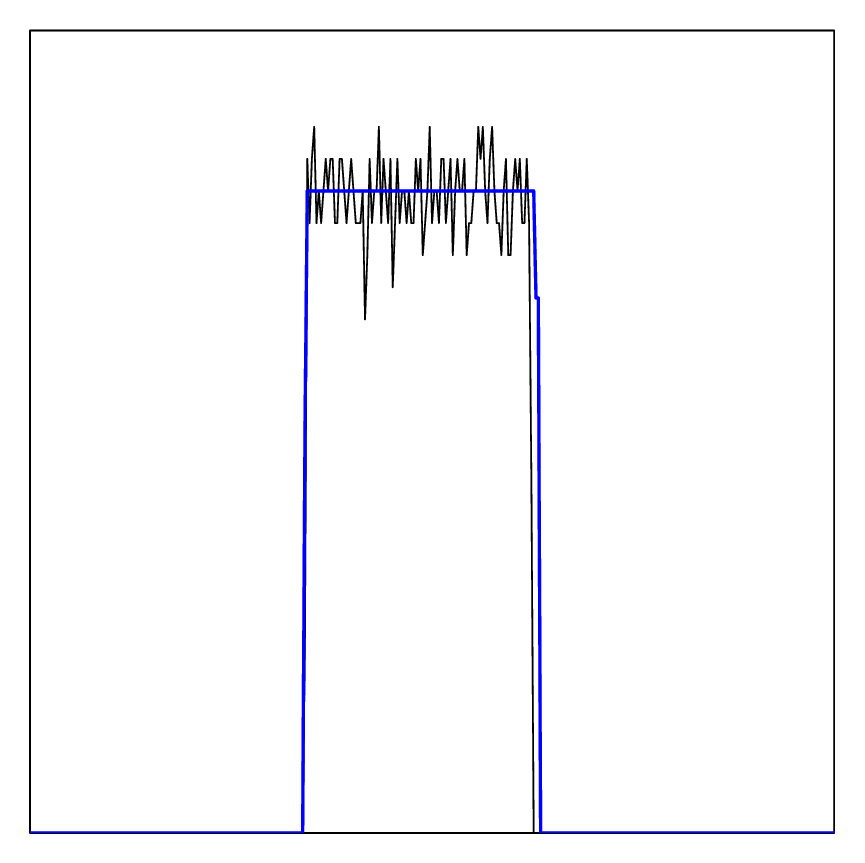}
\includegraphics[width=0.47\textwidth,height=0.35\textwidth]{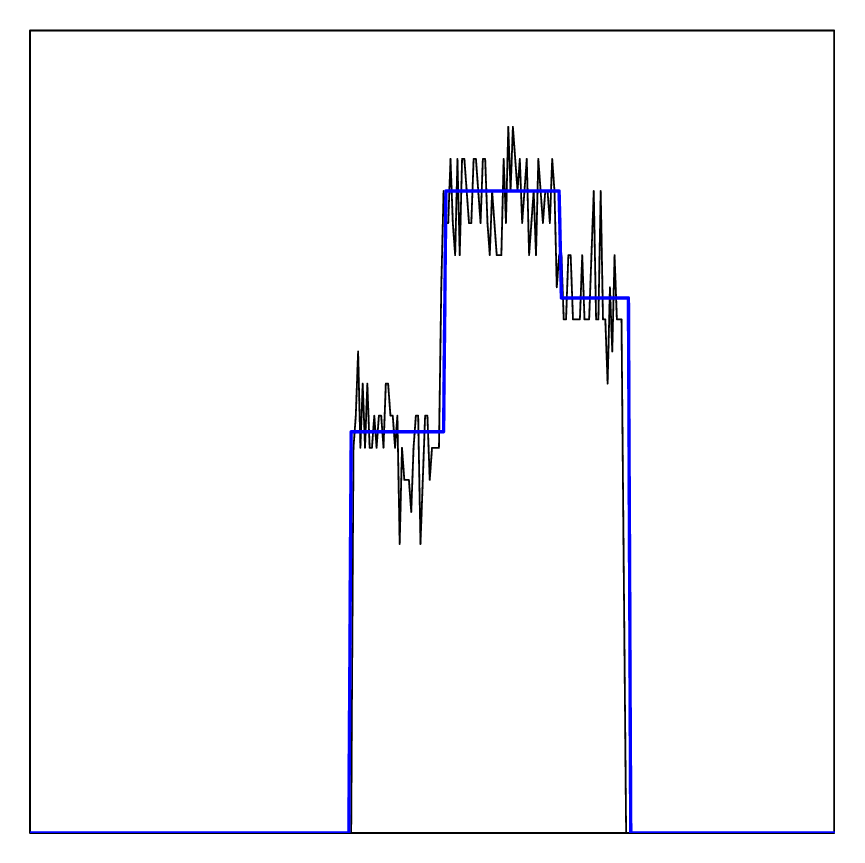}
\includegraphics[width=0.47\textwidth,height=0.35\textwidth]{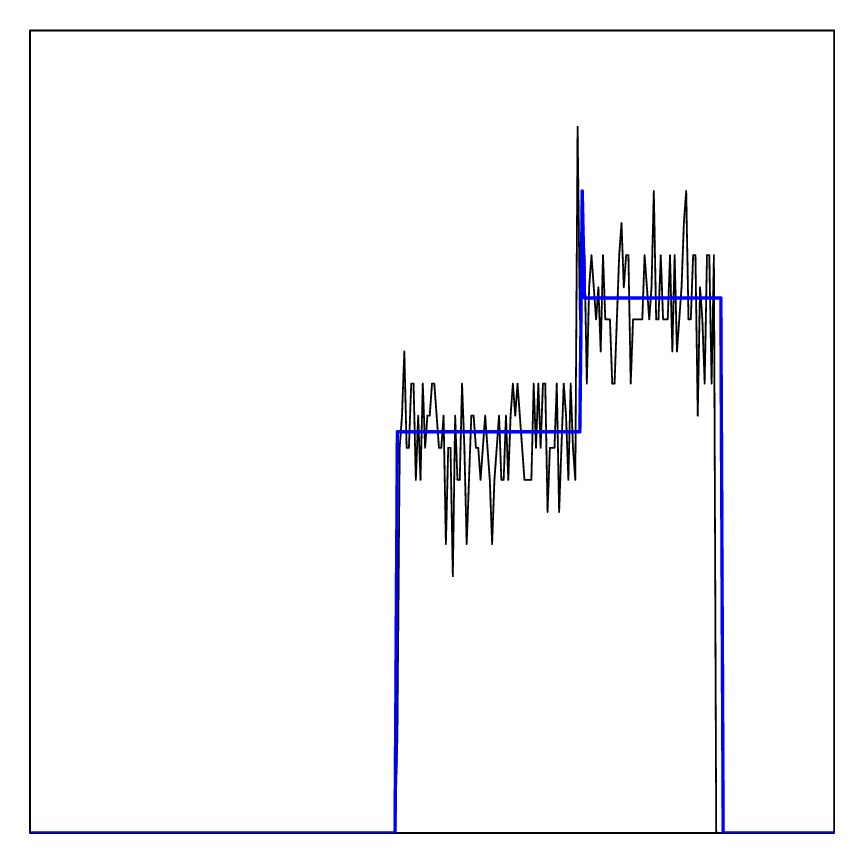}
\includegraphics[width=0.47\textwidth,height=0.35\textwidth]{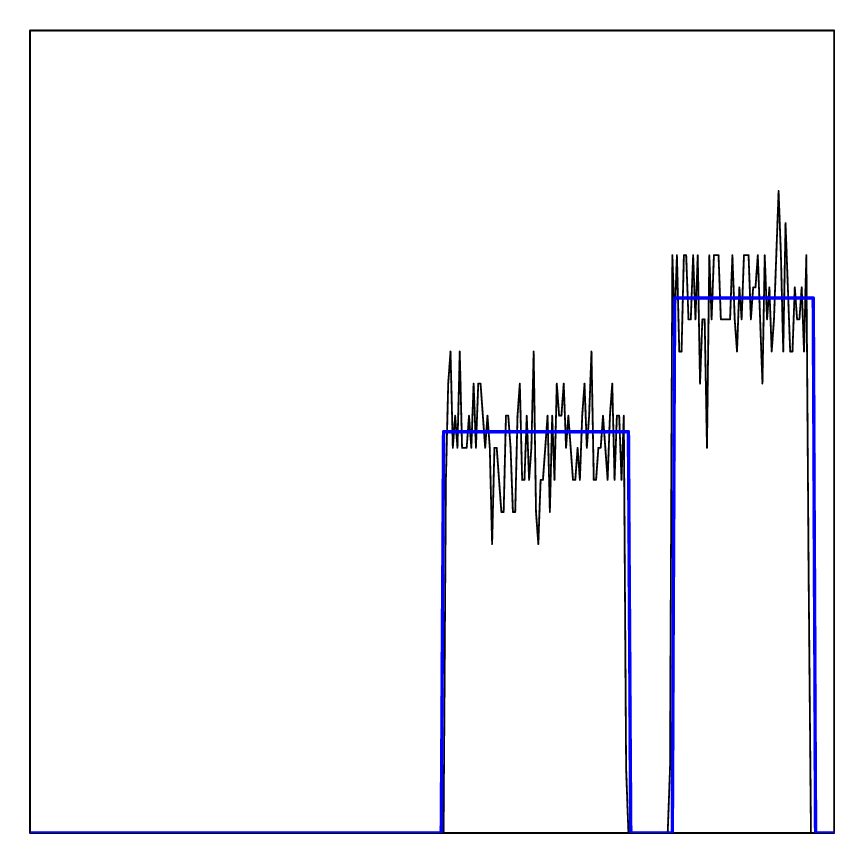}
\includegraphics[width=0.47\textwidth,height=0.35\textwidth]{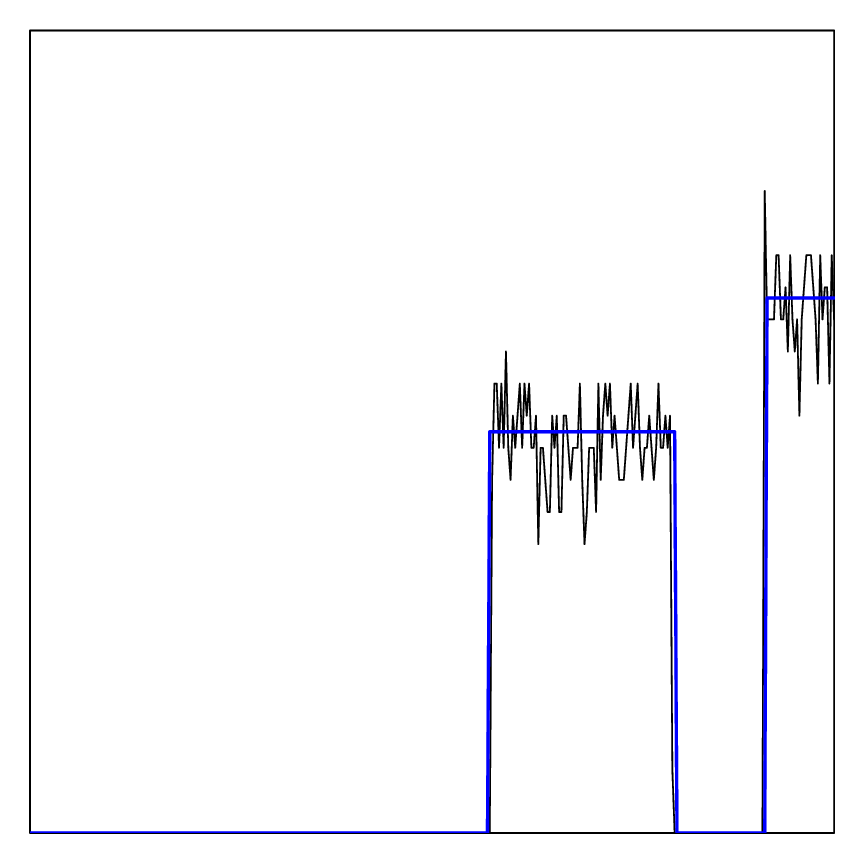}
\end{center}
\caption{A region of size $2$ solitons overtaking a region of size $1$ solitons (with the initial condition being randomly generated from a Bernoulli measure on the `slot decomposition' of the configuration, see Subsection \ref{soldecompsec}). In each picture, the black line shows the particle density locally (each data point is the average of 50 sites in $\mathbb{N}$), and the blue line shows the generalized hydrodynamic limit of the system, as given by Theorem \ref{mainthm}.}\label{solfig}
\end{figure}

A simple corollary of the preceding two theorems is the following limiting result for the integrated particle density and empirical particle distribution. For its statement, we define $\eta(x,k):=T^k\eta(x)$. See Figure \ref{pdefig} for an example.

\begin{cor}\label{maincor1} Under the conditions of Theorem \ref{mainthm}, it holds that, for every $\varepsilon>0$, $t\in(0,\infty)$ and $u_0\in(0,\infty)$,
\[\lim_{N\rightarrow\infty}\mathbf{P}_N\left(\sup_{u\leq u_0}\left|\frac{1}{N}\sum_{x=1}^{\lfloor Nu\rfloor }\eta\left(x,\lfloor Nt\rfloor\right)-\sum_{i=1}^Ii\psi_i(u,t)\right|>\varepsilon\right)=0.\]
Moreover, under the conditions of Theorem \ref{reallythemainthm}, it further holds that, for every $\varepsilon>0$, $t\in(0,\infty)$ and $F \in C_0(\mathbb{R}_+,\mathbb{R})$,
\[\lim_{N\rightarrow\infty} \mathbf{P}_N\left(\left| \frac{1}{N} \sum_{x \in \N} F \left(\frac{x}{N}\right)\eta(x, \lfloor Nt \rfloor) -  \int_{\R_+} F (u)  \rho_{\mathrm{particle}}(u,t)du \right|  > \varepsilon\right) =0,\]
where $\rho_{\mathrm{particle}}(u,t):= \sum_{i=1}^I  i \rho_i(u,t)$ satisfies
\begin{equation*}
\partial_t \rho_{\mathrm{particle}}=- \partial_u \left( \sum_{i=1}^I i v_i^{\mathrm{eff}}(\rho) \rho_i\right).
\end{equation*}
\end{cor}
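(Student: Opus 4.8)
The plan is to derive both assertions from the already-established soliton-density limits (Theorems~\ref{mainthm} and~\ref{reallythemainthm}) by an elementary deterministic comparison between the particle data and the soliton data; the two halves of the statement are handled separately because they invoke different hypotheses. From the soliton decomposition of Section~\ref{soldecompsec} I use only the following deterministic fact: since $\eta\in\Omega_I$, every soliton has size at most $I$, its (at most $I$) balls all lie within a spatial window of width at most $C_I$ of its starting location for a constant $C_I=C_I(I)$ depending on $I$ alone, and distinct solitons are composed of distinct balls.

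\textbf{Integrated particle density.} Writing $m=\lfloor Nu\rfloor$ and $k=\lfloor Nt\rfloor$, I would first record the exact identity
\[
\frac1N\sum_{x=1}^{m}\eta(x,k)=\sum_{i=1}^I i\,\psi_i^N(u,t)+\frac1N E_N(u,t),
\]
where $E_N(u,t)$ is the net number of balls displaced across the site $m$, that is, the balls at sites $\le m$ belonging to solitons that start after $m$ minus the balls at sites $>m$ belonging to solitons that start at or before $m$. By the bounded-window property only solitons whose window contains $m$ can contribute, whence $|E_N(u,t)|\le C_I$ uniformly in $u$ and in the realisation, so $N^{-1}|E_N|\to0$ deterministically. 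A triangle inequality then bounds $\sup_{u\le u_0}\big|N^{-1}\sum_{x\le m}\eta(x,k)-\sum_i i\psi_i(u,t)\big|$ by $C_I/N+\tfrac{I(I+1)}2\sup_i\sup_{u\le u_0}|\psi_i^N(u,t)-\psi_i(u,t)|$, and the second term tends to $0$ in $\mathbf P_N$-probability by Theorem~\ref{mainthm}. This gives the first claim.

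\textbf{Empirical particle distribution and the PDE.} Here I would instead compare at the level of test integrals: grouping the balls of $\eta(\cdot,k)$ according to the soliton containing them and using that all $i$ balls of a size-$i$ soliton starting at $s$ lie within distance $C_I$ of $s$, one obtains
\[
\frac1N\sum_{x\in\mathbb N}F\Big(\tfrac xN\Big)\eta(x,k)=\sum_{i=1}^I i\int_{\R_+}F(u)\,\pi_i^{N,t}(du)+R_N ,
\]
where, since each ball contributes an error at most $\omega_F(C_I/N)$ (with $\omega_F$ the modulus of continuity of the uniformly continuous, compactly supported $F$) and the total number of balls in the support of $F$ is $O(N)$ by the density bound, $|R_N|=O\big(\omega_F(C_I/N)\big)\to0$. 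Applying Theorem~\ref{reallythemainthm} to the finitely many integrals gives $\int F\,d\pi_i^{N,t}\to\int F\rho_i(\cdot,t)$ in $\mathbf P_N$-probability, hence convergence to $\sum_i i\int F\rho_i(\cdot,t)=\int F\rho_{\mathrm{particle}}(\cdot,t)$. Finally, the equation for $\rho_{\mathrm{particle}}$ is immediate: multiplying the $i$-th line of~\eqref{PDErho} by $i$ and summing over $i$ yields $\partial_t\sum_i i\rho_i=-\partial_u\sum_i i\,v_i^{\mathrm{eff}}(\rho)\rho_i$, using only that each $\rho_i\in C^1$.

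\textbf{Main obstacle.} The sole non-routine ingredient is the deterministic bookkeeping of balls against solitons near the truncation site (the uniform bound on $E_N$) and near the boundary of $\mathrm{supp}\,F$ (the bound on $R_N$); both rest entirely on the bounded-window property of solitons. Provided this property is recorded in Section~\ref{soldecompsec}, everything else is a triangle inequality together with the two hydrodynamic limits.
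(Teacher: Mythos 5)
Your argument rests on a ``bounded-window property'' that is false for the BBS soliton decomposition: the $2i$ sites of a size-$i$ soliton need \emph{not} lie within a window of width $C_I(I)$ of its starting location, because smaller solitons are inserted into the slots of larger ones, and arbitrarily many of them may occupy a single slot. This is already visible in the paper's own worked example (Figure \ref{solident}), where a size-$3$ soliton occupies sites $\{1,2,5,10,11,12\}$; by inserting more size-$1$ and size-$2$ solitons into its slots one can make its spatial extent --- and even the gap between its own balls --- arbitrarily large. For the first claim this gap is repairable: the uniform bound $|E_N|\le C_I$ you need does hold, but for a different reason, namely that the solitons whose windows straddle a given site $m$ number at most $I$ (at most one per size: equal-size solitons have disjoint windows, and the straddling solitons of distinct sizes form a nested chain), each contributing at most $I$ balls beyond $m$. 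This is exactly the counting behind the paper's estimate \eqref{star}, and with that substitution your first paragraph coincides with the paper's proof, which uses the deterministic bound $I^2/N$ together with Theorem \ref{mainthm}.

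For the second claim the gap is genuine: your estimate $|R_N|=O(\omega_F(C_I/N))$ uses that every ball lies within distance $C_I$ of its soliton's starting site, and this fails --- the displacement of a ball from its soliton's start is bounded only by the length of the excursion containing it, which is not controlled microscopically. The statement is still true, but needs a different argument: either (i) deduce the test-function convergence from your first (integrated) claim exactly as in Lemma \ref{convergencequivalence}, since the integrated particle densities are non-decreasing and the limit $\sum_i i\psi_i(\cdot,t)$ is continuous; or (ii) split the solitons meeting the support of $F$ according to whether their window length exceeds $\sqrt{N}$ --- since equal-size windows are disjoint and nesting chains have length at most $I$, at most $O(\sqrt{N})$ of them are long, contributing $O(N^{-1/2}\sup|F|)$, while the short ones contribute $O(\omega_F(N^{-1/2}))$. (Note that the paper's own one-line bound $I^2\sup_u|F(u)|/N$ for this step is not justified as stated for the same reason, so option (i) is the safe repair.) Your derivation of the equation for $\rho_{\mathrm{particle}}$, by multiplying the $i$-th line of \eqref{PDErho} by $i$ and summing, is correct.
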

\medskip

As the second input into Theorem \ref{reallythemainthm}, we have that, under additional regularity assumptions to Theorem \ref{mainthm}, the limiting dynamics at \eqref{BBSflow} are given by a partial differential equation. We present our result in this direction on two levels, both for the integrated densities of solitons, and for the densities themselves. See Figure \ref{pdefig} for an example solution to the density-level partial differential equation.

\begin{thm}\label{mainthm2}
(a) Let $\psi^0=(\psi^0_i)_{i=1}^{I}\in\mathcal{D}^{(1)}:=\mathcal{D}\cap \mathcal{C}^1(\mathbb{R}_+,\mathbb{R}_+)^I$,
and define $\psi_i(u,t)$ by \eqref{BBSflow}. It is then the case that $\psi_i(\cdot,\cdot) \in C^1(\mathbb{R}_+^2,\mathbb{R}_+)$ for each $i$, $\psi(\cdot,t) \in \mathcal{D}^{(1)}$ for all $t \ge0$, and
\begin{equation}\label{PDEpsi}
\begin{cases}
&\vspace{3pt} \displaystyle{\partial_t\psi_i=-v_i^{\mathrm{eff}}\left(\partial_u\psi\right)\partial_u\psi_i}, \qquad i=1,2,\dots,I,  \\
& \displaystyle{\psi(\cdot,0)=\psi^0(\cdot).}
\end{cases}
\end{equation}
Moreover, $(\psi_i(u,t))_{u,t\in\mathbb{R}_+,i=1,2,\dots,I}$ is the unique solution of \eqref{PDEpsi} in the class of functions $\psi \in C^1(\mathbb{R}_+^2,\mathbb{R}_+)^I$ satisfying $\psi(\cdot,t) \in \mathcal{D}^{(1)}$ for all $t \ge0$.\\
(b) Let $\rho^0=(\rho^0_i)_{i=1}^{I} \in \mathcal{D}_{\mathrm{density}}$, and define $\psi_i(u,t)$ by \eqref{BBSflow} with initial condition given by
\[\psi^0_i(u):=\int_0^u\rho^0_i(u')du',\qquad i=1,2,\dots,I.\]
It is then the case that $\psi_i(\cdot,\cdot) \in C^2(\mathbb{R}_+^2,\mathbb{R}_+)$ for each $i$, $\rho_i(u,t):=\partial_u\psi_i(u,t)$ satisfies $\rho_i \in C^1(\mathbb{R}_+^2,\mathbb{R}_+)$ for each $i$, $\rho (\cdot,t) \in \mathcal{D}_{\mathrm{density}}$
for all $t \ge0$, and \eqref{PDErho} holds. Moreover, $(\rho_i(u,t))_{u,t\in\mathbb{R}_+,i=1,2,\dots,I}$ is the unique solution of \eqref{PDErho} in the class of functions $\rho \in C^1(\mathbb{R}_+^2,\mathbb{R}_+)^{I}$ satisfying $\rho(\cdot,t) \in \mathcal{D}_{\mathrm{density}}$ for all $t \ge0$.
\end{thm}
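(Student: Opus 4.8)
The plan is to perform all of the analysis in effective coordinates, where $\theta_t$ acts as an explicit linear shift, and then to transfer regularity, the PDE and uniqueness back to space through $\Upsilon^{-1}$. Write $\bar\psi^0:=\Upsilon(\psi^0)$ and $\bar\psi_i(z,t):=(\theta_t\circ\bar\psi^0)_i(z)=\bar\psi^0_i((z-v_it)\vee0)$, so that $\psi_i(\cdot,t)=(\Upsilon^{-1}\bar\psi(\cdot,t))_i$ by \eqref{BBSflow}; throughout set $\bar\rho_i:=\partial_z\bar\psi_i$ and $\rho_i:=\partial_u\psi_i$.

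For part (a) I would first note that $\bar\psi^0_i\in\mathcal{C}^1(\R_+,\R_+)$: condition \eqref{cond1} forces $\partial_u\phi_i^0=1-\sum_j2(i\wedge j)\rho_j^0>0$, so each $\phi_i^0$ is a $C^1$-diffeomorphism of $\R_+$ and $\bar\psi^0_i=\psi^0_i\circ(\phi_i^0)^{-1}$ inherits $\bar\psi^0_i(0)=(\bar\psi^0_i)'(0)=0$ from $\psi^0\in\mathcal{C}^1(\R_+,\R_+)^I$. This vanishing of the derivative at $0$ is exactly what makes $z\mapsto\bar\psi^0_i((z-v_it)\vee0)$ match in a $C^1$ fashion across the line $z=v_it$, so that $\bar\psi_i$ is jointly $C^1$ on $\R_+^2$ and solves the free transport equation $\partial_t\bar\psi_i=-v_i\partial_z\bar\psi_i$. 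To reconstruct the spatial profile I would apply the implicit function theorem to the system $\psi_i=\bar\psi_i(u-\sum_j2(i\wedge j)\psi_j,t)$, whose Jacobian in the unknowns $(\psi_k)_k$ is $J_{ik}=\delta_{ik}+2(i\wedge k)\bar\rho_i$; this is invertible because the bound $\sum_i i\bar\rho_i<\tfrac12$ implied by membership in $\mathcal{D}$ gives $\sum_i2(i\wedge j)\bar\rho_i\le\sum_i2i\bar\rho_i<1$ for every $j$, a diagonal-dominance estimate in the spirit of Lemma \ref{minvert}. This yields that $(u,t)\mapsto\psi(u,t)$ is $C^1$, and $\psi(\cdot,t)\in\mathcal{D}^1$ follows because a rigid shift does not change $\sup_z\bar\rho_i$, so $\theta_t$ preserves the effective density condition, combined with the bijectivity of $\Upsilon$ and its compatibility with the dynamics (Proposition \ref{tpprop}). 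I expect this regularity-and-invariance step to be the main obstacle, since it couples the $C^1$ smoothing at the characteristic line with the nonlinear, state-dependent inversion $\Upsilon^{-1}$.

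With this regularity the PDE \eqref{PDEpsi} is a chain-rule computation. Differentiating $\psi_i(u,t)=\bar\psi_i(\phi_i(u,t),t)$ in $u$ gives $\rho_i=\bar\rho_i(1-\sum_j2(i\wedge j)\rho_j)$, while differentiating in $t$ and using $\partial_t\bar\psi_i=-v_i\bar\rho_i$ together with $\partial_t\phi_i=-\sum_j2(i\wedge j)\partial_t\psi_j$ gives a linear system for $(\partial_t\psi_i)_i$; substituting $\partial_t\psi_i=-v_i^{\mathrm{eff}}(\rho)\rho_i$ reduces that system to exactly the defining relation \eqref{effectivespeed}, confirming \eqref{PDEpsi}. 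For uniqueness I would run this in reverse: given any $C^1$ solution $\psi$ of \eqref{PDEpsi} with $\psi(\cdot,t)\in\mathcal{D}^1$, the change of variables $z=\phi_i(u,t)$ is a $C^1$-diffeomorphism for each $t$, and a direct calculation using \eqref{PDEpsi} together with the rearranged identity $v_i^{\mathrm{eff}}(1-\sum_j2(i\wedge j)\rho_j)=v_i-\sum_j2(i\wedge j)v_j^{\mathrm{eff}}\rho_j$ shows that $(\Upsilon\psi(\cdot,t))_i$ solves $\partial_t\bar\psi_i+v_i\partial_z\bar\psi_i=0$ with boundary value $0$ at $z=0$ and the prescribed initial data. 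Uniqueness for this linear transport problem then forces $\Upsilon(\psi(\cdot,t))=\theta_t\circ\Upsilon(\psi^0)$, i.e.\ $\psi$ agrees with \eqref{BBSflow}.

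Part (b) is obtained by differentiating part (a) in $u$. The primitive $\psi^0_i(u):=\int_0^u\rho^0_i$ of any $\rho^0\in\mathcal{D}_{\mathrm{density}}$ lies in $\mathcal{D}^1$ (in fact in $\mathcal{C}^2$), so part (a) applies; upgrading the implicit-function-theorem argument by one order—$\bar\psi^0\in\mathcal{C}^2$ and $\theta_t\bar\psi^0$ jointly $C^2$—yields $\psi\in C^2(\R_+^2,\R_+)^I$ and hence $\rho:=\partial_u\psi\in C^1$. Differentiating \eqref{PDEpsi} in $u$, now legitimate, gives \eqref{PDErho}, and $\rho(\cdot,t)\in\mathcal{D}_{\mathrm{density}}$ is the derivative form of $\psi(\cdot,t)\in\mathcal{D}^1$. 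Finally, uniqueness reduces to part (a): any $C^1$ solution $\rho$ of \eqref{PDErho} in the stated class integrates to $\psi_i(u,t):=\int_0^u\rho_i(u',t)\,du'$, and since $\rho_i(0,t)=0$ for $\rho(\cdot,t)\in\mathcal{C}^1(\R_+,\R_+)^I$ the boundary term vanishes on integrating \eqref{PDErho}, so $\psi$ solves \eqref{PDEpsi} and is the unique flow from part (a); thus $\rho=\partial_u\psi$ is unique.
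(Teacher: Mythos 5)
Your proposal is correct and follows the same overall architecture as the paper: pass to effective coordinates via $\Upsilon$, where $\theta_t$ is an explicit shift solving the free transport equation $\partial_t\bar\psi_i=-v_i\partial_z\bar\psi_i$ (with the observation that $(\bar\psi_i^0)'(0)=0$ is what makes the shifted profile jointly $C^1$ across $z=v_it$ — exactly the role of the extensions $f^{\R}$ in the paper's Lemma \ref{barpsidynamics}); transfer regularity and the PDE back through $\Upsilon^{-1}$ by the chain rule applied to $\psi_i=\bar\psi_i(\phi_i(u,t),t)$ (the paper's Lemma \ref{equivalenceofPDE}, which packages the computation via the transposed matrix $M^*$); and obtain uniqueness by running the transformation in reverse to the linear transport equation, then reduce (b) to (a) with one extra order of smoothness via the paper's Lemma \ref{rhopsirelation}. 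The one genuinely different sub-step is how you obtain joint $C^1$ (resp.\ $C^2$) regularity of the reconstructed $\psi$: the paper builds $\Upsilon^{-1}$ explicitly through the iterated maps $\gamma_I,\dots,\gamma_0$ and scalar inverse-function-theorem arguments (Lemmas \ref{derivative}, \ref{derivative2}, \ref{regularity}, \ref{jointregularity}), whereas you apply the implicit function theorem directly to the coupled system $\psi_i=\bar\psi_i\bigl(u-\sum_j2(i\wedge j)\psi_j,t\bigr)$, with invertibility of the Jacobian $\delta_{ik}+2(i\wedge k)\bar\rho_i$ following from the column-sum bound $\sum_i2(i\wedge k)\bar\rho_i\le\sum_i2i\bar\rho_i<1$ supplied by $\bar{\mathcal D}$. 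Both are valid; your route is more compact, while the paper's reuses the $\gamma_i$ machinery it already needs for the hydrodynamic limit and avoids any appeal to a multivariate IFT. A minor slip: in part (a) you cite \eqref{cond1} for $\partial_u\phi_i^0>0$, but \eqref{cond1} is a hypothesis only in part (b); for part (a) the pointwise bound $\sum_i2i\,\partial_u\psi_i^0<1$ must instead be extracted from the difference-quotient condition defining $\mathcal{D}$, as in the paper's Lemma \ref{regularity}(a) — the fact is true, but the attribution should be corrected.
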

\medskip

\begin{figure}
\begin{center}
\includegraphics[width=0.47\textwidth,height=0.36\textwidth]{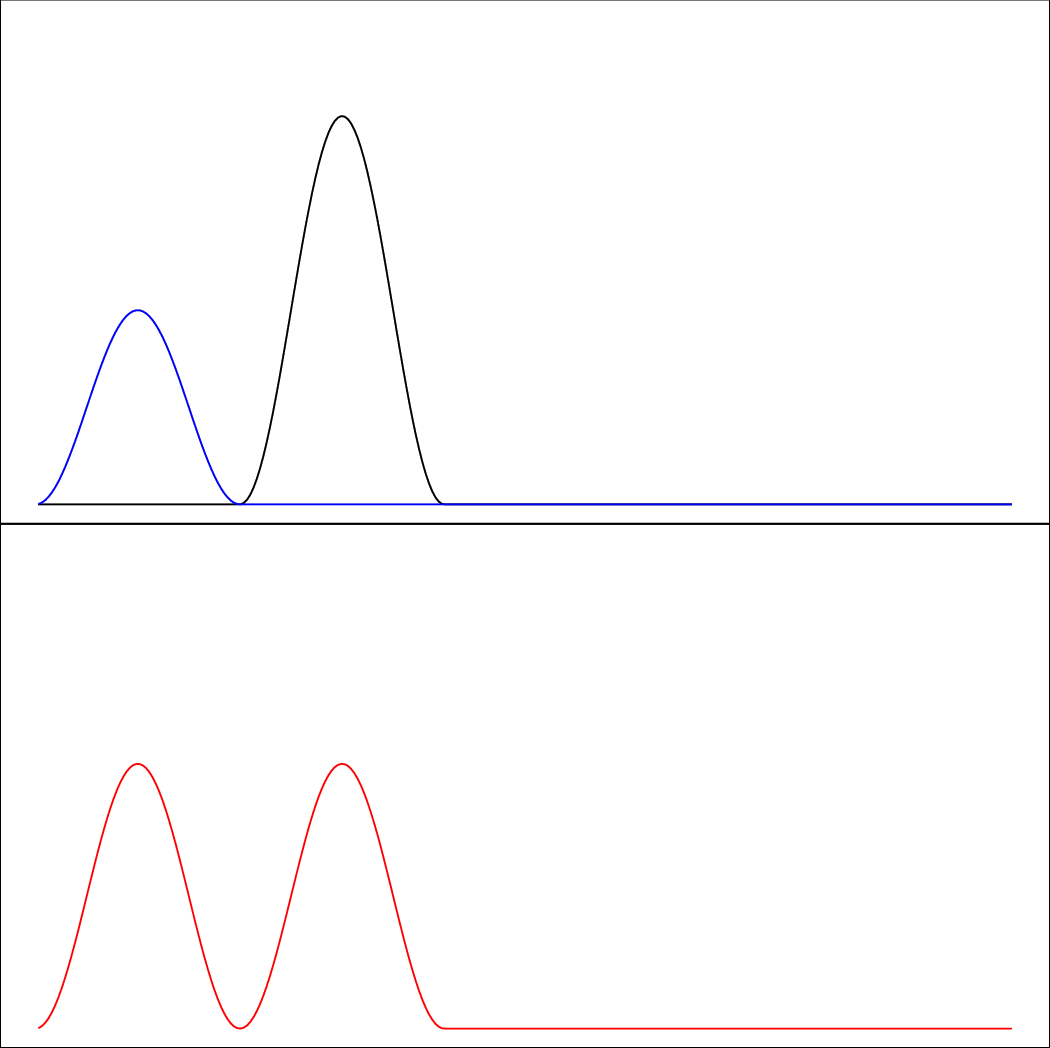}
\includegraphics[width=0.47\textwidth,height=0.36\textwidth]{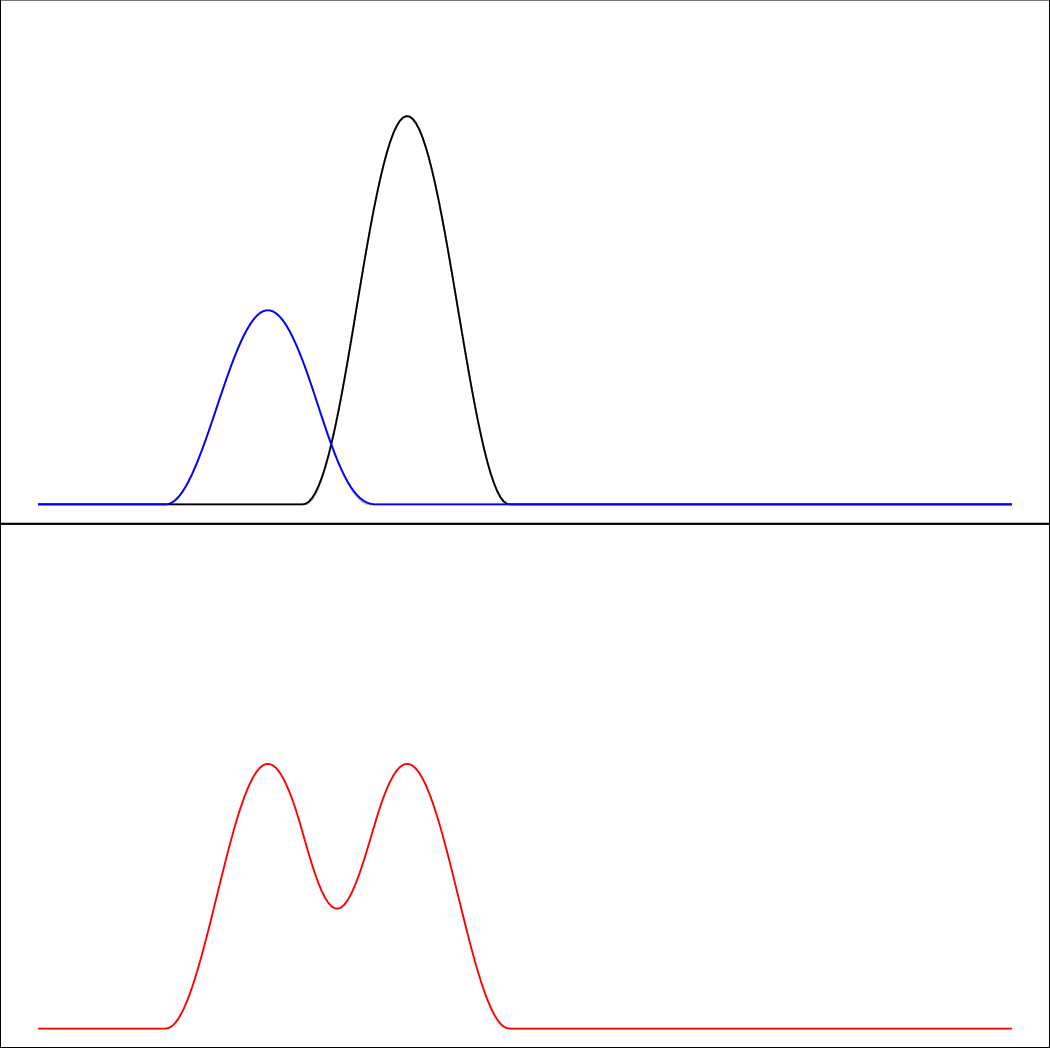}
\smallskip

\includegraphics[width=0.47\textwidth,height=0.36\textwidth]{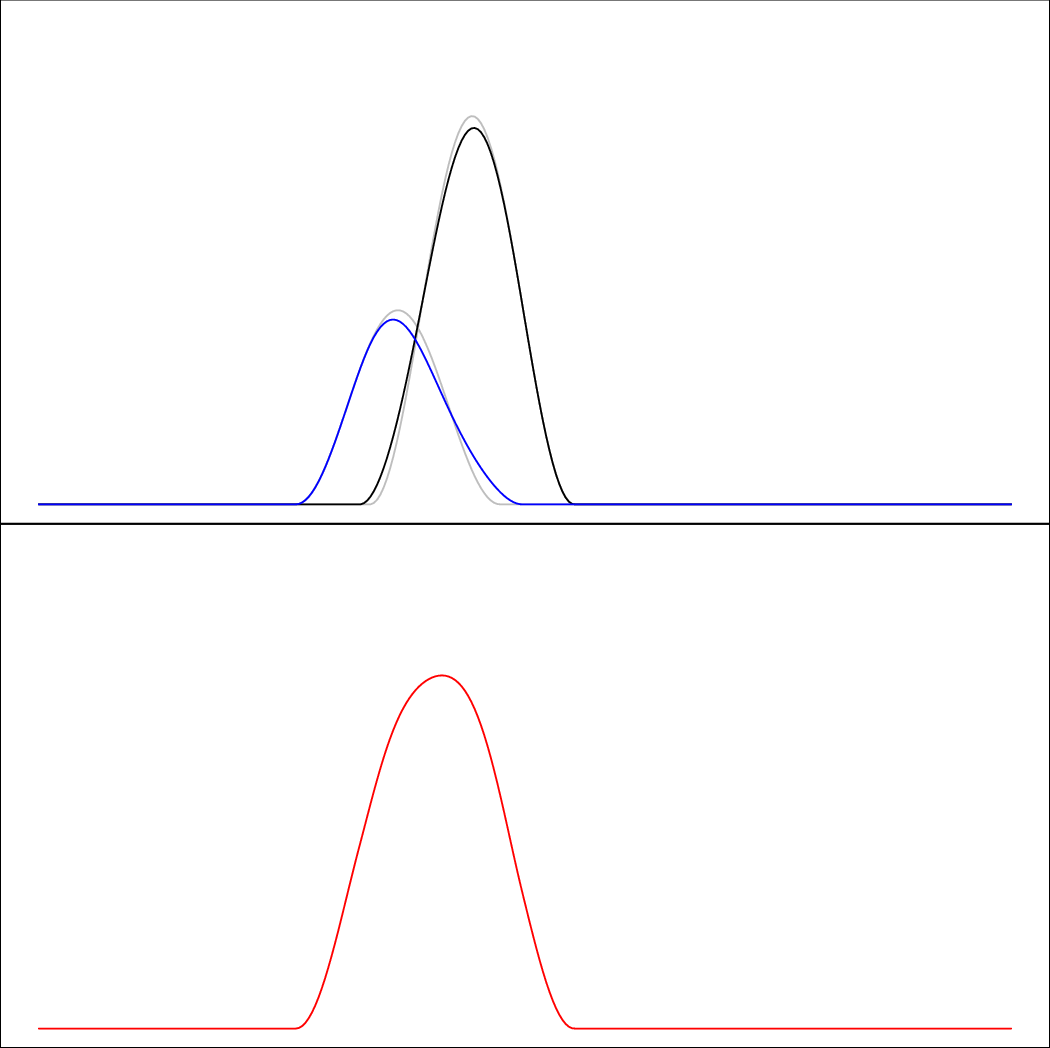}
\includegraphics[width=0.47\textwidth,height=0.36\textwidth]{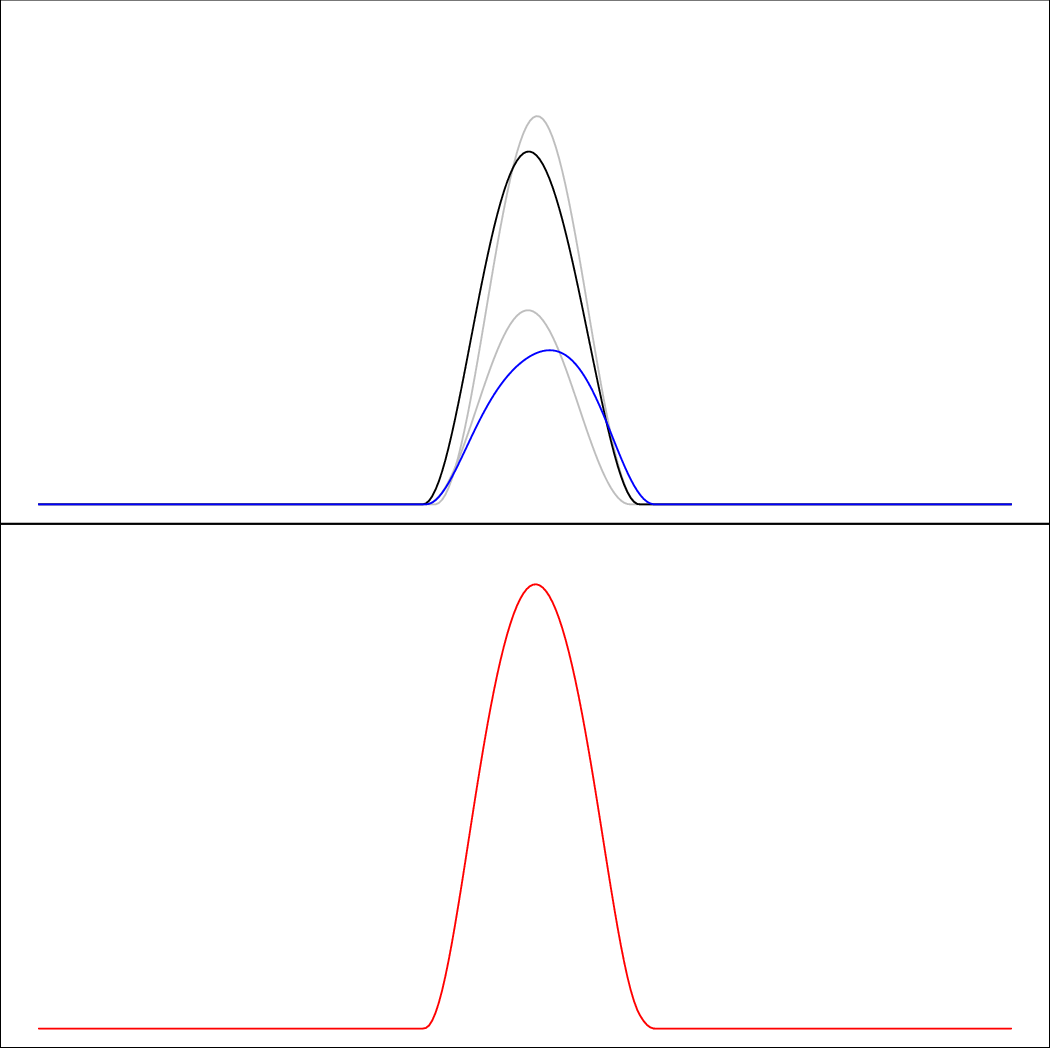}
\smallskip

\includegraphics[width=0.47\textwidth,height=0.36\textwidth]{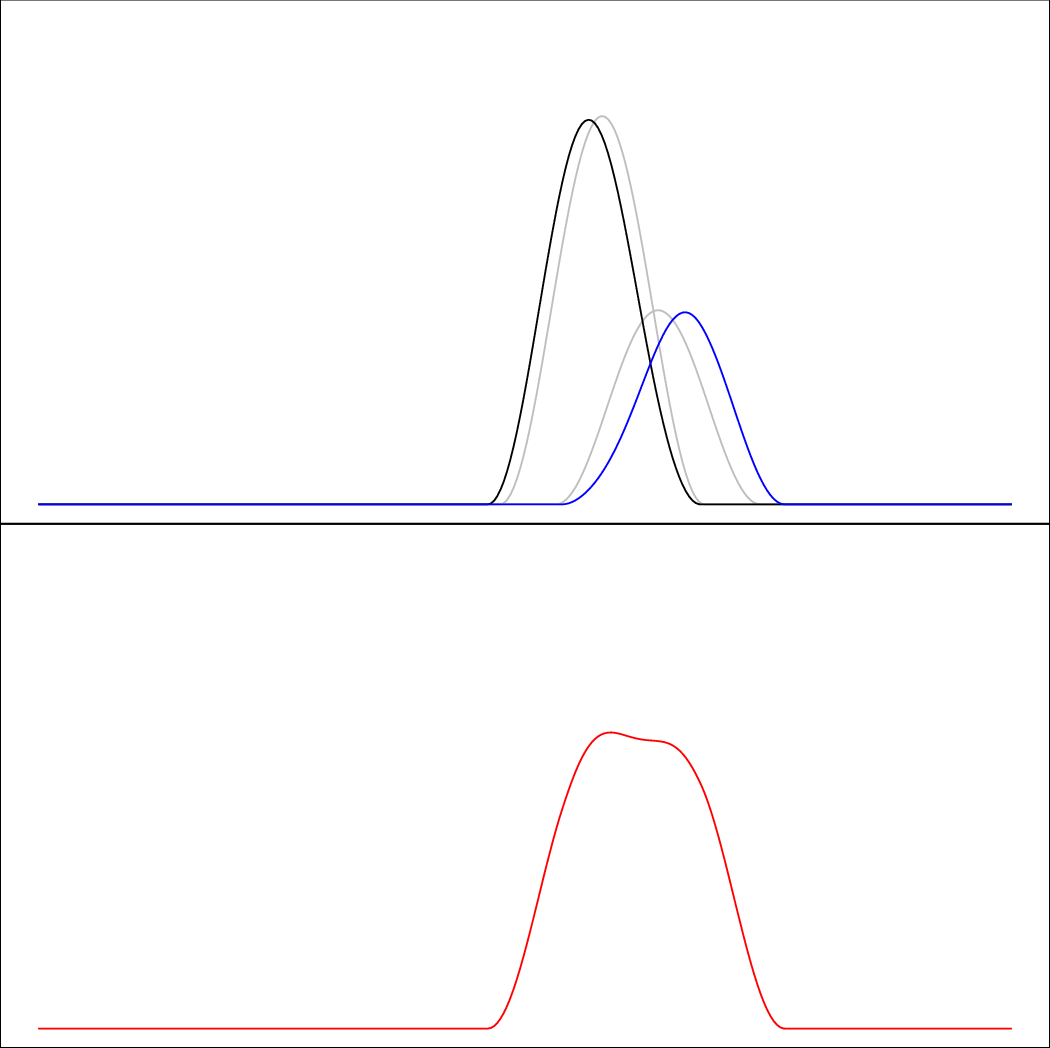}
\includegraphics[width=0.47\textwidth,height=0.36\textwidth]{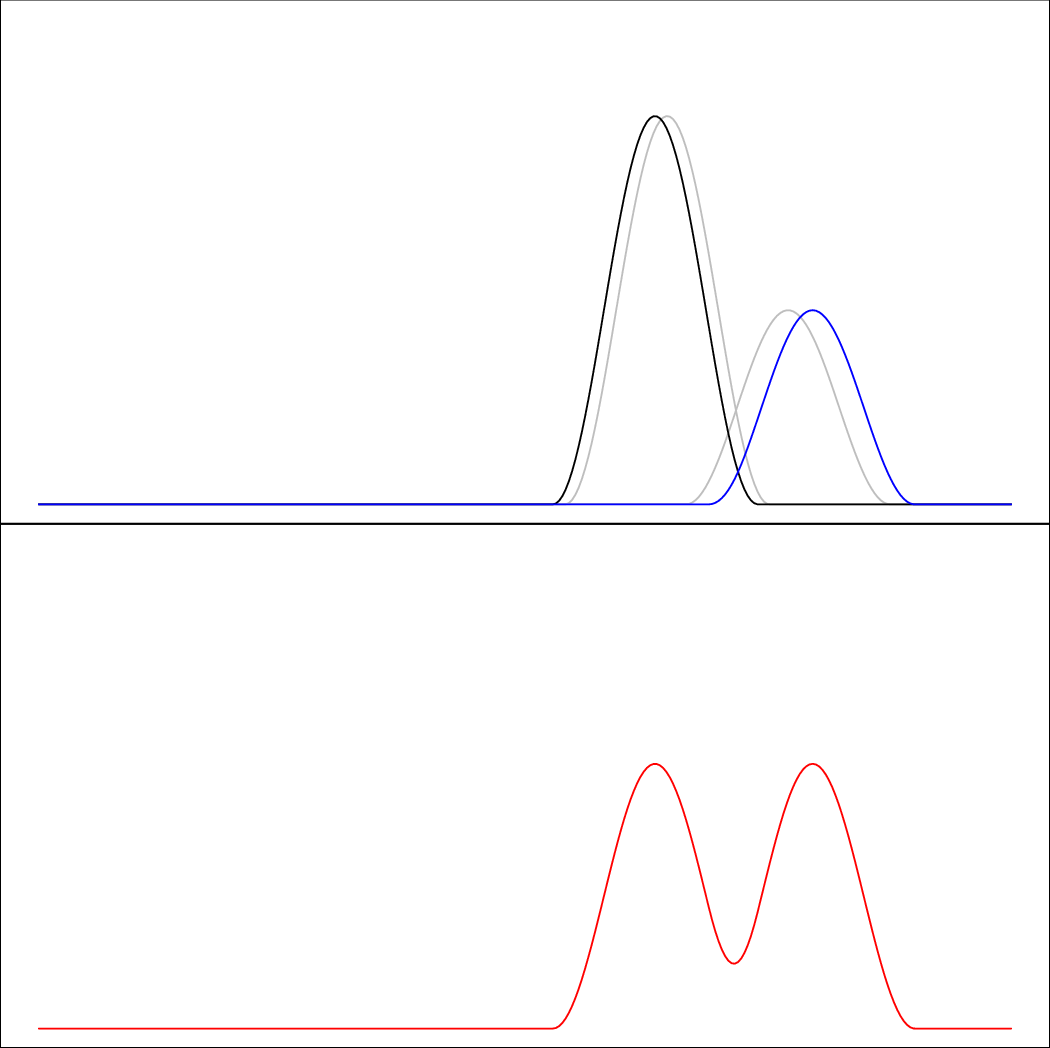}
\smallskip

\includegraphics[width=0.47\textwidth,height=0.36\textwidth]{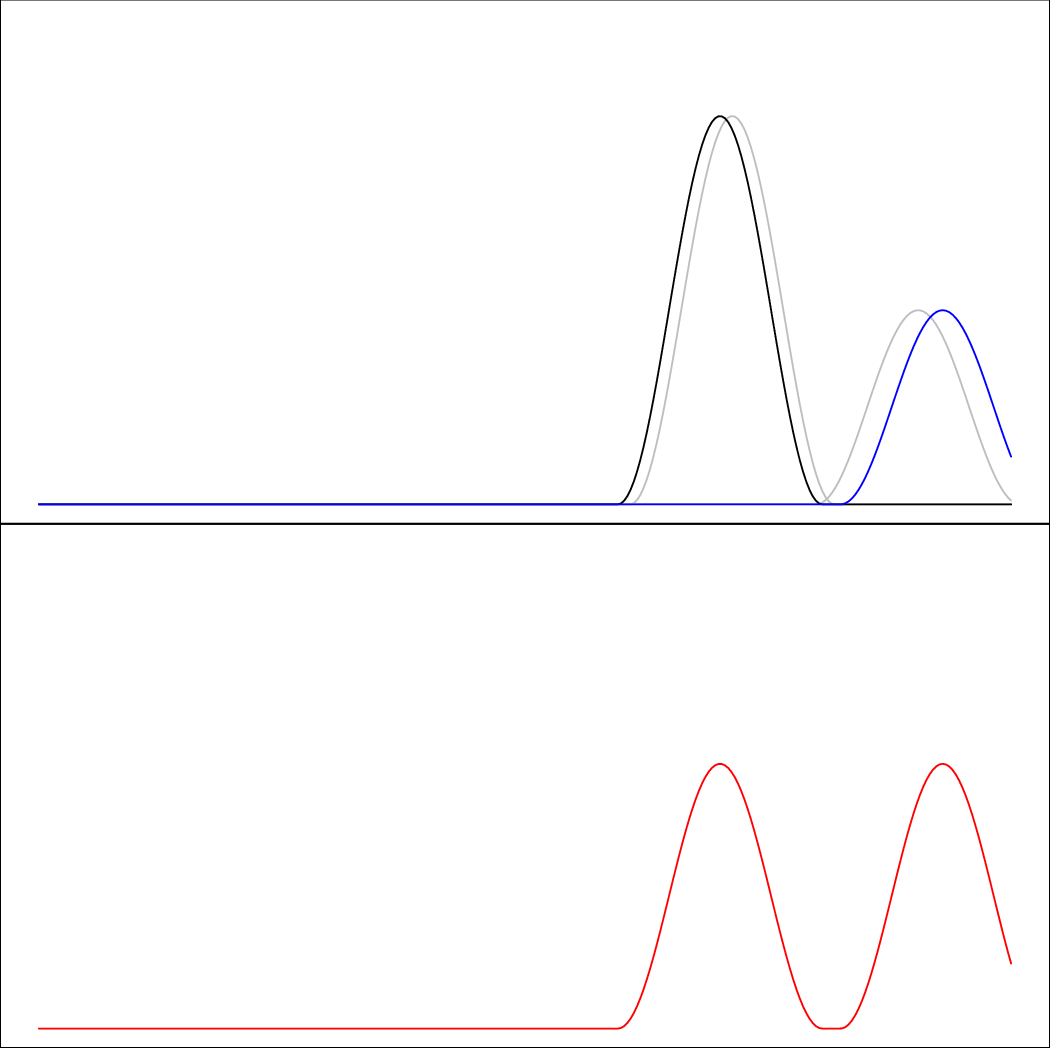}
\includegraphics[width=0.47\textwidth,height=0.36\textwidth]{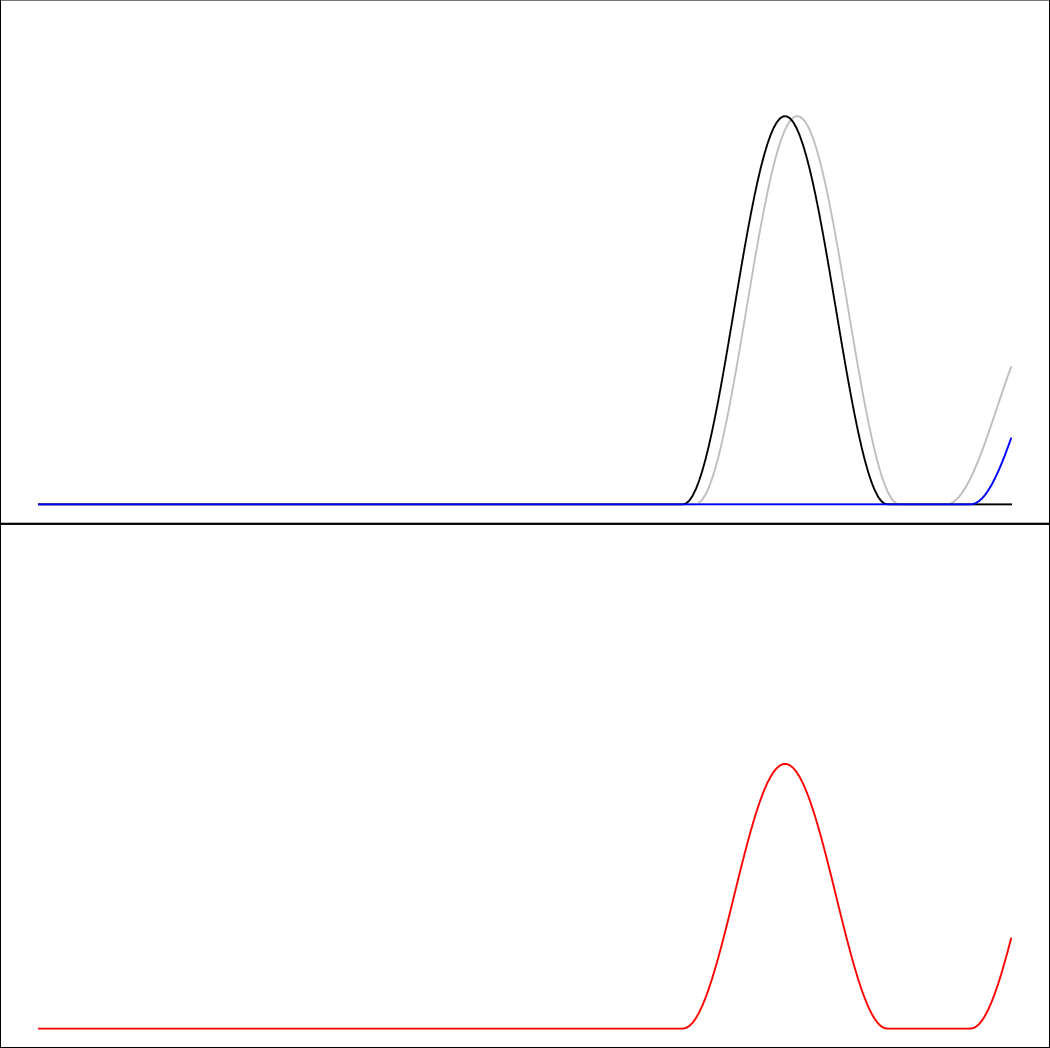}
\smallskip
\end{center}
\caption{The black curve shows $\rho_1$, the blue curve shows $\rho_2$, with dynamics given by Theorem \ref{mainthm2}. The grey curves show how the soliton densities would evolve without any interaction. By comparing with these, the deformation during the interaction and resulting phase shift are clear. The red curve shows the particle density, as studied in Corollary \ref{maincor1}.}\label{pdefig}
\end{figure}

To place our results into context, we continue by presenting some background on the dynamical study of soliton distributions, and more generally the hydrodynamics of integrable systems. A central example of an integrable system having solitary wave solutions (solitons) is the Korteweg-de Vries (KdV) equation. (Note that the BBS of interest in this article can in fact be derived as an ultra-discretization of the discrete KdV equation, which is in turn a certain discretization of KdV equation \cite{TH}.) The study of the evolution of KdV soliton distribution goes back a long way. Indeed, in 1971, Zakharov \cite{Zakharov} derived a kinetic equation describing the spatial evolution of the soliton distribution in a rarefied gas of the KdV solitons,
\[\partial_t f_s =- \partial_u (\tilde{v}_s(f) f_s),\]
where $f_s=f_s(u,t)$ is the density at space-time point $(u,t)$ of solitons with respect to their spectral parameter $s$. Here, $\tilde{v}_s(f)$ represents the effective velocity of a single `trial' soliton with spectral parameter $s$ in a given soliton distribution $f=(f_{s})$ under a rarefied gas condition, and is explicitly expressed as
\[\tilde{v}_s(f) = 4s^2 - \frac{1}{s} \int_0^{\infty} \log \left| \frac{s+r}{s-r} \right|f_r (4r^2 -4s^2 )dr,\]
where $4s^2$ is the speed of a KdV soliton with parameter $s$ in `empty' space. Later, El \cite{El} generalized this result to the case of a dense gas and derived the kinetic equation
\[\partial_t f_s =- \partial_u ( v^{\mathrm{eff}}_s(f) f_s),\]
where the transport velocity $v^{\mathrm{eff}}_s(f)$ now satisfies
\[v^{\mathrm{eff}}_s(f) = 4s^2 - \frac{1}{s} \int_0^{\infty} \log \left| \frac{s+r}{s-r} \right| f_r \left(v^{\mathrm{eff}}_r(f) -v^{\mathrm{eff}}_s(f) \right) dr.\]
We highlight that this does not give an explicit expression for the velocity, but is merely a linear integral equation. This kinetic equation was extended to other classical integrable systems with solitary waves in \cite{EK}, and generalized to a nonlinear integro-differential system
\begin{equation*}
\begin{cases}
& \vspace{3pt}\partial_t f_s =- \partial_u \left( v^{\mathrm{eff}}_s(f) f_s\right),\\
& \displaystyle{v^{\mathrm{eff}}_s(f) = v_s - \frac{1}{s} \int_0^{\infty} \kappa(s,r) f_r \left(v^{\mathrm{eff}}_r(f) -v^{\mathrm{eff}}_s(f) \right)  dr},
\end{cases}
\end{equation*}
with explicitly given model-dependent parameters $(v_s)$ and symmetric interaction kernel $\kappa(s,r)$, see \cite{EKPZ}. In the latter work, to study this type of system, El et al.\ introduced its finite-component `cold-gas' hydrodynamic reductions. In particular, these reductions are obtained via the delta-function ansatz for the density $f_s=\sum_{i=1}^I f_i \delta (s-s_i)$, where $0<s_1<s_2<\dots <s_I$ are arbitrary numbers. The resulting system of hydrodynamic conservation laws is given by
\begin{equation}\label{jio1}
\partial_t \rho_{i} =- \partial_u ( v^{\mathrm{eff}}_{i}(\rho) \rho_{i}), \qquad i=1,2,\dots, I,
\end{equation}
where the densities $\rho_{i} := s_if_i$ and the velocities $ v^{\mathrm{eff}}_{i}(\rho)$ are related by
\begin{equation}\label{jio2}
v^{\mathrm{eff}}_i(\rho) = v_{s_i} -\sum_{j=1}^I \kappa_{ij}  \rho_j \left(v^{\mathrm{eff}}_j (\rho) -v^{\mathrm{eff}}_i(\rho) \right),
\end{equation}
with $\kappa_{ij}:=\frac{1}{s_i s_j}\kappa(s_i,s_j)$. The form of the partial differential equation and the equation of the effective velocity are exactly the same as \eqref{PDErho} and \eqref{effectivespeed}, respectively, with $v_{s_i}=i$ and $\kappa_{ij}=2(i \wedge j)$ for the BBS. In \cite{EKPZ}, various properties of the system described by \eqref{jio1} and \eqref{jio2} are presented, including its linear degeneracy, as well as formulae for velocities and some exact solutions with particular symmetries.

In recent years, a similar nonlinear integro-differential system has also been derived from quantum integrable systems by applying generalized hydrodynamics (GHD) \cite{Doyon,Sp}. The theory of GHD is an extension of hydrodynamics to integrable systems, constructed on generalized Gibbs ensembles instead of Gibbs ensembles. It explains that the Euler-scale evolution of quasi-particles in integrable systems, which are associated to solitons or more generally conserved quantities, is described by the GHD equations:
\begin{equation*}
\begin{cases}
&\vspace{3pt} \partial_t \rho_p =- \partial_u \left( v^{\mathrm{eff}}_p(\rho) \rho_p\right), \\
& \displaystyle{v^{\mathrm{eff}}_p(\rho) = E'(p) + \int  K(p,p') \rho_p(p') \left(v^{\mathrm{eff}}_{p'}(\rho) -v^{\mathrm{eff}}_p(\rho) \right)  dp'},
\end{cases}
\end{equation*}
where $\rho_p$ is the density of quasi-particles of `bare' momentum $p$, and $E'(p)$ is the group velocity of them, with $E(p)$ being the energy function. The kernel $K(p,p')$ is called a `scattering' or `two-body' shift, and is determined by examining the two-body scattering of asymptotic excitations for quasi-particles with momentum $p$ and $p'$, see \cite[Section 3]{Doyon}. Recent developments reveal that the theory is applicable to a wide family of integrable systems, including classical and quantum gases, chains and field theory models. Our result shows that the GHD equations are also relevant for the BBS. That the theory of GHD is suitable for cellular automata (of which the BBS is one) has been conjectured (see \cite[Introduction]{Doyon}), but not been shown rigorously hitherto. Moreover, the strategy of our proof follows exactly that which might be expected for a rigorous derivation of GHD given the interpretation in \cite[Section 4.3]{Doyon}, that is, ``GHD is the fluid equation obtained by applying the inverse of the scattering map to the Liouville equations [that describe the fluid motion of free particles]''. We also mention work by Kuniba et al.\ that studies the explicit solution of the GHD equation for the BBS starting from a step initial function of particle densities, where the initial soliton densities are those associated with the Bernoulli product measure of given particle density \cite{KMP}. (NB.\ For this, it is necessary to consider an infinite number of soliton sizes, that is, take $I=\infty$.) In the latter work, it is confirmed that there is a series of density plateaux emerging from the initial condition, and explicit expressions for the height, speed and position of the plateaux are given.

Finally, we give a remark on an essential difference between typical hydrodynamic limits for interacting stochastic systems and our results. In the study of hydrodynamic limits for interacting stochastic systems, the characterization of all equilibrium states is an important step. In particular, in hydrodynamics, macroscopic properties of the system can be deduced from the profile of macroscopic parameters. For example, for the symmetric simple exclusion process, the equilibrium states are completely characterized by the density of particles, and under an appropriate condition on the microscopic initial measures, the convergence of the empirical measure associated with any local function (i.e.\ one depending upon the configuration in a finite neighbourhood of the origin) holds, namely,
\[\frac{1}{N}\sum_{x} f\circ\tau_x \left(\eta^{tN^2}\right)\delta_{x/N}(du) \to \tilde{f}(\rho(t,u))du\]
as $N\rightarrow\infty$, where $f$ is an arbitrary local function, $\tau_x$ is a spatial shift by $x$, $\eta^t$ is the configuration at time $t$, $\tilde{f}(\rho)$ is the expected value of $f$ under the equilibrium measure with density $\rho$, and $\rho(t,u)$ is the solution of the hydrodynamic limit equation for the density of particles \cite{KLbook}. On the other hand, since there is no mixing effect in our dynamics, the soliton densities do not characterize all invariant measures. In fact, as shown in \cite{Ferrari}, for each given sequence of soliton densities $(\rho_i)_{i}$, there are infinitely many invariant measures for the BBS. In particular, the macroscopic profile of soliton densities does not contain enough information to characterize all macroscopic properties, and so the convergence of the empirical measure associated with an arbitrary local function does not necessarily hold. On the contrary, we do not know which measure should be used to compute $\tilde{f}$. However, if we assume our initial condition to be `a local equilibrium' in a particular sense, such as the elements of the `slot decomposition' $(\zeta_i(m))_{i,m}$ (as introduced in Subsection \ref{soldecompsec} below) are independent, then such a convergence statement may also hold for the BBS.

The remainder of the article is organised as follows. In Section \ref{discsec}, we recall the soliton decomposition from \cite{Ferrari}, and explain how the dynamics are linear with respect to this. We then present our continuous analogue in Section \ref{contsec}. Section \ref{hdlsec} contains a proof of the generalized hydrodynamic limit for the integrated densities of solitons that is stated as Theorem \ref{mainthm}. In Section \ref{msec}, we check that for densities that are not too large, the matrix defined at \eqref{matrix} is invertible, and therefore the effective speeds of solitons are well-defined. Finally, in Section \ref{pdesec}, we establish the partial differential equation descriptions of Theorem \ref{mainthm2}, which enables us to complete the proofs of Theorem \ref{reallythemainthm} and Corollary \ref{maincor1}.

\section{Discrete soliton decomposition and dynamics}\label{discsec}

In this section, we recall the soliton decomposition of \cite{Ferrari}, and explain how linear dynamics for solitons in this frame of reference induce the evolution of the BBS, as defined at \eqref{discretedynamics}.

\subsection{Soliton decomposition of particle configuration}\label{soldecompsec}

Before getting to the details of the soliton decomposition, it will be convenient to introduce the path encoding for BBS configurations of \cite{CKST}. In particular, for a configuration $\eta=(\eta(x))_{x\in\mathbb{N}}\in\{0,1\}^\mathbb{N}$, let $(S(x))_{x\in\mathbb{Z}_+}$ be the nearest-neighbour path on the integers given by $S(0):=0$, and
\[S(x)-S(x-1):=1-2\eta(x),\qquad \forall x\in\mathbb{N}.\]
If we define the past maximum of the path $(M(x))_{x\in\mathbb{Z}_+}$ by setting
\[M(x):=\max_{x'\leq x}S(x'),\qquad \forall x\in\mathbb{Z}_+,\]
then we have that the carrier process $(W(x))_{x\in\mathbb{Z}_+}$, as introduced at \eqref{carrierdef}, is given by
\begin{equation}\label{wpath}
W(x)=M(x)-S(x),\qquad \forall x\in\mathbb{Z}_+,
\end{equation}
see \cite[Lemma 2.1]{CKST}. In the soliton decomposition, we will consider the `excursions' of $W$ between the `records' of $S$, where this terminology is defined as follows.
\begin{itemize}
  \item A spatial location $x\in\mathbb{Z}_+$ is a record of $S$, or simply a record, if and only if $x$ is equal to 0 or $x\geq 1$ and $M(x)>M(x-1)$ (that is, $x$ is equal to $0$ or a new maximum for $S$). Equivalently, from \eqref{wpath}, we see that non-zero records are precisely those $x\in\mathbb{N}$ for which $W(x)=W(x-1)=0$.
  \item The excursion of $W$ between a pair of records $x_0$ and $x_1$ is the path segment $(W(x))_{x=x_0}^{x_1-1}$ that starts and ends at 0, makes jumps of $+1$ or $-1$, and remains non-negative throughout. We remark that the excursions of $W$ are separated by its `flat' segments, which, as already observed, correspond to non-zero records. Moreover, it can be the case that $x_0=x_1-1$, in which case the relevant excursion has zero length.
\end{itemize}
Note that for $\eta\in\Omega$, as defined at \eqref{omegadef}, we have that $M(x)\rightarrow\infty$ as $x\rightarrow\infty$, and so $S$ admits an infinite number of records, and the excursions of $W$ between these are all of finite length. Figure \ref{patheg} shows an example path of $S$ up to the first non-zero record, and the corresponding $M$ and $W$.

\begin{figure}
\begin{center}
\includegraphics[width=0.6\textwidth]{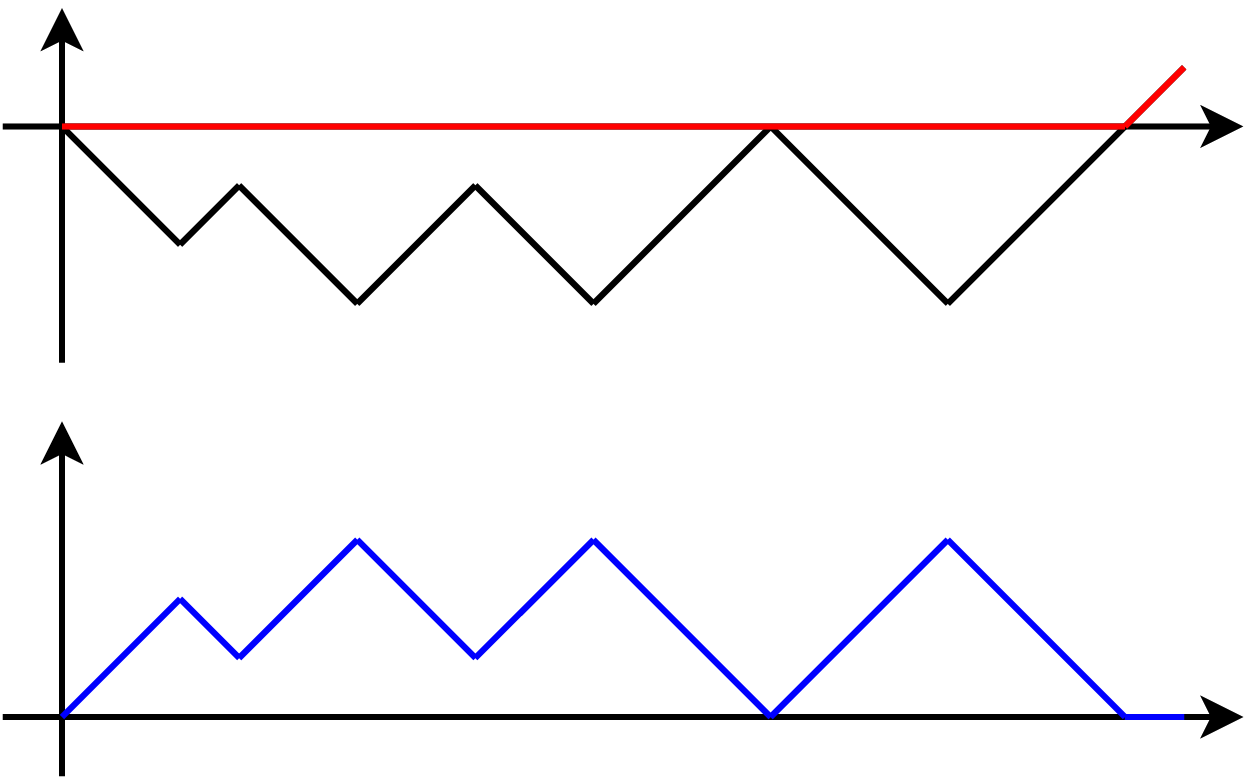}
\end{center}
\caption{Path encoding $S$ (top graph, black), past maximum $M$ (top graph, red) and carrier process (bottom graph, blue), shown up to the first non-zero record.}\label{patheg}
\end{figure}

We are now ready to describe an algorithm that, for any $\eta\in\Omega$, identifies every element of $\mathbb{Z}_+$ as being a record, or an element of a soliton of a finite size. As a result, we will be able to give a precise definition of $\sigma_i(x)=\sigma_i(x,0)$, which was formally defined at \eqref{sigmaidef}. Since the algorithm operates independently on each of the excursions of $W$, we will simply describe it on the part of the configuration $(\eta(x))_{x=x_0+1}^{x_1-1}$ between a pair of records $x_0$ and $x_1$ such that $S(x_1)=S(x_0)+1$. As an aid to the reader, we present a worked example in Figure \ref{solident} (corresponding to the configuration used to produce Figure \ref{patheg}). In particular, whilst the part of the configuration we are considering remains non-empty, we do the following (cf.\ \cite{Ferrari,takahashi1990}).
\begin{itemize}
\item Let $i$ be the length of the left-most run of consecutive $0$s (or $1$s) in the configuration that is followed by a run of consecutive $1$s (or $0$s, respectively) of at least the same length. Group the elements of this run with the first $i$ elements of the subsequent run. The $2i$ grouped elements are identified as a size $i$ soliton.
\item Remove the identified soliton, and repeat until the configuration is empty.
\end{itemize}
Since there are an equal number of $0$s and $1$s in the original part of the configuration, this algorithm will indeed terminate in an empty configuration, meaning that each of the $x\in\{x_0+1,x_0+2,\dots,x_1-1\}$ will have been included in a finite size soliton. Moreover, we note that the spatial locations of the elements of a soliton of size $i$ are given by a set of lattice sites $\{x'_1,x'_2,\dots,x'_{2i}\}$, where $x'_1<x'_2<\cdots<x'_{2i}$. For each $x\in\mathbb{N}$, we set $\sigma_i(x)=1$ if $x=x'_1$ for some such soliton, and $\sigma_i(x)=0$ otherwise.

\begin{figure}
\scriptsize
\setlength{\tabcolsep}{1.0pt}
\begin{center}
\begin{tabular}{r|C{14pt}C{14pt}C{14pt}C{14pt}C{14pt}C{14pt}C{14pt}C{14pt}C{14pt}C{14pt}C{14pt}C{14pt}C{14pt}C{14pt}C{14pt}C{14pt}C{14pt}C{14pt}C{14pt}C{14pt}}
        $x$ &   0 & 1 &   2 &   3 &   4 &   5 &   6 &  7  &   8 &  9  & 10  & 11  & 12  & 13  & 14  & 15  & 16  & 17  & 18  & 19  \\
 \hline
 \hline  $\eta(x)$ & - & 1 & 1& 0 & 1 & 1 & 0 & 0 & 1 & 1 & 0 & 0 & 0 & 1 & 1 & 1 & 0 & 0 & 0 & 0 \\
 \hline
 \hline record  & $\left\{\text{-}\right\}$  &   &  &  &   &&&&&&&&&&&&&& &$\left\{0\right\}$\\
 \hline   &  &\color{gray}{1} & \color{gray}{1}& \color{gray}{0} & \color{gray}{1} &\color{gray}{1} & \color{gray}{0} & \color{gray}{0} & \color{gray}{1} & \color{gray}{1} & \color{gray}{0} & \color{gray}{0}& \color{gray}{0} & \color{gray}{1} & \color{gray}{1} & \color{gray}{1} & \color{gray}{0} & \color{gray}{0} & \color{gray}{0} & \\
 \hline size 1     &   &   &  &  $\left\{0\right.\hspace{3pt}$  & $\hspace{3pt}\left.1\right\}$  &&&&&&&&&&&&&&&\\
  \hline   &  &\color{gray}{1} & \color{gray}{1}& & &\color{gray}{1} & \color{gray}{0} & \color{gray}{0} & \color{gray}{1} & \color{gray}{1} & \color{gray}{0} & \color{gray}{0}& \color{gray}{0} & \color{gray}{1} & \color{gray}{1} & \color{gray}{1} & \color{gray}{0} & \color{gray}{0} & \color{gray}{0} & \\
 \hline size 2 &  &  & &  &  &  & $\left\{0\right.\hspace{3pt}$ & 0 & 1 & $\hspace{3pt}\left.1\right\}$ &  &  &  &  &  &  &  &  &  &  \\
   \hline   &  &\color{gray}{1} & \color{gray}{1}& & &\color{gray}{1} && &  &  & \color{gray}{0} & \color{gray}{0}& \color{gray}{0} & \color{gray}{1} & \color{gray}{1} & \color{gray}{1} & \color{gray}{0} & \color{gray}{0} & \color{gray}{0} & \\
    \hline size 3 &  &$\left\{ 1\right.\hspace{3pt}$ & 1&  &  & 1 & &  &  &  &0  &0  &$\hspace{3pt}\left.0\right\}$& &  &  &  &  &  &  \\
    \hline   &  & & & & & && &  &  &  & &  & \color{gray}{1} & \color{gray}{1} & \color{gray}{1} & \color{gray}{0} & \color{gray}{0} & \color{gray}{0} & \\
    \hline size 3 &  & &&  &  &  & &  &  &  &  &  && $\left\{1\right.\hspace{3pt}$ &1  &1  &0  &0  & $\hspace{3pt}\left.0\right\}$ &  \\
       \hline
     \hline  $\sigma_1(x)$ & - & 0 & 0& 1 & 0 & 0 & 0 & 0 & 0 & 0 & 0 & 0 & 0 & 0 & 0 & 0 & 0 & 0 & 0 & 0 \\
      \hline  $\sigma_2(x)$ & - & 0 & 0& 0 & 0 & 0 & 1 & 0 & 0 & 0 & 0 & 0 & 0 & 0 & 0 & 0 & 0 & 0 & 0 & 0 \\
       \hline  $\sigma_3(x)$ & - & 1 & 0& 0 & 0 & 0 & 0 & 0 & 0 & 0 & 0 & 0 & 0 & 1 & 0 & 0 & 0 & 0 & 0 & 0 \\
\end{tabular}
\end{center}
\caption{Identification of solitons between a pair of records. The elements shown in grey represent the parts of the configuration still being considered by the algorithm at the relevant stage. (In this example, the sizes of the identified solitons are non-decreasing, but this will not be the case in general.)}\label{solident}
\end{figure}

The next step is to place the solitons onto their effective scale where the dynamics are linear. To do this, we will appeal to the notion of a soliton `slot', which is a notion that was originally formulated in \cite{Ferrari}. Roughly speaking, a $j$-slot is a point in the configuration in which it is possible to insert a size $j$ soliton without disrupting the organisation of solitons of size $i$ for $i>j$. We will add details regarding soliton insertion in the next subsection, where we set out the spatial reconstruction of the configuration $\eta$ from the `slot decomposition'. For now, though, it is enough to note that a soliton of size $i$ admits $j$-slots as per the following diagram. (Observe that the preceding algorithm yields two possible configurations of a size $i$ soliton, depending on whether the first run consists of $1$s or $0$s.)
\medskip

\scriptsize
\setlength{\tabcolsep}{1.0pt}
\begin{center}
\begin{tabular}{r|C{24pt}C{24pt}C{24pt}C{24pt}C{24pt}C{24pt}C{24pt}C{24pt}}
        position within soliton &   1 & 2 &  \dots&$i$    &  $i+1$  & $i+2$   &\dots   & $2i$ \\
        \hline
        \hline
        soliton configuration 1 &  $\left\{ 1\right.\hspace{3pt}$ & 1&  \dots&  1& 0 & 0& \dots &$\hspace{3pt}\left.0\right\}$\\
                \hline
        soliton configuration 2 &  $\left\{ 0\right.\hspace{3pt}$ & 0&  \dots&  0& 1 & 1& \dots &$\hspace{3pt}\left.1\right\}$\\
        \hline\hline
        maximal $j$ for which position is a $j$-slot & 0&1&\dots &$i-1$& 0&1&\dots &$i-1$
                 \end{tabular}
\end{center}
\normalsize
\medskip

\noindent
We note that a $j$-slot is also a $j'$-slot for any $j'<j$. So, for each $j<i$, the number of $j$-slots contained within a size $i$ soliton is given by $2(i-j)$. In addition to the $j$-slots contained within solitons, it is also the case that each record is a $j$-slot for any $j$. Hence, since each spatial location $x\in\mathbb{Z}_+$ is either a record or at particular position within a finite soliton, by indexing sites as per the above diagram, the soliton decomposition enables us to define:
\begin{equation}\label{nudef}
\nu(x):=\sup\left\{j:\:\mbox{$x$ is a $j$-slot}\right\}.
\end{equation}
We also introduce $(S_i(x))_{i\in\mathbb{N},x\in\mathbb{Z}}$ by setting
\[S_i(x):=\sum_{x'=0}^x\mathbf{1}_{\{\nu(x')\geq i\}},\]
which gives the number of $i$-slots up to spatial location $x$. See Figure \ref{slotindex} for a continuation of our worked example, whereby we compute these quantities for the given part of the configuration.

\begin{figure}
\scriptsize
\setlength{\tabcolsep}{1.0pt}
\begin{center}
\begin{tabular}{r|C{14pt}C{14pt}C{14pt}C{14pt}C{14pt}C{14pt}C{14pt}C{14pt}C{14pt}C{14pt}C{14pt}C{14pt}C{14pt}C{14pt}C{14pt}C{14pt}C{14pt}C{14pt}C{14pt}C{14pt}}
        $x$ &   0 & 1 &   2 &   3 &   4 &   5 &   6 &  7  &   8 &  9  & 10  & 11  & 12  & 13  & 14  & 15  & 16  & 17  & 18  & 19  \\
 \hline\hline  $\eta(x)$ & - & 1 & 1& 0 & 1 & 1 & 0 & 0 & 1 & 1 & 0 & 0 & 0 & 1 & 1 & 1 & 0 & 0 & 0 & 0 \\
 \hline
 \hline record  & $\left\{\text{-}\right\}$  &   &  &  &   &&&&&&&&&&&&&& &$\left\{0\right\}$\\
 \hline size 1     &   &   &  &  $\left\{0\right.\hspace{3pt}$  & $\hspace{3pt}\left.1\right\}$  &&&&&&&&&&&&&&&\\
  \hline size 2 &  &  & &  &  &  & $\left\{0\right.\hspace{3pt}$ & 0 & 1 & $\hspace{3pt}\left.1\right\}$ &  &  &  &  &  &  &  &  &  &  \\
    \hline size 3 &  &$\left\{ 1\right.\hspace{3pt}$ & 1&  &  & 1 & &  &  &  &0  &0  &$\hspace{3pt}\left.0\right\}$&  $\left\{1\right.\hspace{3pt}$ &1  &1  &0  &0  & $\hspace{3pt}\left.0\right\}$ &  \\
     \hline \hline
      $\nu(x)$ &$ \infty$ & 0 & 1& 0 & 0 & 2 & 0 & 1 & 0 & 1 & 0 & 1 & 2 & 0 & 1 & 2& 0 & 1 & 2 & $\infty$ \\
      \hline
      \hline
       $S_1(x)$  & 1  & 1  & 2 & 2 & 2  &3&3&4&4&5&5&6&7&7&8&9&9&10&11 &12\\
       \hline
       $S_2(x)$  & 1  & 1  & 1 & 1 & 1  &2&2&2&2&2&2&2&3&3&3&4&4&4&5 &6\\
       \hline
       $S_3(x)$  & 1  & 1  & 1 & 1 & 1  &1&1&1&1&1&1&1&1&1&1&1&1&1&1 &2\\
      \end{tabular}
\end{center}
\caption{Identification of slots of given sizes.}\label{slotindex}
\end{figure}

With the above preparations in place, we are finally in a position to define the slot decomposition of a configuration $\eta\in\Omega$. In particular, this is the collection $(\zeta_i(m))_{i,m\in\mathbb{N}}$ of $\mathbb{Z}_+$-valued variables given by
\[\zeta_i(m):=\sum_{x\in\mathbb{N}}\sigma_i(x)\mathbf{1}_{\{S_i(x)=m\}}.\]
Or, in words, $\zeta_i(m)$ is the number of size $i$ solitons falling into the $m$th $i$-slot. Figure \ref{slotdecomp} shows the part of the slot decomposition that is determined by the excursion considered in our worked example (note in particular that $\zeta_i(1)=0$ for all $i$ strictly greater than the size of the largest soliton observed); the remaining entries would be obtained by concatenating row-wise the corresponding table for subsequent excursions. We will write
\begin{eqnarray}
\tilde{\Upsilon}:\Omega&\rightarrow&\bar{\Omega}\nonumber\\
\eta&\mapsto&(\zeta_i(m))_{i,m\in\mathbb{N}},\label{upsdef}
\end{eqnarray}
where $\bar{\Omega}$ is defined to be the set
\[\left\{(\zeta_i(m))_{i,m\in\mathbb{N}}\in\mathbb{Z}_+^{\mathbb{N}^2}:\:\forall m\in\mathbb{N},\:\exists i_0\mbox{ such that }\zeta_i(m)=0\mbox{ for all }i\geq i_0\right\}.\]
Furthermore, we will call $(\zeta_i(m))_{i,m\in\mathbb{N}}$ the slot decomposition of $\eta$.

\begin{figure}
\scriptsize
\begin{center}
\begin{tabular}{r|C{14pt}C{14pt}C{14pt}C{14pt}C{14pt}C{14pt}C{14pt}C{14pt}C{14pt}C{14pt}C{14pt}C{14pt}}
        $m$ &    1 &   2 &   3 &   4 &   5 &   6 &  7  &   8 &  9  & 10  & 11   \\
       \hline
       \hline
       $\zeta_1(m)$  & 0  & 1 &0 &0&0&0&0&0&0&0&0\\
       \hline
       $\zeta_2(m)$  & 0  & 1 & 0&0&0&&&&&&\\
       \hline
       $\zeta_3(m)$  & 2  &  & &&&&&&&&\\
       \hline
       $\zeta_4(m)$  & 0  &  & &&&&&&&&\\
       \hline
       $\zeta_5(m)$  & 0  &  & &&&&&&&&\\
       \hline \vdots\hspace{12pt} &\hspace{0pt}\vdots&  & &&&&&&&&
      \end{tabular}
\end{center}
\caption{The slot decomposition.}\label{slotdecomp}
\end{figure}

To complete the section, we introduce an analogue to the rescaled integrated densities of solitons $(\psi_i^N)_{i=1}^I$, as defined at \eqref{psiindef}, in terms of the positions of solitons in the slot decomposition. In particular, for $i,N\in\mathbb{N}$ and $z,t\in\mathbb{R}_+$, we define
\begin{equation}\label{barpsiindef}
\bar{\psi}_i^N(z,t):=\frac{1}{N}\sum_{m=1}^{\lfloor Nz \rfloor}\zeta_i\left(m,\lfloor Nt\rfloor\right),
\end{equation}
where for a given $\eta\in\Omega$ and $k\in\mathbb{Z}_+$, $(\zeta_i(m,k))_{i,m\in\mathbb{N}}$ is the slot decomposition of $T^k\eta$. (NB.\ We note as part of Proposition \ref{pfres} below that, for any $k\in\mathbb{Z}_+$, $T^k(\Omega)\subseteq \Omega$, and so $(\zeta_i(m,k))_{i,m\in\mathbb{N}}$ is indeed well-defined.) As will motivate the definition of our continuous state-space analogue of the slot decomposition, for each fixed $t$, we have the following connection between $\psi_i^N$ and $\bar{\psi}_i^N$ (cf.\ \eqref{barpsidef}):
\begin{equation}\label{psiin}
\bar{\psi}_i^N\approx {\psi}_i^N\circ \left(\phi_i^N\right)^{-1},
\end{equation}
where (cf.\ \eqref{phiidef})
\begin{equation}\label{phiindef}
{\phi}_i^N(u,t):=u-\sum_{j\in\mathbb{N}} 2\left(i\wedge j\right) {\psi}_j^N(u,t),
\end{equation}
and the inverse in \eqref{psiin} is the right-continuous inverse of $\phi_i^N$ taken with respect to the $u$ variable (see \eqref{rcinverse} for details). This relation will be made precise in Section \ref{hdlsec} (see Proposition \ref{gobet} and its proof in particular), but as an aid to intuition, let us briefly sketch the connection in the case $N=1$ and $t=0$. Note that in the following discussion, we drop the $t$ variable from the notation. Firstly, observe that since every spatial location is a record or a member of a soliton of a given size, we have that
\begin{equation}\label{rapprox}
x\approx R(x)+ \sum_{j\in\mathbb{N}}2j\psi_j^1(x),
\end{equation}
where $R(x)=M(x)+1$ is the number of records up to spatial location $x$. Thus, for each $i\in\mathbb{N}$, since the number of $i$-slots within each size $j$ soliton with $j>i$ is $2(j-i)$, it follows that the number of $i$-slots up to spatial location $x$, that is $S_i(x)$, satisfies
\begin{equation}\label{noslots}
S_i(x)\approx R(x)+ \sum_{j>i}2(j-i)\psi_j^1(x)\approx x- \sum_{j\in\mathbb{N}} 2\left(i\wedge j\right)\psi_j^1(x)=\phi_i^1(x).
\end{equation}
By counting the solitons according to their spatial location, or position in the slot decomposition, thus gives $\psi_i^1(x)\approx \bar{\psi}^1_i\circ\phi_i^1(x)$, which in turn yields the relation at \eqref{psiin} in this case.

\subsection{Spatial reconstruction}\label{srsubsec} Importantly, the map $\tilde{\Upsilon}$ defined at \eqref{upsdef} is a bijection, meaning that it is possible to reconstruct the spatial information about a configuration $\eta\in\Omega$ from its slot decomposition $(\zeta_i(m))_{i,m\in\mathbb{N}}$. In this subsection, we describe the procedure for doing so.

Given $(\zeta_i(m))_{i,m\in\mathbb{N}}\in\bar{\Omega}$, we start by considering the empty configuration $\eta_\infty=(\eta_\infty(x))_{x\in\mathbb{N}}$ given by $\eta_\infty(x)=0$ for all $x\in\mathbb{N}$, and go on to insert solitons one at a time until the entire slot configuration is exhausted. To begin with, we note that $\eta_\infty$ has records at every site, and so each spatial location is a $j$-slot for any $j\in\mathbb{N}$. In particular, defining $\nu_\infty$ from $\eta_\infty$ analogously to \eqref{nudef} gives $\nu_\infty(x)=\infty$ for all $x\in\mathbb{Z}_+$. We then construct the part of the configuration up to the first non-zero record as follows:
\begin{itemize}
  \item Let $i_1=\max\{i:\:\zeta_i(1)>0\}$. If the defining set is empty, then no solitons are inserted between 0 and the first non-zero record. That is, we will have in the final configuration $\eta(1)=0$, and we continue the algorithm from the record at $x=1$ as described below.
  \item If $i_1\in\mathbb{N}$, then we start by inserting $\zeta_{i_1}(1)$ size $i_1$ solitons of the form $(1,1,\dots,1,0,0,\dots,0)$, starting from spatial location 1 (i.e.\ the spatial location after the record at 0). Call the resulting configuration $\eta_{i_1}$, and moreover define $\nu_{i_1}$ from $\eta_{i_1}$ as at \eqref{nudef}. Figure \ref{reconfig} shows an example of this part of the algorithm with $i_1=3$ and $\zeta_{i_1}(1)=2$, as arising from the part of the slot decomposition shown in Figure \ref{slotdecomp}.
  \item For $j=i_1-1$, we now have a number (specifically $1+2\zeta_{i_1}(1)$) of $j$-slots from 0 to the first non-zero record. For the $m$th of these, we insert $\zeta_j(m)$ size $j$ solitons into the existing configuration, starting from the spatial location directly after the $m$th $j$-slot. Note that if $\eta_{i_1}(x)=1$ at the relevant $j$-slot, then we insert strings of the form $(0,0,\dots,0,1,1,\dots,1)$, and otherwise insert $(1,1,\dots,1,0,0,\dots,0)$. Denote the new configuration $\eta_j$, and the updated list of available slots by $\nu_j$. See Figure \ref{reconfig}.
  \item We repeat the previous step for each $j=i_1-2,\dots,1$, to give configuration $\eta_1$, and updated list of available slots $\nu_1$; again, see Figure \ref{reconfig}. If we have inserted solitons across the first $x_0$ spatial locations, it is possible to check that the first non-zero record of $\eta_1$ is at $x_0+1$, and we continue the algorithm from this spatial location as described below.
\end{itemize}
Once we have reached the first non-zero record, we delete all the entries of the slot configuration that we have already considered, and shift the remaining entries along the rows to fill in the gaps (see Figure \ref{slotdecomprem}).
We then set $\eta_\infty$ to be equal to $\eta_1$, and continue from the first non-zero record using the new slot decomposition, which we note is also an element of $\bar{\Omega}$. This results in a configuration $\eta\in\{0,1\}^\mathbb{N}$. In fact, since each record of $\eta$ corresponds to a record in the original empty configuration, and the parts of the configuration between these are finite, then it is the case that $\eta\in\Omega$. We define
\begin{eqnarray*}
\tilde{\Gamma}:\bar{\Omega}&\rightarrow&\Omega\\
(\zeta_i(m))_{i,m\in\mathbb{N}}&\mapsto&\eta
\end{eqnarray*}
by the above algorithm, and call $\eta$ the spatial reconstruction of $(\zeta_i(m))_{i,m\in\mathbb{N}}$. A key result of \cite[Section 2]{Ferrari} is (a two-sided version of) the following.

\begin{prop}\label{gyu} The map $\tilde{\Upsilon}$ is a bijection from $\Omega$ to $\bar{\Omega}$, with inverse given by $\tilde{\Gamma}$.
\end{prop}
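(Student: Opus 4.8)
The plan is to verify the two compositions $\tilde{\Gamma}\circ\tilde{\Upsilon}=\mathrm{id}_\Omega$ and $\tilde{\Upsilon}\circ\tilde{\Gamma}=\mathrm{id}_{\bar{\Omega}}$, which together give that $\tilde{\Upsilon}$ is a bijection with inverse $\tilde{\Gamma}$. Since the corresponding two-sided statement is established in \cite[Section 2]{Ferrari}, the task is essentially to transfer that argument to the present one-sided setting; the simplification here is that, for $\eta\in\Omega$, the path $S$ has infinitely many records and all excursions of $W$ between consecutive records have finite length. The first step is to confirm that both maps land in the stated codomains. For $\tilde{\Upsilon}$, I would note that each spatial location lies in a unique excursion (or is a record), that an excursion of finite length can only contain solitons of size at most half its length, and that only finitely many slots with $S_i(x)=m$ can occur for $m$ in any bounded range; combining these shows that for each fixed $m$ only finitely many $i$ have $\zeta_i(m)>0$, so that $\tilde{\Upsilon}(\eta)\in\bar{\Omega}$. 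That $\tilde{\Gamma}$ maps into $\Omega$ is already observed in the construction.

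The central reduction is that both algorithms act independently on excursions. The identification algorithm defining $\tilde{\Upsilon}$ processes the part of $\eta$ between each consecutive pair of records separately, while the reconstruction defining $\tilde{\Gamma}$ builds the configuration one excursion at a time, halting at each newly created first non-zero record before passing to the shifted slot data. I would therefore fix a single excursion $(\eta(x))_{x=x_0+1}^{x_1-1}$ between records $x_0,x_1$ with $S(x_1)=S(x_0)+1$ --- a finite word with equal numbers of $0$s and $1$s --- and reduce both identities to the statement that, on such a word, the Takahashi--Satsuma soliton identification and the slot-insertion procedure are mutually inverse.

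For the single-excursion claim I would argue by induction, processing solitons from the largest size downwards, exactly mirroring the order in which $\tilde{\Gamma}$ inserts them. The key structural input is the defining property of a $j$-slot described above: inserting a size $j$ soliton of the appropriate orientation at a $j$-slot neither alters which sites are identified as belonging to solitons of size strictly greater than $j$, nor changes their induced $j$-slot structure, and the inserted soliton is precisely the one the identification algorithm removes at the size-$j$ stage. Granting this invariance, an induction on the decreasing sequence of sizes shows that applying $\tilde{\Upsilon}$ after $\tilde{\Gamma}$ returns $(\zeta_i(m))_{i,m\in\mathbb{N}}$, since the inserted solitons are re-identified in the same slots, and that applying $\tilde{\Gamma}$ after $\tilde{\Upsilon}$ returns $\eta$, since the removed solitons are re-inserted into the slots they vacated. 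Summing over excursions, using that records are preserved by both operations, yields the two global identities.

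The main obstacle is precisely this slot-invariance statement underlying the induction: one must check carefully that inserting a smaller soliton into a $j$-slot does not disturb the relative organisation of the larger solitons, and that the bookkeeping of created and consumed slots, encoded through $\nu$ and the counts $S_i$, is exactly compatible between the identification and insertion procedures. This is a finite but delicate combinatorial verification; since the analogous fact is the heart of \cite[Section 2]{Ferrari}, I would follow that argument, indicating only the modifications needed because the configuration here is one-sided and bounded to the left by the record at $0$.
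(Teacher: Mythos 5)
The paper offers no proof of this proposition: it is stated as (a one-sided version of) a key result of \cite[Section 2]{Ferrari} and simply cited. Your outline---reduction to individual excursions between consecutive records, downward induction on soliton size using the slot-invariance of insertion, with the core combinatorial verification deferred to \cite[Section 2]{Ferrari}---is a correct and faithful sketch of how that citation would be fleshed out in the one-sided setting, so it matches the paper's (implicit) approach.
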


\begin{figure}
\scriptsize
\setlength{\tabcolsep}{1.0pt}
\begin{center}
\begin{tabular}{r|C{14pt}C{14pt}C{14pt}C{14pt}C{14pt}C{14pt}C{14pt}C{14pt}C{14pt}C{14pt}C{14pt}C{14pt}C{14pt}C{14pt}C{14pt}C{14pt}C{14pt}C{14pt}C{14pt}C{14pt}}
        $x$ &   0 & 1 &   2 &   3 &   4 &   5 &   6 &  7  &   8 &  9  & 10  & 11  & 12  & 13  & 14  & 15  & 16  & 17  & 18  & 19  \\
 \hline\hline  $\eta_{\infty}(x)$ & - & 0 & 0& 0 & 0 & 0 & 0 & 0 & 0 & 0 & 0 & 0 & 0 & 0 & 0 & 0 & 0 & 0 & 0 & 0 \\
  \hline $\nu_{\infty}(x)$ & $\infty$ & $\infty$ & $\infty$&$\infty$ &$\infty$ & $\infty$ & $\infty$ &$\infty$ & $\infty$ & $\infty$ &$\infty$ & $\infty$ & $\infty$ &$\infty$ & $\infty$ & $\infty$ & $\infty$ &$\infty$ & $\infty$ & $\infty$ \\
 \hline
 \hline
 $\eta_3(x)$ & -& $\left\{1\right.\hspace{3pt}$ &1  &1  &0  &0  & $\hspace{3pt}\left.0\right\}$ &
 $\left\{1\right.\hspace{3pt}$ &1  &1  &0  &0  & $\hspace{3pt}\left.0\right\}$ &0&0&0&0&0&0&0\\
  \hline$\nu_3(x)$ &$\infty$& 0&1&2&0&1&2&0&1&2&0&1&2&$\infty$&$\infty$&$\infty$&$\infty$&$\infty$&$\infty$&$\infty$\\
      \hline   $2$-slot $\#$  & 1 &    &   &   2 &   &  &  3  &   &   & 4  &   &   & 5  &6  & 7  & 8  & 9  & 10  & 11  &12\\
      \hline
 \hline
 $\eta_2(x)$ & -& 1&1&1& $\left\{0\right.\hspace{3pt}$ &0    &1   & $\hspace{3pt}\left.1\right\}$ &0&0&0
& 1 &1  &1  &0  &0  &0&0&0&0\\
  \hline$\nu_2(x)$ &$\infty$& 0&1&2&0&1&0&1&0&1&2&0&1&2&0&1&2&$\infty$&$\infty$&$\infty$\\
   \hline   $1$-slot $\#$  & 1 &    & 2 &  3 &   &4  &  & 5  &   & 6  & 7 &   & 8  &9  &   & 10  & 11  & 12  & 13  &14\\
      \hline
 \hline
 $\eta_1(x)$ & -& 1&1&$\left\{0\right.\hspace{3pt}$ &    $\hspace{3pt}\left.1\right\}$& 1 & 0 & 0 & 1 & 1 & 0 & 0 & 0 & 1 & 1 & 1 & 0 & 0 & 0 & 0\\
 \hline$\nu_1(x)$ &$\infty$& 0&1&0&0&2&0&1&0&1&0&1&2&0&1&2&0&1&2&$\infty$\\
\end{tabular}
\end{center}
\caption{Reconstruction of the configuration up to the first non-zero record.}\label{reconfig}
\end{figure}

\begin{figure}
\scriptsize
\begin{tabular}{ccc}
\begin{tabular}{r|C{14pt}C{14pt}C{14pt}C{14pt}C{14pt}C{14pt}C{14pt}C{14pt}C{14pt}C{14pt}C{14pt}C{14pt}}
        $m$ &    1 &   2 &   3 &   4 &   5 &   6 &  7  &   8 &  9  & 10  & 11   \\
       \hline
       \hline
       $\zeta_1(m)$  & 0  & 1 &0 &0&0&0&0&0&0&0&0\\
       \hline
       $\zeta_2(m)$  & 0  & 1 & 0&0&0&&&&&&\\
       \hline
       $\zeta_3(m)$  & 2  &  & &&&&&&&&\\
       \hline
       $\zeta_4(m)$  & 0  &  & &&&&&&&&\\
       \hline
       $\zeta_5(m)$  & 0  &  & &&&&&&&&\\
       \hline \vdots\hspace{12pt} &\hspace{0pt}\vdots&  & &&&&&&&&
      \end{tabular}&\hspace{20pt}$\Rightarrow$&\hspace{20pt}\begin{tabular}{r|ccc}
        $m$ &    1 &   2 &  \dots  \\
       \hline
       \hline
       $\zeta_1(m)$  & $\zeta_1(12)$  & $\zeta_1(13)$ &\dots\\
       \hline
       $\zeta_2(m)$  &  $\zeta_2(6)$  & $\zeta_1(7)$ & \dots\\
       \hline
       $\zeta_3(m)$  & $\zeta_3(2)$  & $\zeta_3(3)$ &\dots\\
       \hline
       $\zeta_4(m)$  & $\zeta_4(2)$  & $\zeta_4(3)$ &\dots\\
       \hline
       $\zeta_5(m)$  & $\zeta_5(2)$  & $\zeta_5(3)$ &\dots\\
              \hline \vdots\hspace{12pt} &\hspace{0pt}\vdots&  \hspace{0pt}\vdots&
      \end{tabular}
      \end{tabular}
\caption{Removing slots considered up to the first non-zero record.}\label{slotdecomprem}
\end{figure}

As a final comment on the spatial reconstruction algorithm, we describe the procedure that in the scaling limit will allow us to reverse the map $\Upsilon:(\psi_i)_{i=1}^I\mapsto(\bar{\psi_i})_{i=1}^I$ introduced at \eqref{barpsidef}. As with \eqref{psiin}, this is approximate in the discrete case, but becomes rigourous asymptotically. A detailed statement again appears below (see Proposition \ref{gobet}), but we sketch the basic idea here so that the continuous state-space construction is less mysterious. For the remainder of the subsection, we fix $I\in\mathbb{N}$, and restrict our attention to configurations in $\Omega_I$ (recall this set from \eqref{omegaidef}). Ignoring the entries in the slot decomposition for slots of size $>I$ (which will all be empty), the corresponding set of slot decompositions is given by
\[\bar{\Omega}_I:=\left\{(\zeta_i(m))_{i\in\{1,\dots,I\},m\in\mathbb{N}}\in\mathbb{Z}_+^{\{1,\dots,I\}\times\mathbb{N}}\right\}.\]
For each fixed $t$, it is possible to check that
\begin{equation}\label{ooo}
{\psi}_i^N\approx \bar{\psi}_i^N\circ \gamma_i^N\circ\left(\gamma_0^N\right)^{-1},
\end{equation}
where ${\psi}_i^N$ and $\bar{\psi}_i^N$ were defined at \eqref{psiindef} and \eqref{barpsiindef}, respectively, and we define
\begin{equation}\label{ooo1}
\gamma^N_I(z,t):=z,
\end{equation}
and iteratively,
\begin{equation}\label{ooo2}
\gamma_i^N(z,t):=z+\sum_{j>i}2(j-i)\bar{\psi}_j^N\left(\gamma_j^N(z,t),t\right),\qquad i=I-1,I-2,\dots,0,
\end{equation}
with the inverse at \eqref{ooo} being the right-continuous inverse of $\gamma_0^N$ taken with respect to the $z$ variable (see \eqref{rcinverse} for details). Again, for simplicity of discussion, let us restrict to the case $N=1$ and $t=0$, and drop $t$ from the notation. Up to the $m$th non-zero record, we have $m=\gamma_I^1(m)$ $I$-slots, into which we insert $\bar{\psi}_I^1\circ\gamma_I^1(m)$ size $I$ solitons. Proceeding inductively, for $i<I$, since each soliton of size $j$ with $j>i$ admits $2(j-i)$ $i$-slots, the number of $i$-slots corresponding to the part of the configuration up to the $m$th non-zero record is given by $\gamma^1_i(m)$, and the spatial reconstruction algorithm gives that we insert into these $\bar{\psi}_i^1\circ\gamma^1_i(m)$ size $i$ solitons. Once we have inserted solitons of all sizes down to size 1, the total length of the configuration up to the $m$th non-zero record is given by $\gamma_0^1(m)$. Thus, counting size $i$ solitons according to their spatial location, or position in the slot decomposition, gives that
\begin{equation}\label{ooo3}
\psi_i^1\circ\gamma_0^1(m)=\bar{\psi}_i^1\circ\gamma_i^1(m),
\end{equation}
and applying the inverse of $\gamma_0^1$ to both sides yields \eqref{ooo}.

\subsection{Box-ball system dynamics} The key to the results of this paper is that not only can we identify solitons in the configuration, but that the dynamics on the slot decomposition corresponding to the evolution of the BBS are extremely simple. In particular, for $k\in\mathbb{Z}_+$, we introduce a shift map $\tilde{\theta}_k$ on $\bar{\Omega}$ analogous to $\theta_t$, as defined at \eqref{thetatdef}, by setting
\[\tilde{\theta}_k(\zeta_i(m))_{i,m\in\mathbb{N}}=(\zeta_i(m-ik))_{i,m\in\mathbb{N}},\]
where by convention we suppose that $\zeta_i(m-ik):=0$ for $m-ik\leq 0$. It is easy to see that $\tilde{\theta}_k(\bar{\Omega})\subseteq \bar{\Omega}$ for any $k\in\mathbb{Z}_+$, and so the space of slot decompositions is preserved by these shift mappings. Moreover, we have the following consequence of \cite[Theorem 2.1]{Ferrari}, which will ultimately lead to the flow equation for the limiting model, as given at \eqref{BBSflow}. NB.\ The result of \cite{Ferrari} is more complicated than we state here due to the more general bi-infinite configurations considered in the latter article, for which it is necessary to be careful about the solitons crossing the origin.

\begin{prop}[{cf.\ \cite[Theorem 2.1]{Ferrari}}] \label{pfres}
For $\eta\in\Omega$ and $k\in\mathbb{Z}_+$,
\[T^k\eta=\tilde{\Upsilon}^{-1}\circ\tilde{\theta}_k\circ\tilde{\Upsilon}( \eta).\]
In particular, $T^k(\Omega)\subseteq\Omega$ for any $k\in\mathbb{Z}_+$.
\end{prop}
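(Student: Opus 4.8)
The plan is to reduce everything to the single-step identity $T=\tilde\Upsilon^{-1}\circ\tilde\theta_1\circ\tilde\Upsilon$ and then bootstrap. Granting this base case, the ``in particular'' assertion is immediate: since $\tilde\theta_1(\bar\Omega)\subseteq\bar\Omega$ and $\tilde\Upsilon:\Omega\to\bar\Omega$ is a bijection with inverse $\tilde\Gamma=\tilde\Upsilon^{-1}$ (Proposition~\ref{gyu}), we get $T\eta=\tilde\Upsilon^{-1}\bigl(\tilde\theta_1(\tilde\Upsilon(\eta))\bigr)\in\tilde\Upsilon^{-1}(\bar\Omega)=\Omega$, so $T(\Omega)\subseteq\Omega$. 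The identity for general $k$ then follows by induction, using the semigroup property $\tilde\theta_1\circ\tilde\theta_{k-1}=\tilde\theta_k$ (clear from the defining formula $\zeta_i(m)\mapsto\zeta_i(m-i)$ for $\tilde\theta_1$): assuming $T^{k-1}\eta=\tilde\Upsilon^{-1}\circ\tilde\theta_{k-1}\circ\tilde\Upsilon(\eta)\in\Omega$, we have $\tilde\Upsilon(T^{k-1}\eta)=\tilde\theta_{k-1}(\tilde\Upsilon(\eta))$, whence applying the base case to $T^{k-1}\eta$ gives $T^k\eta=\tilde\Upsilon^{-1}\bigl(\tilde\theta_1(\tilde\theta_{k-1}(\tilde\Upsilon(\eta)))\bigr)=\tilde\Upsilon^{-1}\bigl(\tilde\theta_k(\tilde\Upsilon(\eta))\bigr)$.

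For the base case itself, I would reduce to \cite[Theorem 2.1]{Ferrari} by embedding $\eta\in\Omega$ into the bi-infinite configuration $\hat\eta=(\hat\eta(x))_{x\in\Z}$ defined by $\hat\eta(x):=0$ for $x\leq 0$ and $\hat\eta(x):=\eta(x)$ for $x\geq 1$. Three compatibilities then need to be checked. First, the dynamics agree on the right: a carrier arriving at $0$ through the empty left half-line carries no balls, which matches the initialization $W_0=0$ built into \eqref{carrierdef}, so that $T\hat\eta(x)=T\eta(x)$ for all $x\geq 1$, while $T\hat\eta(x)=0$ for $x\leq 0$. Second, the soliton and slot structures agree on the right: because the left half-line is empty, every $x\leq 0$ is a record, in particular $0$ is a record and no soliton of $\hat\eta$ straddles the origin; anchoring the slot count at the record $0$ (consistent with $S_i(0)=1$ in Subsection~\ref{soldecompsec}) makes the left records correspond to non-positive slot indices carrying no solitons, so that the two-sided slot decomposition of $\hat\eta$ restricted to indices $m\geq 1$ coincides with $\tilde\Upsilon(\eta)$. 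Third, Ferrari's two-sided shift restricts to $\tilde\theta_1$: for $m\geq 1$ the shifted value is the old value at index $m-i$, which is a genuine positive-index value when $m-i\geq1$ and is zero when $m-i\leq 0$, exactly matching the convention $\zeta_i(m-i):=0$ for $m-i\leq 0$ used to define $\tilde\theta_1$ (the incoming zeros faithfully recording that no solitons enter from the empty left).

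Combining these three points, \cite[Theorem 2.1]{Ferrari} applied to $\hat\eta$ yields $\tilde\Upsilon(T\eta)=\tilde\theta_1(\tilde\Upsilon(\eta))$, the base case. The main obstacle is precisely the bookkeeping at the origin in the middle step: one must confirm that the empty-left embedding neither creates spurious solitons nor perturbs the slot indexing of the genuine (right-hand) part of the configuration, and that the origin remains a clean record both before and after the carrier pass, so that the delicate ``soliton crossing the origin'' analysis of \cite{Ferrari} is entirely vacuous here. Once this is verified the reduction is clean, the one-sided setting being strictly simpler than the bi-infinite one treated in \cite{Ferrari}. An alternative, more self-contained route would be to re-run Ferrari's local combinatorial analysis of the carrier's action directly: a size $i$ soliton advances by $v_i=i$ records in isolation, and the forward/backward phase shifts of $\pm 2j$ incurred during interactions with size $j$ solitons cancel in the $i$-slot coordinates, leaving a net advance of exactly $i$ per step. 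Verifying this cancellation is the heart of the matter either way.
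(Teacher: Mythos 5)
Your proposal is correct and takes essentially the same route as the paper: the paper gives no proof at all, simply asserting the proposition as a consequence of \cite[Theorem 2.1]{Ferrari} with a remark that the one-sided setting avoids the origin-crossing complications of the bi-infinite case. Your empty-left embedding, the three compatibility checks at the origin, and the induction on $k$ via $\tilde\theta_1\circ\tilde\theta_{k-1}=\tilde\theta_k$ are precisely the details the authors leave implicit, and they are carried out correctly.
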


\section{Continuous soliton decomposition and dynamics}\label{contsec}

Using the slot decomposition and spatial reconstruction operators of the previous section as our inspiration, in this section we introduce our continuous state-space model. The definitions of the continuous soliton decomposition and spatial reconstruction appear in Subsections \ref{31} and \ref{32}, respectively, and we check that they are inverses of each other on an appropriate domain in Subsection \ref{33}. Moreover, in Subsection \ref{34}, we check that the domain $\mathcal{D}$ introduced at \eqref{ddef} is preserved by the flow at \eqref{BBSflow}. Throughout we fix $I\in\mathbb{N}$.

\subsection{Soliton decomposition operator}\label{31}

Our continuous soliton decomposition will be defined on the set
\begin{equation}\label{fdef}
\mathcal{F}:=\left\{(\psi_i)_{i=1}^I\in\mathcal{C}^I:\phi_I\in \mathcal{C}^{\uparrow}\right\},
\end{equation}
where we recall the definitions of $\mathcal{C}$, $\mathcal{C}^\uparrow$ and $\phi_i$ from \eqref{cdef}, \eqref{cupdef} and \eqref{phiidef}, respectively. As commented in the introduction, we will think of $\psi_i=(\psi_i(u))_{u\in\mathbb{R}_+}$ as representing the integrated density of size $i$ solitons in space, as indexed by $u\in\mathbb{R}_+$, cf.\ the discrete definition of $\psi_i^N$ at \eqref{psiindef}. Although we only assume that $\phi_I\in\mathcal{C}^\uparrow$, the following lemma shows that this is equivalent to assuming that $\phi_i\in\mathcal{C}^\uparrow$ for every $i\in\{1,2,\dots,I\}$.

\begin{lem}\label{philem}
Let $\psi\in \mathcal{F}$. For $i=1,2,\dots,I$, it holds that $\phi_i\in\mathcal{C}^\uparrow$.
\end{lem}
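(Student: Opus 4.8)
The plan is to show that the hypothesis $\phi_I \in \mathcal{C}^\uparrow$ propagates down to each $\phi_i$ by exploiting the monotonicity structure relating consecutive $\phi_i$'s. The key observation is that, for $i < I$, the definition \eqref{phiidef} gives
\begin{equation*}
\phi_i(u) - \phi_{i+1}(u) = \sum_{j=1}^I 2\bigl((i+1)\wedge j - i\wedge j\bigr)\psi_j(u) = \sum_{j>i} 2\,\psi_j(u),
\end{equation*}
since $(i+1)\wedge j - i\wedge j$ equals $1$ when $j > i$ and $0$ otherwise. Because each $\psi_j \in \mathcal{C}$ is non-decreasing, the right-hand side is non-decreasing in $u$. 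Hence $\phi_i = \phi_{i+1} + \bigl(\phi_i - \phi_{i+1}\bigr)$ is a sum of a function in $\mathcal{C}^\uparrow$ and a non-decreasing function, which allows me to transfer the required properties upward from $\phi_{i+1}$ to $\phi_i$.

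First I would verify the three defining properties of $\mathcal{C}^\uparrow$ for each $\phi_i$ by downward induction on $i$, starting from the base case $\phi_I \in \mathcal{C}^\uparrow$ assumed in the definition \eqref{fdef} of $\mathcal{F}$. For the inductive step, suppose $\phi_{i+1} \in \mathcal{C}^\uparrow$. Continuity and the value $\phi_i(0)=0$ are immediate from $\psi_j \in \mathcal{C}$, so $\phi_i \in \mathcal{C}$. For strict monotonicity, I would write, for $u_1 < u_2$,
\begin{equation*}
\phi_i(u_2) - \phi_i(u_1) = \bigl(\phi_{i+1}(u_2) - \phi_{i+1}(u_1)\bigr) + \sum_{j>i} 2\bigl(\psi_j(u_2) - \psi_j(u_1)\bigr),
\end{equation*}
where the first bracket is strictly positive since $\phi_{i+1}$ is strictly increasing, and the sum is non-negative since each $\psi_j$ is non-decreasing; hence the difference is strictly positive. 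Finally, for the divergence $\lim_{u\to\infty}\phi_i(u) = \infty$, I would use $\phi_i \geq \phi_{i+1}$ (the difference being a non-negative non-decreasing function vanishing at $0$), so that $\phi_i(u) \geq \phi_{i+1}(u) \to \infty$. This completes the induction and yields $\phi_i \in \mathcal{C}^\uparrow$ for all $i$.

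I do not anticipate a serious obstacle here: the result is essentially an algebraic consequence of the telescoping identity for $\phi_i - \phi_{i+1}$ together with the non-decreasing property of the $\psi_j$. The only point requiring a little care is to confirm that $(i+1)\wedge j - i\wedge j \in \{0,1\}$ with the value $1$ exactly when $j > i$, so that the difference $\phi_i - \phi_{i+1}$ picks up precisely the non-negative contribution $\sum_{j>i} 2\psi_j$; this is a routine check on the minimum function. Everything else follows from the closure of the non-decreasing, continuous functions vanishing at the origin under addition, and from the fact that adding a non-negative non-decreasing function to an element of $\mathcal{C}^\uparrow$ preserves membership in $\mathcal{C}^\uparrow$.
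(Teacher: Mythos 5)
Your proof is correct and is essentially the same argument as the paper's: both rest on the fact that the increments of $\phi_i$ dominate those of $\phi_I$ because the subtracted weights satisfy $2(i\wedge j)\le 2(I\wedge j)$ and each $\psi_j$ is non-decreasing. The paper just writes the single inequality $\phi_i(u_1)-\phi_i(u_2)\geq \phi_I(u_1)-\phi_I(u_2)$ directly instead of inducting through the telescoping differences $\phi_i-\phi_{i+1}=\sum_{j>i}2\psi_j$.
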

\begin{proof} By definition, $\phi_i$ is clearly continuous and satisfies $\phi_i(0)=0$. For any $u_1\geq u_2$, it also holds that $\phi_i(u_1)-\phi_i(u_2)\geq \phi_I(u_1)-\phi_I(u_2)$. Since $\phi_I\in\mathcal{C}^\uparrow$ by assumption, the result follows.
\end{proof}

By analogy with the discussion at the end of Subsection \ref{soldecompsec}, and \eqref{noslots} in particular, we think of $\phi_i(u)$ as the integrated $i$-slot density up to spatial location $u$; that is, the effective distance from 0 to $u$ in terms of $i$-slots. The previous lemma shows that this change of scale from space to slot location is smooth in the sense that, for each $i$, $\phi_i:\mathbb{R}_+\rightarrow\mathbb{R}_+$ is a homeomorphism, with inverse satisfying $\phi_i^{-1}\in\mathcal{C}^\uparrow$. This basic property allows us to define the integrated density of solitons on their effective scale, as per the following definition, cf.\ \eqref{psiin}. The map $\Upsilon$ will play the role that $\tilde{\Upsilon}$, as defined at \eqref{upsdef}, did in the discrete case. In the language of integrable systems, we might think of $\Upsilon$ as the `scattering map' in the present context.

\begin{df} For $\psi\in \mathcal{F}$, define $\Upsilon(\psi):=\bar{\psi}$, where  $\bar{\psi}=(\bar{\psi}_i)_{i=1}^I$ is given by
\[\bar{\psi}_i:=\psi_i\circ\phi_i^{-1}.\]
We say that $\Upsilon(\psi)$ is the \emph{effective scaling} of $\psi$.
\end{df}

To complete the subsection, we make the following observation about the image of the map $\Upsilon$.

\begin{prop}\label{into} It holds that $\Upsilon(\mathcal{F})\subseteq \mathcal{C}^I$.
\end{prop}
\begin{proof} Let $\psi\in\mathcal{F}$, and set $\bar{\psi}:=\Upsilon(\psi)$. As an easy consequence of Lemma \ref{philem}, we have that $\phi_i^{-1}\in\mathcal{C}^\uparrow$. Hence, since $\psi_i\in\mathcal{C}$, we obtain that $\bar{\psi}_i=\psi_i\circ\phi_i^{-1}\in\mathcal{C}$.
\end{proof}

\subsection{Spatial reconstruction operator}\label{32}

Towards inverting the effective scaling map $\Upsilon$, we introduce maps $(\gamma_i)_{i=0}^I$ that will correspond to the discrete versions of \eqref{ooo1} and \eqref{ooo2}. In particular, for $\bar{\psi}\in \mathcal{C}^I$, we define $\gamma_I(z):=z$, and iteratively,
\[\gamma_i(z):=z+\sum_{j>i}2\left(j-i\right)\bar{\psi}_j\circ \gamma_j(z),\qquad i=I-1,I-2,\dots,0.\]
We have the following basic property of these maps.

\begin{lem}\label{ttt1}
Let $\bar{\psi}\in \mathcal{C}^I$. For $i=0,1,\dots,I$, it holds that $\gamma_i\in\mathcal{C}^\uparrow$.
\end{lem}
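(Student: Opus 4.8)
The plan is to argue by downward induction on $i$, from $i=I$ down to $i=0$, establishing at each stage that $\gamma_i\in\mathcal{C}^\uparrow$. This matches the recursive structure of the definition: $\gamma_i$ is assembled from the identity map $z$ together with the functions $\gamma_j$ for $j>i$, which the induction will already have placed in $\mathcal{C}^\uparrow$. The base case $i=I$ is immediate, since $\gamma_I(z)=z$ is the identity on $\mathbb{R}_+$, which is continuous, vanishes at $0$, is strictly increasing, and diverges to $\infty$; hence $\gamma_I\in\mathcal{C}^\uparrow$.

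For the inductive step, I would fix $i<I$ and assume $\gamma_j\in\mathcal{C}^\uparrow$ for every $j$ with $i<j\leq I$. The first point to verify is that each summand $\bar{\psi}_j\circ\gamma_j$ lies in $\mathcal{C}$. This follows by composing the hypotheses: $\bar{\psi}_j\in\mathcal{C}$ by assumption on $\bar\psi$, and $\gamma_j\in\mathcal{C}^\uparrow$ by the inductive hypothesis, so the composition is a continuous map $\mathbb{R}_+\to\mathbb{R}_+$; it is non-decreasing, being the composite of a non-decreasing function with a strictly increasing one; and it vanishes at $0$, since $\bar{\psi}_j(\gamma_j(0))=\bar{\psi}_j(0)=0$. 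Because the coefficients $2(j-i)$ are strictly positive for $j>i$ and the sum over $j$ is finite, $\gamma_i$ is the identity map plus a non-negative finite linear combination of functions in $\mathcal{C}$.

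From this decomposition the five defining properties of $\mathcal{C}^\uparrow$ follow quickly. Continuity is inherited from a finite sum of continuous functions; non-negativity and $\gamma_i(0)=0$ hold because $z\geq 0$ with $z=0$ at the origin and each summand is non-negative and vanishes at $0$. For strict monotonicity one writes, for $z_1>z_2$, $\gamma_i(z_1)-\gamma_i(z_2)\geq z_1-z_2>0$, the inequality coming from the fact that the summed terms are non-decreasing; and divergence follows from $\gamma_i(z)\geq z\to\infty$. I would package the last two estimates together, since both rest on the same observation.

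The point deserving the most care—though it is not a genuine obstacle—is precisely that strict monotonicity and divergence to infinity are \emph{not} inherited from the summands: the functions $\bar{\psi}_j$ are only assumed non-decreasing and could well be bounded or locally constant, so the composed terms alone would not force strict increase. These two properties must therefore be extracted entirely from the leading identity term $z$, with the remaining terms contributing only the non-negative, non-decreasing increments needed to keep $\gamma_i$ in $\mathcal{C}$. Isolating the role of the identity term in this way is the crux of the verification; everything else is routine closure of $\mathcal{C}$ under composition and non-negative finite sums.
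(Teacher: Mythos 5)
Your proof is correct and follows the only natural route: the paper itself dismisses this lemma as ``a straightforward consequence of the definition,'' and your downward induction with the observation that strict monotonicity and divergence come from the leading identity term $z$ (not from the summands $\bar{\psi}_j\circ\gamma_j$, which are merely non-decreasing) is exactly the verification being left implicit. Nothing is missing.
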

\begin{proof} The result is a straightforward consequence of the definition.
\end{proof}

As a result of this lemma, we have that $\gamma_0^{-1}\in \mathcal{C}^\uparrow$ for any $\bar{\psi}\in \mathcal{C}^I$. This allows us to formulate the following definition, which is based on the corresponding connection between the slot decomposition and its spatial reconstruction for the discrete system, see \eqref{ooo}. As such, to again apply a term from integrable systems, we might think of the map that is introduced as the `inverse scattering map'. That it is indeed the inverse of the scattering map $\Upsilon$ is checked below in Proposition \ref{upsinv}.

\begin{df}
For $\bar{\psi}\in \mathcal{C}^I$, define  $\Gamma(\bar{\psi}):=\tilde{\psi}$, where  $\tilde{\psi}=(\tilde{\psi}_i)_{i=1}^I$ is given by
\[\tilde{\psi}_i:=\bar{\psi}_i\circ\gamma_i\circ\gamma_0^{-1}.\]
We say that $\Gamma(\bar{\psi})$ is the \emph{spatial reconstruction} of $\bar{\psi}$.
\end{df}

Concerning the image of $\Gamma$, we have the following result, for which we recall the definition of $\mathcal{F}$ from \eqref{fdef}.

\begin{prop}\label{into2} It holds that $\Gamma(\mathcal{C}^I)\subseteq\mathcal{F}$.
\end{prop}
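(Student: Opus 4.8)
The plan is to verify, for an arbitrary $\bar{\psi}\in\mathcal{C}^I$, the two defining conditions of $\mathcal{F}$ for its spatial reconstruction $\tilde{\psi}:=\Gamma(\bar{\psi})$: first that $\tilde{\psi}\in\mathcal{C}^I$, and second that $\tilde{\phi}_I\in\mathcal{C}^\uparrow$, where $\tilde{\phi}_i(u):=u-\sum_{j=1}^I2(i\wedge j)\tilde{\psi}_j(u)$ is the map associated to $\tilde{\psi}$ via \eqref{phiidef}.

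The first condition is routine. By Lemma \ref{ttt1}, each $\gamma_i\in\mathcal{C}^\uparrow$, and consequently $\gamma_0^{-1}\in\mathcal{C}^\uparrow$ as well. Since $\tilde{\psi}_i=\bar{\psi}_i\circ\gamma_i\circ\gamma_0^{-1}$ is a composition of continuous, non-decreasing functions, it is itself continuous and non-decreasing. Moreover $\gamma_0^{-1}(0)=0$ and $\gamma_i(0)=0$, so that $\tilde{\psi}_i(0)=\bar{\psi}_i(0)=0$, giving $\tilde{\psi}_i\in\mathcal{C}$ for each $i$.

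The crux is the second condition, and the key step will be to establish the identity $\tilde{\phi}_i\circ\gamma_0=\gamma_i$ for every $i$. To see this, I would compute $\tilde{\phi}_i(\gamma_0(z))$ by first substituting the definition of $\tilde{\psi}_j$ to obtain $\tilde{\psi}_j(\gamma_0(z))=\bar{\psi}_j(\gamma_j(z))$, and then inserting the defining expression $\gamma_0(z)=z+\sum_{j=1}^I2j\,\bar{\psi}_j(\gamma_j(z))$. After combining the two sums and using $j-(i\wedge j)=(j-i)^+$, the expression collapses to $z+\sum_{j>i}2(j-i)\bar{\psi}_j(\gamma_j(z))$, which is exactly $\gamma_i(z)$ by definition. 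Equivalently, this reads $\tilde{\phi}_i=\gamma_i\circ\gamma_0^{-1}$; it is the continuous-state counterpart of the slot-counting relations \eqref{noslots} and \eqref{ooo2} observed in the discrete setting.

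Granting this identity, the conclusion is immediate: taking $i=I$ yields $\tilde{\phi}_I=\gamma_I\circ\gamma_0^{-1}$, and since both $\gamma_I$ and $\gamma_0^{-1}$ lie in $\mathcal{C}^\uparrow$ by Lemma \ref{ttt1}, so does their composition. Together with $\tilde{\psi}\in\mathcal{C}^I$ from the first step, this establishes $\tilde{\psi}\in\mathcal{F}$, as required. I expect the only real obstacle to be the algebraic bookkeeping in the displayed identity; once the cancellation based on $j-(i\wedge j)=(j-i)^+$ is identified, the topological conclusions follow directly from Lemma \ref{ttt1} with no further work.
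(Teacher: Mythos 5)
Your proposal is correct and follows essentially the same route as the paper: the paper verifies $\tilde{\phi}_I\circ\gamma_0(z)=z$ directly (the case $i=I$, where the two sums cancel completely) and concludes $\tilde{\phi}_I=\gamma_0^{-1}\in\mathcal{C}^\uparrow$ via Lemma \ref{ttt1}, which is exactly your identity $\tilde{\phi}_i\circ\gamma_0=\gamma_i$ specialised to $i=I$. The more general identity you prove is the one the paper establishes separately in the proof of Proposition \ref{upsinv}, so your argument simply merges the two computations.
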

\begin{proof} Let $\bar{\psi}\in\mathcal{C}^I$, and set $\tilde{\psi}:=\Gamma(\bar{\psi})$. From Lemma \ref{ttt1}, it is clear that $\tilde{\psi}_i\in\mathcal{C}$ for each $i$. It remains to check that $\tilde{\phi}_I\in\mathcal{C}^{\uparrow}$, where
\[\tilde{\phi}_I(u):=u-\sum_{j=1}^I2j\tilde{\psi}_j(u).\]
We first observe that
\[\tilde{\phi}_I\circ\gamma_0(z)=z+\sum_{j=1}^I2 j\bar{\psi}_j\circ \gamma_j(z)-\sum_{j=1}^I2j\bar{\psi}_j\circ\gamma_j(z)=z.\]
Together with Lemma \ref{ttt1}, this implies that $\tilde{\phi}_I=\gamma_0^{-1}\in\mathcal{C}^\uparrow$, as desired.
\end{proof}

\subsection{Invertibility of soliton decomposition operator}\label{33}

Connecting the constructions of the previous two subsections, we have the following continuous analogue of Proposition \ref{gyu}.

\begin{prop}\label{upsinv} The map $\Upsilon$ is a bijection from $\mathcal{F}$ to $\mathcal{C}^I$, with inverse given by $\Gamma$.
\end{prop}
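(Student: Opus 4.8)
The plan is to show that the two compositions $\Upsilon\circ\Gamma$ and $\Gamma\circ\Upsilon$ are the identity maps on $\mathcal{C}^I$ and $\mathcal{F}$, respectively; since Propositions \ref{into} and \ref{into2} already guarantee that $\Upsilon$ and $\Gamma$ map into the stated codomains, this will suffice. The central tool, generalizing the computation in the proof of Proposition \ref{into2}, is the identity
\[\tilde{\phi}_i\circ\gamma_0=\gamma_i,\qquad i=1,2,\dots,I,\]
valid for any $\bar{\psi}\in\mathcal{C}^I$, where $\tilde{\psi}=\Gamma(\bar\psi)$ and $\tilde{\phi}_i$ is formed from $\tilde\psi$ as in \eqref{phiidef}. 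To see this, substitute $\tilde{\psi}_j=\bar{\psi}_j\circ\gamma_j\circ\gamma_0^{-1}$ into $\tilde\phi_i\circ\gamma_0$, use the defining recursion $\gamma_0(z)=z+\sum_{j}2j\,\bar\psi_j(\gamma_j(z))$ to rewrite $\gamma_0(z)$, and note that $j-(i\wedge j)=(j-i)\vee 0$ collapses the resulting sum to $z+\sum_{j>i}2(j-i)\bar\psi_j(\gamma_j(z))=\gamma_i(z)$. By Lemma \ref{ttt1} every $\gamma_i$ lies in $\mathcal{C}^\uparrow$, so $\tilde\phi_i=\gamma_i\circ\gamma_0^{-1}$ is a homeomorphism of $\mathbb{R}_+$.

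Given this identity, the direction $\Upsilon\circ\Gamma=\mathrm{id}_{\mathcal{C}^I}$ is immediate: for $\bar\psi\in\mathcal{C}^I$ we have $\tilde\phi_i^{-1}=\gamma_0\circ\gamma_i^{-1}$, whence
\[\left(\Upsilon(\tilde\psi)\right)_i=\tilde\psi_i\circ\tilde\phi_i^{-1}=\bar\psi_i\circ\gamma_i\circ\gamma_0^{-1}\circ\gamma_0\circ\gamma_i^{-1}=\bar\psi_i.\]
For the reverse direction $\Gamma\circ\Upsilon=\mathrm{id}_{\mathcal{F}}$, fix $\psi\in\mathcal{F}$ and set $\bar\psi:=\Upsilon(\psi)$, so that $\bar\psi_j=\psi_j\circ\phi_j^{-1}$; build the $\gamma_i$ from this $\bar\psi$ and write $\tilde\psi:=\Gamma(\bar\psi)$. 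By the identity above, $\tilde\psi_i=\bar\psi_i\circ\tilde\phi_i=\psi_i\circ\phi_i^{-1}\circ\tilde\phi_i$, so it is enough to prove $\tilde\phi_i=\phi_i$ for every $i$. Now both families solve, for each fixed $u\in\mathbb{R}_+$, the same pointwise system
\[x_i=u-\sum_{j=1}^I2(i\wedge j)\,\bar\psi_j(x_j),\qquad i=1,2,\dots,I.\]
Indeed, $\phi_i(u)=u-\sum_j2(i\wedge j)\psi_j(u)$ together with $\psi_j(u)=\bar\psi_j(\phi_j(u))$ shows that $(\phi_j(u))_j$ is a solution, while substituting $\tilde\psi_j=\bar\psi_j\circ\tilde\phi_j$ into $\tilde\phi_i(u)=u-\sum_j2(i\wedge j)\tilde\psi_j(u)$ shows that $(\tilde\phi_j(u))_j$ is a solution. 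Hence $\tilde\phi_i=\phi_i$ will follow from uniqueness of solutions of this system.

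I expect this uniqueness to be the only genuine obstacle, the remainder being algebraic bookkeeping with the recursion (the coupling through all the $\bar\psi_j$ prevents a naive downward induction on $i$ from closing). To settle it, let $(x_j)$ and $(x_j')$ be two solutions, put $d_j:=x_j-x_j'$ and $e_j:=\bar\psi_j(x_j)-\bar\psi_j(x_j')$. As each $\bar\psi_j$ is non-decreasing, $e_j=c_jd_j$ for some $c_j\geq 0$, so with $A:=(2(i\wedge j))_{i,j=1}^I$ and $C:=\mathrm{diag}(c_j)$ the subtracted equations read $d=-ACd$. The matrix $A$ is positive definite, since $\sum_{i,j}v_iv_j(i\wedge j)=\sum_{k=1}^I\big(\sum_{i\geq k}v_i\big)^2$, and in particular invertible; multiplying by $A^{-1}$ gives $A^{-1}d=-Cd$, so that $d^{\mathrm{T}}A^{-1}d=-\sum_jc_jd_j^2\leq 0$. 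Since $A^{-1}$ is also positive definite, this forces $d=0$, yielding uniqueness. Consequently $\tilde\phi_i=\phi_i$, hence $\tilde\psi_i=\psi_i$, and combining the two directions shows that $\Upsilon$ and $\Gamma$ are mutually inverse bijections.
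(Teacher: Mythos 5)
Your proof is correct, and for one of the two directions it takes a genuinely different route from the paper. The surjectivity direction ($\Upsilon\circ\Gamma=\mathrm{id}$ on $\mathcal{C}^I$) via the identity $\tilde{\phi}_i\circ\gamma_0=\gamma_i$ is exactly the paper's computation. For the other direction, however, the paper does not need your fixed-point uniqueness argument: it proves the identity $\gamma_i\circ\phi_I=\phi_i$ (equation \eqref{ttt}) by \emph{downward} induction on $i$, which closes precisely because $\gamma_i$ is defined recursively only through $\gamma_j$ with $j>i$ — so the inductive hypothesis converts $\bar{\psi}_j\circ\gamma_j\circ\phi_I$ into $\psi_j$ for all $j>i$, and the remaining algebra is the same cancellation you use elsewhere. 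Your remark that ``a naive downward induction on $i$ fails to close'' is true only if one inducts directly on the statement $\tilde\phi_i=\phi_i$; the paper's reformulation through $\gamma_i\circ\phi_I$ restores the triangular structure and avoids the issue. Your alternative — exhibiting $(\phi_j(u))_j$ and $(\tilde\phi_j(u))_j$ as solutions of the coupled system $x_i=u-\sum_j2(i\wedge j)\bar\psi_j(x_j)$ and proving uniqueness via the positive definiteness of the kernel matrix $\left(2(i\wedge j)\right)_{i,j}$ (using $\sum_{i,j}v_iv_j(i\wedge j)=\sum_{k=1}^I\bigl(\sum_{i\geq k}v_i\bigr)^2$ and the monotonicity of the $\bar\psi_j$) — is correct and somewhat heavier, but it has the virtue of not relying on the triangular recursion at all: it would apply verbatim to any symmetric positive-definite interaction kernel $\kappa_{ij}$ with non-decreasing $\bar\psi_j$, which dovetails with the generalization discussed in Remark \ref{follrem}. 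The paper's induction is the shorter and more elementary path for the BBS kernel itself.
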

\begin{proof} We start by checking that if $\psi\in\mathcal{F}$, then $\Gamma\circ\Upsilon(\psi)=\psi$. Let $\psi\in\mathcal{F}$, and set $\bar{\psi}:=\Upsilon(\psi)$. By Proposition \ref{into}, it holds that $\bar{\psi}\in\mathcal{C}^I$, and so $\tilde{\psi}:=\Gamma(\bar{\psi})$ is well-defined. We need to show that $\tilde{\psi}=\psi$. To this end, we first establish that
\begin{equation}\label{ttt}
\gamma_i\circ\phi_I=\phi_i,\qquad i=0,1,\dots,I,
\end{equation}
where $\phi_0$ is defined by setting $\phi_0(u):=u$. We will prove this by induction. Since $\gamma_I(u)=u$, the case $i=I$ is obvious. Suppose we have proven the equality for $I,I-1,\dots,i+1$. We then have that
\begin{align*}
\gamma_i\circ\phi_I(u)&=\phi_I(u)+\sum_{j>i}2(j-i)\bar{\psi}_j\circ\gamma_j\circ\phi_I(u)\\
&=u-\sum_{j}2j{\psi}_j(u)+\sum_{j>i}2(j-i){\psi}_j\circ\phi_j^{-1}\circ\gamma_j\circ\phi_I(u)\\
&=u-\sum_{j}2(j\wedge i){\psi}_j(u)\\
&=\phi_i(u),
\end{align*}
where to deduce the third equality we apply the inductive hypothesis. Thus we have established \eqref{ttt}, and applying this twice (once for $i$, and once for $0$), it follows that
\[\tilde{\psi}_i=\bar{\psi}_i\circ\gamma_i\circ\gamma_0^{-1}={\psi}_i\circ \phi_i^{-1}\circ\gamma_i\circ\gamma_0^{-1}=\psi_i\circ\phi_I^{-1}\circ\gamma_0^{-1}=\psi_i,\]
which gives the injectivity of $\Upsilon$.

To complete the proof, we will show that if $\bar{\psi}\in\mathcal{C}^I$, then $\Upsilon\circ\Gamma(\bar{\psi})=\bar{\psi}$. Let $\bar{\psi}\in\mathcal{C}^I$, and set $\tilde{\psi}:=\Gamma(\bar{\psi})$. By Proposition \ref{into2}, we know that $\tilde{\psi}\in\mathcal{F}$, and so $\hat{\psi}:=\Upsilon(\tilde{\psi})$ is well-defined. Toward showing that $\hat{\psi}=\bar{\psi}$, we start by checking that
\[\tilde{\phi}_i\circ\gamma_0=\gamma_i,\qquad i=0,1,\dots,I,\]
where $\tilde{\phi}_i$ is defined by setting
\[\tilde{\phi}_i(u):=u-\sum_{j=1}^I2(j\wedge i)\tilde{\psi}_j(u).\]
Indeed, as in the proof of Proposition \ref{into2}, we deduce that
\begin{align*}
\tilde{\phi}_i\circ\gamma_0(z)&=z+\sum_{j=1}^I2 j\bar{\psi}_j\circ \gamma_j(z)-\sum_{j=1}^I2(j\wedge i)\bar{\psi}_j\circ\gamma_j(z)\\
&=z+\sum_{j>i}2 (j-i)\bar{\psi}_j\circ \gamma_j(z)\\
&=\gamma_i(u).
\end{align*}
Hence
\[\hat{\psi}_i=\tilde{\psi}_i\circ\tilde{\phi}_i^{-1}=\bar{\psi}_i\circ\gamma_i\circ\gamma_0^{-1}\circ\tilde{\phi}_i^{-1}=\bar{\psi}_i,\]
and so we are done.
\end{proof}

\subsection{Time-preserved density condition}\label{34} Recall domain $\mathcal{D}$ from \eqref{ddef}, which we claimed represented a set of integrated densities satisfying a time-preserved density condition. In this subsection, we will show that the corresponding set of integrated densities on the effective scale is given by
\begin{equation}\label{dbardef}
\bar{\mathcal{D}}:=\left\{(\bar{\psi}_i)_{i=1}^I\in\mathcal{C}^I: \sum_{i=1}^I i \sup_{z_1,z_2}\frac{\bar{\psi}_i(z_1)-\bar{\psi}_i(z_2)}{z_1-z_2}<\frac{1}{2}\right\},
\end{equation}
and that this set is indeed preserved by the shift operator $\theta_t$ from \eqref{thetatdef}, which can be thought of as describing the `free evolution' of solitons. In particular, this implies that, for any initial condition in $\mathcal{D}$, the flow at \eqref{BBSflow} remains in $\mathcal{D}$ for all time.

\begin{prop}\label{tpprop}
(a) The map $\Upsilon$ is a bijection from $\mathcal{D}$ to $\bar{\mathcal{D}}$.\\
(b) It holds that $\theta_t(\bar{\mathcal{D}})\subseteq\bar{\mathcal{D}}$ for any $t\geq 0$.\\
(c) It holds that $\Upsilon^{-1}\circ\theta_t\circ\Upsilon(\mathcal{D})\subseteq{\mathcal{D}}$ for any $t\geq 0$.
\end{prop}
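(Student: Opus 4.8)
The plan is to establish (a) via a direct change-of-variables identity, to deduce (b) from a Lipschitz-monotonicity estimate on the free-evolution shift, and then to obtain (c) as an immediate corollary of (a) and (b).

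For part (a), I would exploit that, by Proposition \ref{upsinv}, $\Upsilon$ is already a bijection from $\mathcal{F}$ to $\mathcal{C}^I$. Since $\mathcal{D}\subseteq\mathcal{F}$ and $\bar{\mathcal{D}}\subseteq\mathcal{C}^I$, it suffices to show that for $\psi\in\mathcal{F}$ with $\bar{\psi}:=\Upsilon(\psi)$, the defining inequality of $\mathcal{D}$ (from \eqref{ddef}) holds if and only if that of $\bar{\mathcal{D}}$ (from \eqref{dbardef}) does. The key observation is that, by Lemma \ref{philem}, each $\phi_i$ is a homeomorphism of $\R_+$, so substituting $z_k=\phi_i(u_k)$ and using $\bar{\psi}_i\circ\phi_i=\psi_i$ gives
\[
\frac{\bar{\psi}_i(z_1)-\bar{\psi}_i(z_2)}{z_1-z_2}=\frac{\psi_i(u_1)-\psi_i(u_2)}{\phi_i(u_1)-\phi_i(u_2)}.
\]
As $(u_1,u_2)$ ranges over $\R_+^2$ in bijection with $(z_1,z_2)$, the two suprema agree term-by-term in $i$, so the weighted sums coincide and the $<\tfrac12$ condition transfers exactly. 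This yields $\psi\in\mathcal{D}\Leftrightarrow\bar{\psi}\in\bar{\mathcal{D}}$, and hence that $\Upsilon$ restricts to a bijection $\mathcal{D}\to\bar{\mathcal{D}}$ with $\Upsilon(\mathcal{D})=\bar{\mathcal{D}}$.

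For part (b), I would first check that $\theta_t\circ\bar{\psi}\in\mathcal{C}^I$: by \eqref{thetatdef} each component is a composition of the non-decreasing continuous $\bar{\psi}_i$ with the non-decreasing continuous shift $s\mapsto(s-it)\vee 0$, and it vanishes at $z=0$ since $\bar{\psi}_i(0)=0$. For the density condition, the crucial point is that this shift is $1$-Lipschitz and non-decreasing: writing $w_k:=(z_k-it)\vee 0$ for $z_1>z_2$, one has $0\le w_1-w_2\le z_1-z_2$, so, using that $\bar{\psi}_i$ is non-decreasing,
\[
\frac{(\theta_t\circ\bar{\psi})_i(z_1)-(\theta_t\circ\bar{\psi})_i(z_2)}{z_1-z_2}=\frac{\bar{\psi}_i(w_1)-\bar{\psi}_i(w_2)}{z_1-z_2}\le\frac{\bar{\psi}_i(w_1)-\bar{\psi}_i(w_2)}{w_1-w_2},
\]
with the ratio interpreted as $0$ when $w_1=w_2$. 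Taking suprema, the $i$-th slope for $\theta_t\circ\bar{\psi}$ is bounded above by the corresponding slope for $\bar{\psi}$, so the weighted sum can only decrease and remains $<\tfrac12$; thus $\theta_t\circ\bar{\psi}\in\bar{\mathcal{D}}$.

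Finally, part (c) is immediate: by (a) we have $\Upsilon(\mathcal{D})=\bar{\mathcal{D}}$ and $\Upsilon^{-1}(\bar{\mathcal{D}})=\mathcal{D}$, and by (b) we have $\theta_t(\bar{\mathcal{D}})\subseteq\bar{\mathcal{D}}$, whence $\Upsilon^{-1}\circ\theta_t\circ\Upsilon(\mathcal{D})\subseteq\Upsilon^{-1}(\bar{\mathcal{D}})=\mathcal{D}$. None of the three parts presents a serious obstacle; the only points requiring care are the bookkeeping in (a)—verifying that the change of variables $z=\phi_i(u)$ identifies the two suprema exactly, which relies on $\phi_i$ being a bijection of $\R_+$ (supplied by Lemma \ref{philem})—and the degenerate case $w_1=w_2$ in (b).
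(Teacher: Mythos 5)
Your proposal is correct and follows essentially the same route as the paper: part (a) via the change of variables $z_k=\phi_i(u_k)$ identifying the two suprema, part (b) via the observation that the shift $z\mapsto(z-it)\vee 0$ is non-decreasing and $1$-Lipschitz so the slope suprema do not increase, and part (c) as a formal consequence of (a) and (b). Your explicit handling of the degenerate case $w_1=w_2$ and the check that $\theta_t\circ\bar{\psi}\in\mathcal{C}^I$ are minor refinements of details the paper leaves implicit.
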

\begin{proof} Let $\psi\in\mathcal{D}$, and set $\bar{\psi}:=\Upsilon(\psi)$. From Lemma \ref{into2}, we know that $\bar{\psi}\in\mathcal{C}^{I}$. Moreover, since $\phi_i\in\mathcal{C}^\uparrow$ for each $i$, it holds that
\begin{align*}
\sum_{i=1}^I i \sup_{z_1,z_2}\frac{\bar{\psi}_i(z_1)-\bar{\psi}_i(z_2)}{z_1-z_2}&=
\sum_{i=1}^I i \sup_{u_1,u_2}\frac{\bar{\psi}_i\circ\phi_i(u_1)-\bar{\psi}_i\circ\phi_i(u_2)}{\phi_i(u_1)-\phi_i(u_2)}\\
&=\sum_{i=1}^I i \sup_{u_1,u_2}\frac{{\psi}_i(u_1)-{\psi}_i(u_2)}{\phi_i(u_1)-\phi_i(u_2)}\\
&<\frac{1}{2},
\end{align*}
and so $\bar{\psi}\in\bar{\mathcal{D}}$. The converse (i.e.\ $\bar{\psi}\in\bar{\mathcal{D}}$ implies $\Gamma(\bar{\psi})\in\mathcal{D}$) is proved in the same way. This establishes (a).

For part (b), let $\bar{\psi}\in\bar{\mathcal{D}}$. We then have that, for any $t\geq 0$,
\begin{align*}
\lefteqn{\sum_{i=1}^I i \sup_{z_1,z_2}\frac{(\theta_t\circ\bar{\psi})_i(z_1)-(\theta_t\circ\bar{\psi})_i(z_2)}{z_1-z_2}}\\
&=\sum_{i=1}^I i \sup_{z_1,z_2}\frac{\bar{\psi}_i\left((z_1-it)\vee0\right)-\bar{\psi}_i\left((z_2-it)\vee0\right)}{z_1-z_2}\\
&=\sum_{i=1}^I i \sup_{z_1,z_2}\frac{\bar{\psi}_i(z_1)-\bar{\psi}_i(z_2)}{z_1-z_2}\\
&<\frac{1}{2},
\end{align*}
which means that $\theta_t(\bar{\psi})\in\bar{\mathcal{D}}$.

Finally, for part (c), we note that, for any $t\geq 0$,
\[\Upsilon^{-1}\circ\theta_t\circ\Upsilon(\mathcal{D})=\Upsilon^{-1}\circ\theta_t(\bar{\mathcal{D}})\subseteq\Upsilon^{-1}(\bar{\mathcal{D}})=\mathcal{D},\]
and thus the proof is complete.
\end{proof}

\section{Generalized hydrodynamic limit for integrated density}\label{hdlsec}

With both the discrete and continuous soliton decompositions now introduced, we can proceed to establishing the generalized hydrodynamic limit for integrated soliton densities of Theorem \ref{mainthm}. The key to the proof of this result is the following proposition, which essentially shows that convergence under scaling of soliton densities in space is equivalent to the corresponding convergence with respect to effective distances. Note that, for any non-decreasing, right-continuous $f:\mathbb{R}_+\rightarrow\mathbb{R}_+$, we define the right-continuous inverse $f^{-1}:\mathbb{R}_+\rightarrow[0,\infty]$ by setting
\begin{equation}\label{rcinverse}
f^{-1}(z):=\inf\left\{u:\:f(u)> z\right\},
\end{equation}
with the convention that $\inf\emptyset=\infty$.

\begin{prop}\label{gobet} For each $N\in\mathbb{N}$, suppose $\eta$ is a random configuration with distribution $\mathbf{P}_N$ supported on $\Omega_I$ for some $I\in \mathbb{N}$ (independent of $N$). Set $\psi_i^N:=\psi_i^N(\cdot,0)$ and $\bar{\psi}_i^N:=\bar{\psi}_i^N(\cdot,0)$, as per the definitions at \eqref{psiindef} and \eqref{barpsiindef}. Moreover, let $\psi\in\mathcal{D}$, and define $\bar{\psi}:=\Upsilon(\psi)$. The following two conditions are equivalent.\\
(a) For every $\varepsilon>0$ and $u_0\in(0,\infty)$,
\begin{equation}\label{a}
\lim_{N\rightarrow\infty}\mathbf{P}_N\left(\sup_{i\in\{1,2,\dots,I\}}\sup_{u\leq u_0}\left|\psi_i^N(u)-\psi_i(u)\right|>\varepsilon\right)=0.
\end{equation}
(b) For every $\varepsilon>0$ and $z_0\in(0,\infty)$,
\begin{equation}\label{b}
\lim_{N\rightarrow\infty}\mathbf{P}_N\left(\sup_{i\in\{1,2,\dots,I\}}\sup_{z\leq z_0}\left|\bar{\psi}_i^N(z)-\bar{\psi}_i(z)\right|>\varepsilon\right)=0.
\end{equation}
\end{prop}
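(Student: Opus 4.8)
\emph{Proof strategy.} The plan is to make rigorous the two approximate identities sketched in Section~\ref{discsec}, namely the relation $\bar\psi_i^N\approx\psi_i^N\circ(\phi_i^N)^{-1}$ at \eqref{psiin} and its inverse $\psi_i^N\approx\bar\psi_i^N\circ\gamma_i^N\circ(\gamma_0^N)^{-1}$ at \eqref{ooo}, and then to transfer convergence between the two coordinate systems by passing these relations to the limit. The soft analytic engine will be a deterministic lemma: if $f_N,f:\R_+\to\R_+$ are non-decreasing with $f\in\mathcal{C}^\uparrow$, and $f_N\to f$ uniformly on compacts, then the right-continuous inverses of \eqref{rcinverse} satisfy $f_N^{-1}\to f^{-1}$ uniformly on compacts, and moreover $g_N\circ f_N^{-1}\to g\circ f^{-1}$ (respectively $g_N\circ f_N\to g\circ f$) uniformly on compacts whenever $g_N\to g$ uniformly on compacts with $g$ continuous. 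Because every relation in Section~\ref{discsec} holds pathwise for each $\eta\in\Omega_I$ and this lemma is deterministic, the whole argument reduces to controlling sup-norms on the complement of small-probability events, so the convergence-in-probability statements \eqref{a} and \eqref{b} follow from their deterministic, uniform-on-compacts counterparts.

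The first, and I expect hardest, step is to upgrade \eqref{psiin} and \eqref{ooo} to genuine bounds. Concretely, I would show that there is a constant $C=C(I)$, independent of $\eta\in\Omega_I$ and of $N$, such that for every $i$,
\begin{equation*}
\sup_{u}\left|\tfrac{1}{N}S_i(\lfloor Nu\rfloor)-\phi_i^N(u)\right|\le\frac{C}{N},
\qquad
\sup_{u}\left|\psi_i^N(u)-\bar\psi_i^N\!\left(\tfrac{1}{N}S_i(\lfloor Nu\rfloor)\right)\right|\le\frac{C}{N}.
\end{equation*}
Both bounds come from the combinatorics of the decomposition of Subsection~\ref{soldecompsec}: every lattice site is a record or sits in a unique soliton, the number of $i$-slots inside a size-$j$ soliton is $2(j-i)$ when $j>i$, and the only discrepancies between counting solitons by starting position and by slot index arise from the at most $I$ nested solitons straddling the right endpoint $\lfloor Nu\rfloor$, each contributing at most $2I$ misplaced sites. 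Crucially, finiteness of $I$ makes these boundary errors $O(I^2)$ uniformly in $x$, hence $O(1/N)$ after the division by $N$. Combining the two displays gives the precise form of \eqref{psiin}, namely $\sup_u|\bar\psi_i^N\circ\phi_i^N(u)-\psi_i^N(u)|\le C/N$, and an analogous pathwise argument running the spatial reconstruction of Subsection~\ref{srsubsec} yields the precise form of \eqref{ooo}, together with the fact that the discrete maps $\gamma_i^N$ defined at \eqref{ooo1}--\eqref{ooo2} approximate their continuous counterparts.

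With this step in hand, the implication (a)$\Rightarrow$(b) goes as follows. Assume \eqref{a}. Since $\phi_i^N=\mathrm{id}-\sum_j2(i\wedge j)\psi_j^N$ and $\phi_i=\mathrm{id}-\sum_j2(i\wedge j)\psi_j$, convergence of the $\psi_j^N$ gives $\phi_i^N\to\phi_i$ uniformly on compacts; because $\psi\in\mathcal{D}\subseteq\mathcal{F}$, Lemma~\ref{philem} ensures $\phi_i\in\mathcal{C}^\uparrow$, so the deterministic lemma applies and yields $\psi_i^N\circ(\phi_i^N)^{-1}\to\psi_i\circ\phi_i^{-1}=\bar\psi_i$ uniformly on compacts. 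The precise version of \eqref{psiin} then upgrades this to $\bar\psi_i^N\to\bar\psi_i$, which is \eqref{b}. For the converse (b)$\Rightarrow$(a), assume \eqref{b}; an induction on $i=I,I-1,\dots,0$, using the composition part of the lemma at each step to pass to the limit in $\bar\psi_j^N\circ\gamma_j^N$, shows $\gamma_i^N\to\gamma_i$ uniformly on compacts, and then the precise version of \eqref{ooo} (using also $\gamma_0\in\mathcal{C}^\uparrow$ from Lemma~\ref{ttt1} for the inverse) gives $\psi_i^N\to\bar\psi_i\circ\gamma_i\circ\gamma_0^{-1}=(\Gamma(\bar\psi))_i$. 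By Proposition~\ref{upsinv}, $\Gamma(\bar\psi)=\Gamma\circ\Upsilon(\psi)=\psi$, so the limit is exactly $\psi_i$, giving \eqref{a}. The one point needing care throughout is matching the compact intervals in the two pictures: to control $\bar\psi_i^N$ on $[0,z_0]$ one must control $\psi_i^N$ on $[0,u_0]$ for $u_0$ with $\phi_i(u_0)>z_0$, which is possible precisely because each $\phi_i$ is a homeomorphism of $\R_+$. This, together with the uniform continuity of the limit maps on compacts, is what converts uniform-on-compacts convergence into the quantitative ``small on a large-probability event'' statements required by \eqref{a} and \eqref{b}.
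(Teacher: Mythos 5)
Your overall architecture --- make the approximate identities \eqref{psiin} and \eqref{ooo} precise, then transfer uniform convergence through compositions with $(\phi_i^N)^{-1}$ and $(\gamma_0^N)^{-1}$ via a deterministic lemma on monotone functions --- is exactly the paper's, and the first bound in your display, $\sup_u|N^{-1}S_i(\lfloor Nu\rfloor)-\phi_i^N(u)|\le C/N$, is correct (it is \eqref{star2} in the paper). The second bound, however, is false: there is no constant $C=C(I)$ with
\[
\sup_{u}\left|\psi_i^N(u)-\bar\psi_i^N\!\left(\tfrac{1}{N}S_i(\lfloor Nu\rfloor)\right)\right|\le\frac{C}{N}
\]
uniformly over $\eta\in\Omega_I$. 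The quantity $\bar\psi_i^N(\tfrac1N S_i(\lfloor Nu\rfloor))$ counts \emph{all} size-$i$ solitons whose first site $x$ satisfies $S_i(x)\le S_i(\lfloor Nu\rfloor)$, and a single $i$-slot can contain arbitrarily many size-$i$ solitons: for instance, a run of $k$ consecutive strings $(1,0)$ placed immediately after site $\lfloor Nu\rfloor$ consists of $k$ size-$1$ solitons none of whose sites is a $1$-slot, so all $k$ share the slot index $S_1(\lfloor Nu\rfloor)$ and are counted on the right-hand side but not the left, producing a discrepancy of $k/N$ with $k$ unbounded over $\Omega_1$. The error is not a boundary effect of $O(I^2)$ misplaced sites; it is governed by how many solitons pile up before the next record, and that is not controlled deterministically.

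This is precisely why the paper's proof of (a)$\Rightarrow$(b) cannot avoid a probabilistic input: it uses hypothesis \eqref{a} together with the density condition built into $\mathcal{D}$ to show that, with probability tending to one, $\sum_j 2j(\psi_j^N(u+\delta)-\psi_j^N(u))\le\delta(1-\varepsilon)$ uniformly in $u\le u_0$, which guarantees a record (hence an $i$-slot) in every spatial window of length of order $\delta N$; this yields the sandwich $\psi_i^N((u-\delta)\vee0)\le\bar\psi_i^N\circ\phi_i^N(u)\le\psi_i^N(u+2\delta)$ only on a high-probability event, and the $\delta$-errors are then absorbed using the continuity of the limit $\psi_i$ rather than an $O(1/N)$ rate. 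Your treatment of (b)$\Rightarrow$(a) is essentially sound, since there the identity $\psi_i^N\circ\gamma_0^N=\bar\psi_i^N\circ\gamma_i^N$ is exact at lattice points (solitons never cross records, and $\gamma_0^N$ lands on records), but the symmetric $O(1/N)$ claim in the forward direction is a genuine gap, and repairing it requires exactly the density hypothesis your deterministic lemma was designed to sidestep.
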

\begin{proof} We start by checking that (a) implies (b). Let $Nu$ be an integer. Since each spatial site is either a record or contained in a soliton, it must hold that
\begin{equation}\label{star}
\left|Nu-R(Nu)-N\sum_{j=1}^I2j\psi_j^N(u)\right|\leq I(2I-1)\leq 2I^2,
\end{equation}
where we recall $R(x)$ is the number of records up to spatial location $x$, cf.\ \eqref{rapprox}. In particular, the bound comes from the fact that a size $j$ soliton that has its first site in $\{1,2,\dots,Nu\}$ can have at most $2j-1$ sites outside this interval. We similarly have that the number of $i$-slots up to spatial location $Nu$, that is $S_i(Nu)$, satisfies
\[\left|S_i(Nu)-R(Nu)-N\sum_{j>i}2(j-i)\psi_j^N(u)\right|\leq 2(I-1)^2.\]
Putting these bounds together gives
\begin{equation}\label{star2}
\left|S_i(Nu)-N\phi_i^N(u)\right|\leq 4I^2,
\end{equation}
where $\phi_i^N:=\phi_i^N(\cdot,0)$, with the latter function being defined at \eqref{phiindef}, cf.\ \eqref{noslots}. Now, applying \eqref{a} in conjunction with the density condition that holds on $\mathcal{D}$ allows it to be deduced that there exists an $\varepsilon>0$ such that, for $u_0\in(0,\infty)$ and $\delta\in(0,1)$,
\[\lim_{N\rightarrow\infty}\mathbf{P}_N\left(\sup_{u\leq u_0}\sum_{j=1}^I2j\left(\psi_j^N(u+\delta)-\psi_j^N(u)\right)\geq \delta(1-\varepsilon)\right)=0,\]
where we have also applied the fact that $\phi_j(u+\delta)-\phi_j(u)\leq \delta$ for each $j$.
On the complement of the event in the above probability, for any $u\leq u_0$ and $i\in\{1,2,\dots,I\}$, we have by definition that $N \phi_i^N(u+\delta)-N \phi_i^N(u)\geq \varepsilon\delta N$. In particular, this establishes that, for large enough $N$,
\[N\phi_i^N(u+\delta)\geq N\phi_i^N(u)+\varepsilon\delta N\geq S_i(Nu),\]
where we have applied \eqref{star2} to obtain the second inequality. One similarly deduces that, for large enough $N$,
\[N\phi_i^N(u)\leq N\phi_i^N(u+\delta) -\varepsilon\delta N\leq S_i(N(u+\delta)).\]
By considering the numbers of solitons in the relevant slots, it follows that, for large enough $N$ and $u\leq u_0-\delta$,
\[\psi_i^N\left((u-\delta)\vee 0\right)\leq \bar{\psi}_i^N\circ\phi_i^N(u)\leq \bar{\psi}_i^N\left(N^{-1} S_i\left(N(u+\delta)\right)\right)\leq\psi_i^N(u+2\delta),\]
cf.\ \eqref{psiin}, where for the final inequality we apply that
\[R(N(u+2\delta))-R(N(u+\delta))\geq \lfloor N(u+2\delta)\rfloor -\lfloor N(u+\delta)\rfloor-\delta(1-\varepsilon)N-4I^2\geq 1,\]
which is a consequence of \eqref{star} and, since every record is an $i$-slot, implies that any soliton that is in the same slot as a soliton with its first site in $\{1,2,\dots,N(u+\delta)\}$ must be located spatially in $\{1,2,\dots,N(u+2\delta)\}$. Combining the above argument with \eqref{a} thus yields, for $u_0\in(0,\infty)$ and any $\varepsilon,\delta\in(0,1)$,
\begin{align*}
\lefteqn{\lim_{N\rightarrow\infty}\mathbf{P}_N\left(\psi_i((u-\delta)\vee0)-\varepsilon\leq \bar{\psi}_i^N\circ\phi_i^N(u)\leq \psi_i(u+\delta)+\varepsilon\mbox{ for all }u\leq u_0\right)}\\
&\hspace{330pt}=1.
\end{align*}
Taking $\delta$ arbitrarily small, and appealing to the continuity of $\psi_i$, this implies that, for every $\varepsilon\in(0,1)$ and $u_0\in(0,\infty)$,
\begin{equation}\label{hui}
\lim_{N\rightarrow\infty}\mathbf{P}_N\left(\sup_{u\leq u_0}\left|\bar{\psi}_i^N\circ\phi_i^N(u)-\psi_i(u)\right|>\varepsilon\right)=0.
\end{equation}
Next, by the definitions of $\phi_i^N$ and $\phi_i$, it is clear that \eqref{a} implies, for every $\varepsilon>0$ and $u_0\in(0,\infty)$,
\[\lim_{N\rightarrow\infty}\mathbf{P}_N\left(\sup_{i\in\{1,2,\dots,I\}}\sup_{u\leq u_0}\left|\phi_i^N(u)-\phi_i(u)\right|>\varepsilon\right)=0.\]
Moreover, since $\phi_i\in\mathcal{C}^\uparrow$ (by Lemma \ref{philem}), it follows that, for every $\varepsilon>0$ and $z_0\in(0,\infty)$,
\begin{equation}\label{rrr1}
\lim_{N\rightarrow\infty}\mathbf{P}_N\left(\sup_{i\in\{1,2,\dots,I\}}\sup_{z\leq z_0}\left|(\phi_i^N)^{-1}(z)-\phi_i^{-1}(z)\right|>\varepsilon\right)=0,
\end{equation}
and also
\begin{equation}\label{rrr2}
\lim_{N\rightarrow\infty}\mathbf{P}_N\left(\sup_{i\in\{1,2,\dots,I\}}\sup_{z\leq z_0}\left|\phi_i^N\circ(\phi_i^N)^{-1}(z)-z\right|>\varepsilon\right)=0.
\end{equation}
Composing $\bar{\psi}_i^N\circ\phi_i^N$ with $(\phi_i^N)^{-1}$, ${\psi}_i$ with $\phi_i^{-1}$, applying \eqref{hui} with \eqref{rrr1} yields that, for every $\varepsilon>0$ and $z_0\in(0,\infty)$,
\[\lim_{N\rightarrow\infty}\mathbf{P}_N\left(\sup_{i\in\{1,2,\dots,I\}}\sup_{z\leq z_0}\left|\bar{\psi}_i^N\circ\phi_i^N\circ(\phi_i^N)^{-1}(z)-\bar{\psi}_i(z)\right|>\varepsilon\right)=0.\]
Combining this limit with \eqref{rrr2}, the monotonicity of $\bar{\psi}_i^N$, and the continuity of $\bar{\psi}_i$ gives \eqref{b}.

It remains to check that (b) implies (a). Writing $\gamma_i^N:=\gamma_i^N(\cdot,0)$, where $\gamma_i^N(z,t)$ was defined at \eqref{ooo2}, and applying the definition of $\gamma_i$ from Subsection \ref{32}, it is clear that \eqref{b} implies, for every $\varepsilon>0$ and $z_0\in(0,\infty)$,
\begin{equation}\label{c}
\lim_{N\rightarrow\infty}\mathbf{P}_N\left(\sup_{i\in\{0,1,\dots,I\}}\sup_{z\leq z_0}\left|\gamma_i^N(z)-\gamma_i(z)\right|>\varepsilon\right)=0.
\end{equation}
Moreover, since $\gamma_i\in\mathcal{C}^\uparrow$ (by Lemma \ref{ttt1}), it follows that, for every $\varepsilon>0$ and $u_0\in(0,\infty)$,
\begin{equation}\label{d}
\lim_{N\rightarrow\infty}\mathbf{P}_N\left(\sup_{i\in\{0,1,\dots,I\}}\sup_{u\leq u_0}\left|(\gamma_i^N)^{-1}(u)-\gamma_i^{-1}(u)\right|>\varepsilon\right)=0,
\end{equation}
and also
\begin{equation}\label{d2}
\lim_{N\rightarrow\infty}\mathbf{P}_N\left(\sup_{i\in\{0,1,\dots,I\}}\sup_{u\leq u_0}\left|\gamma_i^N\circ(\gamma_i^N)^{-1}(u)-u\right|>\varepsilon\right)=0.
\end{equation}
Now, if $Nz$ is an integer, then by the spatial reconstruction algorithm of Subsection \ref{srsubsec}, we have that the first $N\gamma^N_0(z)$ spatial sites contain precisely the solitons of size $i$ in the first $N\gamma^N_i(z)$ slots of the relevant size, for each $i=1,2,\dots,I$ (cf.\ the discussion at the end of Subsection \ref{srsubsec}). Thus, for any $z$, we have that
\[\bar{\psi}_i^N\circ\gamma_i^N\left(\frac{\lfloor Nz \rfloor}{N}\right)\leq {\psi}_i^N\circ\gamma_0^N\left(z\right)\leq \bar{\psi}_i^N\circ\gamma_i^N\left(\frac{\lceil Nz \rceil}{N}\right),\]
which makes precise \eqref{ooo3}. Combining these inequalities with \eqref{b} and \eqref{c} yields that, for every $\varepsilon>0$ and $z_0\in(0,\infty)$,
\[\lim_{N\rightarrow\infty}\mathbf{P}_N\left(\sup_{i\in\{0,1,\dots,I\}}\sup_{z\leq z_0}\left|{\psi}_i^N\circ\gamma_0^N\left(z\right)-\bar{\psi}_i\circ\gamma_i(z)\right|>\varepsilon\right)=0.\]
Composing ${\psi}_i^N\circ\gamma_0^N$ with $(\gamma_0^N)^{-1}$, $\bar{\psi}_i\circ\gamma_i$ with $\gamma_0^{-1}$, applying the above limit with \eqref{d}, \eqref{d2}, the monotonicity of $\psi_i^N$, and the continuity of $\psi_i$ (similarly to the end of the first part of the proof) gives \eqref{a}. Thus the proof is complete.
\end{proof}

We finish the section by proving Theorem \ref{mainthm}.

\begin{proof}[Proof of Theorem \ref{mainthm}] By the assumption at \eqref{initconv} and Proposition \ref{gobet}, we have that, for every $\varepsilon>0$ and $z_0\in(0,\infty)$,
\[\lim_{N\rightarrow\infty}\mathbf{P}_N\left(\sup_{i\in\{1,2,\dots,I\}}\sup_{z\leq z_0}\left|\bar{\psi}_i^N(z,0)-(\Upsilon\circ\psi^0)_i(z)\right|>\varepsilon\right)=0.\]
Since in terms of the slot positions the dynamics of the discrete solitons are linear (recall Proposition \ref{pfres}), it readily follows that, for every $\varepsilon>0$, $t\in(0,\infty)$ and $z_0\in(0,\infty)$,
\begin{equation}\label{thetacon}
\lim_{N\rightarrow\infty}\mathbf{P}_N\left(\sup_{i\in\{1,2,\dots,I\}}\sup_{z\leq z_0}\left|\bar{\psi}_i^N(z,t)-(\theta_t\circ\Upsilon\circ\psi^0)_i(z)\right|>\varepsilon\right)=0.
\end{equation}
Now, from Proposition \ref{tpprop}, we know that $\theta_t\circ\Upsilon(\psi^0)\in\bar{\mathcal{D}}$. Hence, we can apply Proposition \ref{gobet} again to deduce that, for every $\varepsilon>0$, $t\in(0,\infty)$ and $u_0\in(0,\infty)$,
\[\lim_{N\rightarrow\infty}\mathbf{P}_N\left(\sup_{i\in\{1,2,\dots,I\}}\sup_{u\leq u_0}\left|{\psi}_i^N(u,t)-(\Upsilon^{-1}\circ\theta_t\circ\Upsilon\circ\psi^0)_i(u)\right|>\varepsilon\right)=0,\]
which completes the proof.
\end{proof}

\section{Effective speeds}\label{msec}

The goal of this subsection is to prove the following lemma, which ensures that, for densities that are not too large, the effective speeds of solitons are well-defined by \eqref{effv}. Since the proof is a relatively straightforward technical exercise in linear algebra, a reader interested in the essence of the generalized hydrodynamics story might wish to skip it.

\begin{lem}\label{minvert} Fix $I\in\mathbb{N}$. For $\rho \in [0,1]^I$ satisfying $\sum_{i=1}^I 2i \rho_i <1$, the matrix $M=M(\rho)$ defined at \eqref{matrix} is invertible.
\end{lem}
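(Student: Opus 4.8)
The plan is to prove invertibility of $M=M(\rho)$ by producing a symmetric positive definite matrix that is a scaling of it, after first reducing to the case where every density is strictly positive. Throughout I write $r_i:=\sum_{j\neq i}2(i\wedge j)\rho_j$, so that $M_{ii}=1-r_i$, and $\sigma_i:=\sum_{j=1}^I2(i\wedge j)\rho_j=r_i+2i\rho_i$; note that since $i\wedge j\le j$ the hypothesis gives $\sigma_i\le\sum_{j}2j\rho_j<1$ for every $i$.

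First I would treat the generic case $\rho_i>0$ for all $i$, where $R:=\mathrm{diag}(\rho_1,\dots,\rho_I)$ is invertible, so that $M$ is invertible if and only if $MR^{-1}$ is. The crucial observation is that $MR^{-1}$ is \emph{symmetric}: by \eqref{matrix}, for $i\neq j$ one has $(MR^{-1})_{ij}=M_{ij}/\rho_j=2(i\wedge j)=(MR^{-1})_{ji}$, while the diagonal entries $(MR^{-1})_{ii}=(1-r_i)/\rho_i$ are trivially symmetric. To see that this symmetric matrix is positive definite, I would convert the $(i\wedge j)$ coupling into a sum of squares using $i\wedge j=\sum_{l=1}^I\mathbf{1}_{\{l\le i\}}\mathbf{1}_{\{l\le j\}}$. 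For $y\in\R^I$ this yields
\begin{align*}
y^{\mathsf T}(MR^{-1})y
&=\sum_{i=1}^I\frac{1-r_i}{\rho_i}\,y_i^2+\sum_{i\neq j}2(i\wedge j)y_iy_j\\
&=\sum_{i=1}^I\left(\frac{1-r_i}{\rho_i}-2i\right)y_i^2+2\sum_{l=1}^I\Big(\sum_{i\ge l}y_i\Big)^2.
\end{align*}
The diagonal coefficient equals $(1-\sigma_i)/\rho_i>0$ by the bound above, and the second sum is nonnegative, so $y^{\mathsf T}(MR^{-1})y>0$ for every $y\neq0$. Hence $MR^{-1}$ is positive definite and $M$ is invertible.

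It then remains to remove the assumption that the densities are strictly positive. If $\rho_{i_0}=0$ for some $i_0$, then by \eqref{matrix} the $i_0$-th column of $M$ equals $(1-r_{i_0})e_{i_0}$, with $1-r_{i_0}=1-\sigma_{i_0}>0$. Expanding the determinant along this column gives $\det M=(1-r_{i_0})\det M^{(i_0)}$, where $M^{(i_0)}$ denotes $M$ with row and column $i_0$ deleted; because $\rho_{i_0}=0$ contributes nothing to any $r_k$, this minor is precisely the matrix of the form \eqref{matrix} associated with the density vector $(\rho_i)_{i\neq i_0}$, which again satisfies the hypothesis $\sum_{i\neq i_0}2i\rho_i<1$. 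Peeling off all the vanishing densities in this fashion expresses $\det M$ as a product of strictly positive factors $1-r_i$ times the determinant of the strictly positive block treated in the previous paragraph, which is nonzero.

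I expect no serious obstacle here: the heart of the argument is the symmetrization $M\mapsto MR^{-1}$, the sum-of-squares identity for the $(i\wedge j)$ kernel, and the elementary bound $\sigma_i\le\sum_j 2j\rho_j<1$. The only mildly delicate point is the bookkeeping for the zero-density indices, and I would take care to verify that deleting such an index leaves the remaining quantities $r_k$ and $\sigma_k$ unchanged, so that the reduced matrix genuinely has the same form and satisfies the same density constraint.
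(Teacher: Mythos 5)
Your proof is correct, but it takes a genuinely different route from the paper. The paper works directly with the determinant: using the fact that each row of $M$ sums to one, it replaces the first column by a column of ones, subtracts $2j\rho_j$ times that column from the $j$th column to kill all sub-diagonal entries (leaving non-positive entries above the diagonal and positive diagonal entries $d_i=1-\sum_j 2(i\wedge j)\rho_j$), and then runs an inductive cofactor expansion with careful sign bookkeeping to conclude $\det M\geq d_Id_{I-1}\cdots d_2>0$. You instead symmetrize: right-multiplying by $R^{-1}=\mathrm{diag}(\rho_i)^{-1}$ turns the off-diagonal part into the kernel $2(i\wedge j)$, and the sum-of-squares identity $i\wedge j=\sum_{l}\mathbf{1}_{\{l\le i\}}\mathbf{1}_{\{l\le j\}}$ exhibits $MR^{-1}$ as a strictly positive diagonal (with entries $(1-\sigma_i)/\rho_i$) plus a positive semi-definite matrix, hence positive definite. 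Your argument yields strictly more than invertibility --- positivity of $\det M$ and of the spectrum of $MR^{-1}$ --- and it isolates exactly which structural feature of the BBS interaction kernel is being used, namely that $\kappa_{ij}=2(i\wedge j)$ is positive semi-definite; by contrast, the paper's column-operation argument exploits a different feature (that $\kappa_{ij}\le \kappa_{jj}$ with equality for $i\ge j$) and avoids any case analysis on vanishing densities. Your handling of the degenerate indices is sound: when $\rho_{i_0}=0$ the $i_0$th column is indeed $(1-r_{i_0})e_{i_0}$ with $1-r_{i_0}=1-\sigma_{i_0}>0$, and the minor is the same kind of matrix on the reduced index set (note only that the index set becomes a general subset of $\{1,\dots,I\}$ rather than an initial segment, which is harmless since the indicator decomposition of $i\wedge j$ and the bound $\sigma_i<1$ are insensitive to this). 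One could alternatively dispense with the case split by a continuity argument, but your direct expansion is perfectly rigorous.
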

\begin{proof} In the proof, we will write $a_{ij}:=2(i\wedge j)\rho_j$. Moreover, since the case $I=1$ is obvious, we will assume $I\geq 2$. Now, $M$ is given by
\[\left(
  \begin{array}{ccccc}
    1-\sum_{j\neq 1}a_{1j} & a_{12}                 & a_{13} & \dots  & a_{1I} \\
                    a_{21} & 1-\sum_{j\neq 2}a_{2j} & a_{23} &  \dots & a_{2I} \\
    a_{31}                 & a_{32}&  1-\sum_{j\neq 3}a_{3j} & \ddots& \vdots \\
                      \vdots &\vdots  &   \ddots  &\ddots &a_{(I-1)I}\\
    a_{I1} & a_{I2} &  \dots&  a_{I(I-1)}& 1-\sum_{j\neq I}a_{Ij} \\
  \end{array}
\right).\]
Since each of the rows sum to one, $M$ has the same determinant as the following matrix:
\[M':=\left(
  \begin{array}{ccccc}
    1 & a_{12} & a_{13} & \dots & a_{1I} \\
    1 & 1-\sum_{j\neq 2}a_{2j} & a_{23} &  \dots & a_{2I} \\
    1 & a_{32} &  1-\sum_{j\neq 3}a_{3j} & \ddots & \vdots \\
   \vdots & \vdots &\ddots  & \ddots &a_{(I-1)I} \\
    1 & a_{I2} & \dots &  a_{I(I-1)} & 1-\sum_{j\neq I}a_{Ij} \\
  \end{array}
\right).\]
Subtracting $2j\rho_j$ times the first column from the $j$th column, $j=2,3,\dots,I$, we find that the determinant of $M$ is also the same as that of a matrix of the form:
\[M''=\left(
  \begin{array}{ccccc}
    1 & \leq0 & \leq0 & \dots & \leq0 \\
    1 & d_2 & \leq0 &  \dots & \leq0 \\
    1 & 0 &  d_3 &\ddots  & \vdots \\
   \vdots & \vdots &\ddots  & \ddots &  \leq0  \\
    1 & 0 & \dots &  0& d_I \\
  \end{array}
\right),\]
where we write $d_i:=1-\sum_ja_{ij}$, and here and throughout the remainder of the proof, denote by the symbol `$\leq 0$' a number that is non-positive, but the particular value of which is unimportant (and similarly for `$\geq0$'). Note that, by assumption, $d_i>0$ for each $i$.

Now, write $M^{(I)}$ for a matrix of the form of $M''$. We will demonstrate that $\mathrm{det}(M^{(I)})>0$. Clearly $\mathrm{det}(M^{(2)})\geq d_2>0$. Hence we will now assume $I\geq 3$, and expand the determinant along the bottom row of the matrix to deduce that
\begin{equation}\label{mdecomp}
\mathrm{det}\left(M^{(I)}\right)=d_I \mathrm{det}\left(M^{(I-1)}\right)+(-1)^{I-1}\mathrm{det}\left(N^{(I-1)}\right),
\end{equation}
where the matrix $M^{(I-1)}$ is an $(I-1)\times (I-1)$ matrix with the same form as $M^{(I)}$ (but a smaller dimension), and $N^{(I-1)}$ is a matrix of the form:
\[N^{(I-1)}=\left(
  \begin{array}{ccccc}
    \leq0 & \leq0 & \leq0 & \dots & \leq0 \\
    d_2 &  \leq0& \leq0 &  \dots & \leq0 \\
    0 &  d_3 & \leq0 &\ddots  & \vdots \\
   \vdots & \vdots &\ddots  & \ddots &  \leq0  \\
    0 &\dots &0 & d_{I-1}&\leq0  \\
  \end{array}
\right).\]
We next check that
\begin{equation}\label{nineq}
(-1)^{I-1}\mathrm{det}\left(N^{(I-1)}\right)\geq 0
\end{equation}
by induction. This is clear for $I=3$. Moreover, for $I>3$, by expanding the determinant along the bottom row and applying the inductive hypothesis, we see that
\begin{align*}
\lefteqn{(-1)^{I-1}\mathrm{det}\left(N^{(I-1)}\right)}\\
&=(-1)^{I-1}\left(\leq 0\right) \mathrm{det}\left(N^{(I-2)}\right)-(-1)^{I-1}d_{I-1}\mathrm{det}\left(\tilde{N}^{(I-2)}\right)\\
&=\left(\geq 0\right)(-1)^{I-2} \mathrm{det}\left(N^{(I-2)}\right)+d_{I-1}(-1)^{I-2}\mathrm{det}\left(\tilde{N}^{(I-2)}\right)\\
&\geq0,
\end{align*}
where we have written $N^{(I-2)}$ and $\tilde{N}^{(I-2)}$ for matrices of the same form as $N^{(I-1)}$ (but a smaller dimension). This confirms the inequality at \eqref{nineq}. Returning to \eqref{mdecomp}, we thus find that $\mathrm{det}(M^{(I)})\geq d_{I}d_{I-1}\dots d_2>0$, and so we are done.
\end{proof}

\section{Partial differential equation descriptions}\label{pdesec}

The only significant hurdle remaining is to show that, under appropriate regularity conditions, the dynamics of the continuous state-space system are described by the partial differential equations \eqref{PDErho} and \eqref{PDEpsi}. We will do this in Subsection \ref{pdesubsec}. Prior to this, in Subsection \ref{anprelim}, we set out some preparatory results concerning the smoothness of various functions that arise in our arguments. Again, throughout this section, we fix $I\in\mathbb{N}$.

\subsection{Analytic preliminaries}\label{anprelim}

We begin by preparing some notation. For any function $f : \R_+ \to \R$, we define extended functions $f^{\R} :\R \to \R$ and $f^{\R, \uparrow}: \R \to \R$ by setting
\[f^{\R}(u):=\left\{\begin{array}{ll}
               f(u), & u\geq 0,\\
               0, & u<0,
             \end{array}\right.\]
and
\[f^{\R,\uparrow}(u):=\left\{\begin{array}{ll}
               f(u), & u\geq 0,\\
               u, & u<0.
             \end{array}\right.\]
For $q \in \N$, we then define
\begin{align*}
\mathcal{C}^{(q)}&:=\left\{f\in \mathcal{C} :\: f^{\R} \in C^q(\R,\R) \right\},\\
\mathcal{C}^{\uparrow(q)}&:=\left\{f\in \mathcal{C}^{\uparrow} :\: f^{\R, \uparrow} \in  C^q(\R,\R) \right\}.
\end{align*}
Note that $\mathcal{C}^{(1)} = \mathcal{C} \cap \mathcal{C}^1(\R_+,\R_+)$, where $\mathcal{C}^1(\R_+,\R_+)$ was defined in the introduction. Also, for $f \in \mathcal{C}^{(q)}$, $f'(0)=0$ and for $f \in \mathcal{C}^{\uparrow(q)}$, $f'(0)=1$. We have the following further basic properties for functions in these sets.

\begin{lem}\label{derivative}
Let $q \in \N$.  \\
(a) If $f  \in \mathcal{C}^{\uparrow(q)}$ and $f'(u) >0 $ for all $u \in \R_+$, then $f^{-1} \in  \mathcal{C}^{\uparrow(q)}$.\\
(b) Suppose $f  \in C^q(\R_+^2,\R_+)$, $f(\cdot, t)  \in \mathcal{C}^{\uparrow}$ for all $t \in \R_+$, and $\partial_u f(u,t) >0 $ for all $(u,t) \in \R_+^2$. If $f^{-1}(\cdot,t)$ is defined to be the inverse function of $f(\cdot,t)  :\R_+ \to \R_+$ for each $t \ge 0$, then $f^{-1} \in C^q(\R_+^2,\R_+)$.\\
(c) If $f \in \mathcal{C}^{(q)}$ and $g \in \mathcal{C}^{(q)} \cup \mathcal{C}^{\uparrow(q)}$, then $f \circ g \in \mathcal{C}^{(q)}$.
\end{lem}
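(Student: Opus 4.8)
The plan is to prove the three parts in order, exploiting the fact that all smoothness questions reduce to checking the extended functions $f^{\R}$, $f^{\R,\uparrow}$ are $C^q$ on all of $\R$, and that the only delicate point is always the behaviour at the gluing point $u=0$.

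For part (a), I would first recall that $f \in \mathcal{C}^{\uparrow,q}$ means $f^{\R,\uparrow}\in C^q(\R,\R)$, with $f^{\R,\uparrow}(u)=u$ for $u<0$, so in particular $f'(0)=1$ and more generally $(f^{\R,\uparrow})^{(k)}(0)=0$ for $2\le k\le q$ (these one-sided derivatives are forced to match the linear piece on $u<0$). Since $f'>0$ on $\R_+$ and $f\in\mathcal{C}^\uparrow$ is a strictly increasing bijection of $\R_+$, the inverse $f^{-1}:\R_+\to\R_+$ exists and is $C^q$ on $(0,\infty)$ by the standard inverse function theorem. The key step is to verify that $(f^{-1})^{\R,\uparrow}$ is $C^q$ across $0$. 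Because $f^{\R,\uparrow}$ is the identity on $u<0$, its inverse extension is also the identity there, so I must show the right-hand derivatives of $f^{-1}$ at $0$ agree with those of the identity. I would do this by differentiating the identity $f(f^{-1}(z))=z$ repeatedly (Fa\`a di Bruno), solving recursively for $(f^{-1})^{(k)}(0)$ in terms of the $f^{(j)}(0)$; using $f'(0)=1$ and $f^{(j)}(0)=0$ for $j\ge2$, an induction gives $(f^{-1})'(0)=1$ and $(f^{-1})^{(k)}(0)=0$ for $2\le k\le q$, which is exactly what is needed for the two one-sided $C^q$ jets to match. Hence $(f^{-1})^{\R,\uparrow}\in C^q(\R,\R)$ and $f^{-1}\in\mathcal{C}^{\uparrow,q}$.

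For part (b), the situation is the same in spirit but with a trailing time parameter that just comes along for the ride. Given $f\in C^q(\R_+^2,\R_+)$ with $\partial_u f>0$, the two-variable inverse function theorem (applied to the map $(u,t)\mapsto(f(u,t),t)$, whose Jacobian has determinant $\partial_u f>0$) yields that $f^{-1}$ is $C^q$ jointly in $(z,t)$ on the open set $\{z>0\}\times\R_+$; the only extra work is the boundary $z=0$, which I would again handle by implicit differentiation of $f(f^{-1}(z,t),t)=z$ in both variables and matching jets. I expect this part to be essentially a parametrized version of (a), so I would present it briefly, noting that for each fixed $t$ the relevant one-sided derivatives in $z$ are controlled as before and that the $t$-derivatives cause no new boundary issue because $f(0,t)=0$ identically.

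For part (c), composition, I would argue directly on the extensions. If $f\in\mathcal{C}^q$ and $g\in\mathcal{C}^q\cup\mathcal{C}^{\uparrow,q}$, then $g$ maps $\R_+$ into $\R_+$ with $g(0)=0$, so $f\circ g\in\mathcal{C}$; it remains to show $(f\circ g)^{\R}\in C^q(\R,\R)$. The clean way is to observe that on $\R_+$ one has $(f\circ g)^{\R}=f^{\R}\circ g^{\R}$ (resp.\ $f^{\R}\circ g^{\R,\uparrow}$), and that for $u<0$ the value is $0$; I must check the jets at $0$ match. Here the structural facts $f'(0)=0$ (and all lower-order behaviour encoded in $f^{\R}$ vanishing smoothly to the left) do the work: since $f^{\R}$ is a genuine $C^q$ function on $\R$ that is identically $0$ on $(-\infty,0]$, and $g^{\R}$ (resp.\ $g^{\R,\uparrow}$) is $C^q$ on $\R$ with $g(0)=0$, the composition $f^{\R}\circ g^{\R}$ is $C^q$ on $\R$ by the chain rule, and it vanishes together with all its derivatives up to order $q$ wherever its argument is $\le0$; comparing with the definition $(f\circ g)^{\R}(u)=0$ for $u<0$ and using $g(0)=0$ shows the two one-sided jets at $0$ coincide. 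The mild subtlety is the two cases for $g$: when $g\in\mathcal{C}^{\uparrow,q}$ its extension is $g^{\R,\uparrow}$ (identity on $u<0$) rather than $g^{\R}$, but because $f^{\R}$ is flat to the left of $0$, $f^{\R}\circ g^{\R,\uparrow}$ still equals $0$ for $u<0$, so the same conclusion holds.

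The main obstacle throughout is purely the $C^q$-matching of one-sided derivatives at the origin; the interior smoothness is standard calculus. I would concentrate the technical effort on the jet computation in part (a), since parts (b) and (c) then follow by the same mechanism (a parametrized inverse-function argument, and the left-flatness of $f^{\R}$, respectively).
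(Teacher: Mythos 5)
Your proposal is correct and follows essentially the same route as the paper in all three parts: the inverse function theorem for (a), the auxiliary map $(u,t)\mapsto(f(u,t),t)$ with Jacobian determinant $\partial_u f>0$ for (b), and the identity $(f\circ g)^{\R}=f^{\R}\circ g^{\R}=f^{\R}\circ g^{\R,\uparrow}$ (exploiting that $f^{\R}$ vanishes on $(-\infty,0]$) for (c). The one place where you do noticeably more work than necessary is part (a): rather than applying the inverse function theorem on $(0,\infty)$ and then matching $C^q$-jets at the origin via Fa\`a di Bruno, the paper simply observes that $(f^{\R,\uparrow})^{-1}=(f^{-1})^{\R,\uparrow}$ and that $(f^{\R,\uparrow})'>0$ on all of $\R$ (it equals $1$ on $(-\infty,0]$, since $f'(0)=1$ for $f\in\mathcal{C}^{\uparrow,q}$, and equals $f'>0$ on $\R_+$), so the inverse function theorem applied once to the global extension yields $C^q$ regularity across $0$ with no separate boundary computation. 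Your jet calculation is valid and reaches the same conclusion, but the global-extension viewpoint makes it, and the analogous boundary worry you raise in (b), automatic.
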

\begin{proof}
(a) Since $f^{\R, \uparrow}$ is continuous, strictly increasing, and it also holds that $\lim_{u \to \pm \infty} f^{\R, \uparrow}(u)= \pm \infty$, it is the case that $(f^{\R, \uparrow})^{-1} :\R \to \R$ exists. It is moreover straightforward to check that $(f^{\R, \uparrow})^{-1} = (f^{-1})^{\R,\uparrow}$. As $(f^{\R, \uparrow})'(u) >0$ for all $u \in \R$, the remaining differentiability claim follows from the inverse function theorem.\\
(b) If we define $\tilde{f} : \R_+^2 \to \R_+^2$ by setting $\tilde{f}(u,t):=(f(u,t),t)$, then $\tilde{f}$ is a bijection with inverse given by $\tilde{f}^{-1}(u,t)=(f^{-1}(u,t),t)$. Since the Jacobian determinant of $\tilde{f}$ is equal to $\partial_u f(u,t)$, which is strictly positive by assumption, one can again apply the inverse function theorem to deduce that $\tilde{f}^{-1} \in C^q(\R_+^2,\R_+^2)$, which implies in turn that $f^{-1} \in  C^q(\R_+^2,\R_+)$.\\
(c) Since $f \circ g \in \mathcal{C}$ and $(f \circ g)^{\R}=f^{\R} \circ g^{\R} = f^{\R} \circ g^{\R, \uparrow}$, $f \circ g \in \mathcal{C}^{(q)}$.
\end{proof}

We next consider the incorporation of regularity criteria to the domains $\mathcal{D}$ and $\bar{\mathcal{D}}$, as defined at \eqref{ddef} and \eqref{dbardef}. In particular, we set
\[\mathcal{D}^{(q)}:= \mathcal{D} \cap (\mathcal{C}^{(q)})^I,\qquad \bar{\mathcal{D}}^{(q)}:= \bar{\mathcal{D}} \cap  (\mathcal{C}^{(q)})^I.\]
Note that $\mathcal{D}^{(1)}=\mathcal{D} \cap \mathcal{C}^1(\R_+,\R_+)^I$, which is consistent with the definition given in the statement of Theorem \ref{mainthm2}. We make the following observations.

\begin{lem}\label{regularity}
Let $q \in \N$. \\
(a) For $\psi \in \mathcal{D}^{(q)}$ and $i=1,2,\dots, I$, it holds that $\phi_i \in \mathcal{C}^{\uparrow(q)}$, and $\phi_i' (u) >0$ for all $u \in \R_+$.\\
(b) For $\bar{\psi } \in \bar{\mathcal{D}}^{(q)}$, it holds that $\gamma_i \in \mathcal{C}^{\uparrow(q)}$ for $i=0,1,\dots,I$, and $\gamma_0' (z)>0$ for all $z \in \R_+$.\\
(c) The map $\Upsilon$ is a bijection from $\mathcal{D}^{(q)}$ to $\bar{\mathcal{D}}^{(q)}$.
\end{lem}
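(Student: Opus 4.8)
The plan is to handle parts (a) and (b) by expressing $\phi_i$ and $\gamma_i$ in terms of the extended functions $f^{\R}$ and $f^{\R,\uparrow}$ introduced just above, so that the claimed $\mathcal{C}^{\uparrow,q}$-regularity reduces to the smoothness of these extensions together with Lemma \ref{derivative}, while the two derivative-positivity claims are treated separately. Part (c) will then be assembled from (a), (b), and the fact, already established in Proposition \ref{tpprop}(a) and Proposition \ref{upsinv}, that $\Upsilon$ is a bijection from $\mathcal{D}$ to $\bar{\mathcal{D}}$ with inverse $\Gamma$.

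For part (a), I would first invoke Lemma \ref{philem} to get $\phi_i\in\mathcal{C}^\uparrow$, so that only the smoothness of the extension and the strict positivity of $\phi_i'$ remain. The smoothness is immediate from the identity, valid for all $u\in\R$,
\[\phi_i^{\R,\uparrow}(u)=u-\sum_{j=1}^I 2(i\wedge j)\,\psi_j^{\R}(u),\]
since for $u<0$ each $\psi_j^{\R}(u)=0$ so the right-hand side equals $u$, while for $u\ge0$ it equals $\phi_i(u)$; as each $\psi_j^{\R}\in C^q(\R,\R)$ (because $\psi_j\in\mathcal{C}^q$), the extension is $C^q$, giving $\phi_i\in\mathcal{C}^{\uparrow,q}$. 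The substantive point is the positivity $\phi_i'(u)>0$, and this is where the density condition defining $\mathcal{D}$ enters. Writing $c_j:=\sup_{u_1,u_2}\tfrac{\psi_j(u_1)-\psi_j(u_2)}{\phi_j(u_1)-\phi_j(u_2)}$, the bound $\psi_j(u+h)-\psi_j(u)\le c_j(\phi_j(u+h)-\phi_j(u))$ passes to the limit $h\downarrow0$ to give $\psi_j'(u)\le c_j\phi_j'(u)$, while $\phi_j'(u)=1-\sum_k 2(j\wedge k)\psi_k'(u)\le1$. Combining these with $\sum_j j c_j<\tfrac12$ yields $\sum_j 2j\psi_j'(u)\le\sum_j 2j c_j<1$, whence $\phi_I'(u)=1-\sum_j 2j\psi_j'(u)>0$; since $\phi_i'(u)-\phi_I'(u)=\sum_j 2(I\wedge j-i\wedge j)\psi_j'(u)\ge0$, this gives $\phi_i'(u)\ge\phi_I'(u)>0$ for every $i\le I$.

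For part (b), I would argue by downward induction on $i$ from $I$ to $0$. The base case $\gamma_I=\mathrm{id}\in\mathcal{C}^{\uparrow,q}$ is trivial, and Lemma \ref{ttt1} already supplies $\gamma_i\in\mathcal{C}^\uparrow$ throughout. Assuming $\gamma_j\in\mathcal{C}^{\uparrow,q}$ for $j>i$, Lemma \ref{derivative}(c) gives $\bar{\psi}_j\circ\gamma_j\in\mathcal{C}^q$ (using $\bar{\psi}_j\in\mathcal{C}^q$), and then the identity $\gamma_i^{\R,\uparrow}=\mathrm{id}+\sum_{j>i}2(j-i)(\bar{\psi}_j\circ\gamma_j)^{\R}$, valid on all of $\R$ by the same sign-of-$u$ check as in (a), places $\gamma_i^{\R,\uparrow}$ in $C^q(\R,\R)$, i.e.\ $\gamma_i\in\mathcal{C}^{\uparrow,q}$. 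The derivative bound is simpler and needs no density condition: differentiating $\gamma_0=\mathrm{id}+\sum_{j=1}^I 2j\,\bar{\psi}_j\circ\gamma_j$ gives $\gamma_0'(z)=1+\sum_j 2j\,\bar{\psi}_j'(\gamma_j(z))\gamma_j'(z)\ge1>0$, every summand being non-negative.

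Finally, for part (c), since $\Upsilon:\mathcal{D}\to\bar{\mathcal{D}}$ is already a bijection with inverse $\Gamma$, it suffices to verify the inclusions $\Upsilon(\mathcal{D}^q)\subseteq\bar{\mathcal{D}}^q$ and $\Gamma(\bar{\mathcal{D}}^q)\subseteq\mathcal{D}^q$, after which the bijection claim follows by restriction. For the first, given $\psi\in\mathcal{D}^q$, part (a) together with Lemma \ref{derivative}(a) gives $\phi_i^{-1}\in\mathcal{C}^{\uparrow,q}$, so $\bar{\psi}_i=\psi_i\circ\phi_i^{-1}\in\mathcal{C}^q$ by Lemma \ref{derivative}(c); combined with $\Upsilon(\psi)\in\bar{\mathcal{D}}$ from Proposition \ref{tpprop}(a) this gives $\Upsilon(\psi)\in\bar{\mathcal{D}}^q$. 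For the second, given $\bar{\psi}\in\bar{\mathcal{D}}^q$, part (b) and Lemma \ref{derivative}(a) give $\gamma_0^{-1}\in\mathcal{C}^{\uparrow,q}$, and writing $\tilde{\psi}_i=(\bar{\psi}_i\circ\gamma_i)\circ\gamma_0^{-1}$ and applying Lemma \ref{derivative}(c) twice shows $\tilde{\psi}_i\in\mathcal{C}^q$, so $\Gamma(\bar{\psi})\in\mathcal{D}^q$. I expect the main obstacle to be the strict positivity of $\phi_i'$ in part (a): the regularity statements are essentially bookkeeping with the extended functions and Lemma \ref{derivative}, whereas the positivity requires converting the global supremum bound in the definition of $\mathcal{D}$ into a pointwise derivative inequality and exploiting the reduction to $\phi_I$.
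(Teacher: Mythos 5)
Your proposal is correct and follows essentially the same route as the paper: regularity via the identities $\phi_i^{\R,\uparrow}=\mathrm{id}-\sum_j 2(i\wedge j)\psi_j^{\R}$ and the inductive formula for $\gamma_i^{\R,\uparrow}$ combined with Lemma \ref{derivative}, positivity of $\phi_i'$ by converting the supremum bound in the definition of $\mathcal{D}$ into the pointwise inequality $\sum_i i\psi_i'(u)<\tfrac12$ (using $\phi_i'\le 1$) and the reduction $\phi_i'\ge\phi_I'$, and part (c) by checking the two inclusions on top of the known bijection $\Upsilon:\mathcal{D}\to\bar{\mathcal{D}}$. The only differences are cosmetic bookkeeping (your constants $c_j$ versus the paper's difference quotients).
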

\begin{proof}
(a) By Lemma \ref{philem}, $\phi_i \in \mathcal{C}^{\uparrow}$. Moreover, by definition,
\[\phi_i^{\R, \uparrow}(u)= u-\sum_{j=1}^I2(i\wedge j) \psi_j^{\R}(u),\]
which shows $\phi_i^{\R, \uparrow}(u) \in C^q(\R,\R)$, and so we obtain that $\phi_i \in \mathcal{C}^{\uparrow(q)}$. Since
\[\phi_i '(u)= 1- \sum_{j=1}^I 2 (i \wedge j)  \psi_j' (u)\le 1,\]
for any $u\in\mathbb{R}_+$ and $\varepsilon \neq 0$ such that $u+\varepsilon \in \R_+$,
\[\sum_{i=1}^I i \frac{\psi_i  (u+\varepsilon)-\psi_i(u)}{\varepsilon} \le \sum_{i=1}^I i \frac{\psi_i  (u+\varepsilon)-\psi_i(u) }{\phi_i  (u+\varepsilon)-\phi_i(u)}.\]
Taking the limit as $\varepsilon\rightarrow0$ and applying the assumption that $\psi \in \mathcal{D}$, we thus find that $ \sum_{i=1}^I  i  \psi_i ' (u) <\frac{1}{2}$. This implies $\phi_i ' (u) \ge  1- \sum_{j=1}^I 2 j  \psi_j ' (u) >0$.  \\
(b) By Lemma \ref{ttt1}, $\gamma_i \in \mathcal{C}^{\uparrow}$ for $i=0,1,2,\dots, I$. Also by definition, $\gamma_I^{\R,\uparrow}(z)=z$ and
\[\gamma_i^{\R,\uparrow}(z)=z+\sum_{j>i}2\left(j-i\right)\bar{\psi}_j^{\R}\circ \gamma_j^{\R,\uparrow}(z),\qquad i=I-1,I-2,\dots,0.\]
Hence, inductively, we have $\gamma_i^{\R,\uparrow} \in C^q(\R,\R)$ and $\gamma_i \in \mathcal{C}^{\uparrow(q)}$. In particular, since
\[\gamma_0(z)=z+\sum_{i}2i \bar{\psi}_i\circ \gamma_i(z) \]
and $\bar{\psi}_i$ and $\gamma_i$ are non-decreasing, $\gamma_0 ' (z) \ge 1 >0$.  \\
(c) We only need to prove that $\Upsilon(\mathcal{D}^{(q)}) \subseteq \bar{\mathcal{D}}^{(q)}$ and $\Gamma ( \bar{\mathcal{D}}^{(q)}) \subseteq \mathcal{D}^{(q)}$. First suppose $\psi \in \mathcal{D}^{(q)}$. By Lemma \ref{derivative}(a) and part (a) of this one, we find that $\phi_i^{-1} \in \mathcal{C}^{\uparrow(q)}$. Hence, by Lemma \ref{derivative}(c), $\psi_i \circ \phi_i^{-1} \in \mathcal{C}^{(q)}$, and so $\Upsilon (\psi) \in \bar{\mathcal{D}}^{(q)}$. Next, suppose $\bar{\psi} \in \bar{\mathcal{D}}^{(q)}$. By Lemma \ref{derivative}(a) and part (b) of this one, $\gamma_0^{-1} \in  \mathcal{C}^{\uparrow(q)}$. So, by Lemma \ref{derivative}(c), $\bar{\psi}_i \circ \gamma_i \circ \gamma_0^{-1} \in  \mathcal{C}^{(q)}$, which allows us to conclude that $\Gamma(\bar{\psi}) \in  \mathcal{D}^{(q)}$.
\end{proof}

In the following two lemmas, we give dynamical versions of the preceding results. To state these, for $f, g \in  C(\R_+^2, \R_+)$, we define a composed function $f \star g \in C(\R_+^2, \R_+)$ by setting
\[f \star g (u,t):=f(g(u,t),t).\]
Note that $(f \star g) \star h= f \star (g \star h)$.

\begin{lem}\label{derivative2}
Let $q \in \N$. If $f, g \in C^q(\R_+^2, \R_+)$, then $f \star g  \in C^q(\R_+^2, \R_+)$. If it further holds that $g(\cdot, t) \in \mathcal{C}^{\uparrow}$ for all $t \in \R_+$, and $\partial_u g(u,t) >0$ for all $(u,t) \in  \R_+^2$, then $f \star g^{-1} \in C^q(\R_+^2, \R_+)$, where $g^{-1}(\cdot, t)$ is defined to be the inverse function of $g(\cdot,t) :\R_+ \to \R_+$ for each $t \in \R_+$.
\end{lem}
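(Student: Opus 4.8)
The plan is to deduce both claims from the elementary fact that a composition of $C^q$ maps between (subsets of) Euclidean spaces is again $C^q$, after rewriting $\star$ as an ordinary composition. For the first assertion, I would introduce the auxiliary map $G:\R_+^2\to\R_+^2$ defined by $G(u,t):=(g(u,t),t)$. Its two coordinate functions are $g$, which is $C^q$ by hypothesis, and the projection $(u,t)\mapsto t$, which is smooth; hence $G\in C^q(\R_+^2,\R_+^2)$. Since $g$ takes values in $\R_+$ and $t\in\R_+$, the map $G$ indeed lands in $\R_+^2$, so that $f\circ G$ is well-defined, and by construction $f\star g=f\circ G$. The multivariate chain rule then gives $f\star g=f\circ G\in C^q(\R_+^2,\R_+)$, which is the first assertion.

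For the second assertion, I would first apply Lemma \ref{derivative}(b) to the function $g$: its hypotheses are precisely that $g\in C^q(\R_+^2,\R_+)$, that $g(\cdot,t)\in\mathcal{C}^{\uparrow}$ for every $t\in\R_+$, and that $\partial_u g(u,t)>0$ throughout, all of which are assumed here. This yields $g^{-1}\in C^q(\R_+^2,\R_+)$, with $g^{-1}(\cdot,t)$ the fibrewise inverse. Applying the already-established first assertion with the pair $(f,g^{-1})$ in place of $(f,g)$ then gives $f\star g^{-1}=f\circ G^{-1}\in C^q(\R_+^2,\R_+)$, where $G^{-1}(u,t)=(g^{-1}(u,t),t)$, as required.

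The argument is routine, and the only point deserving care is the behaviour at the boundary of $\R_+^2$, where $C^q$ is to be understood via one-sided derivatives. This causes no genuine difficulty: the chain-rule computation for $f\circ G$ remains valid up to the boundary, and the inverse-regularity input needed to control $g^{-1}$ has already been packaged into Lemma \ref{derivative}(b) (which itself rests on the inverse function theorem). Consequently I do not anticipate any real obstacle here; the entire substance of the lemma lies in the reduction of $\star$ to composition, together with the prior result on the smoothness of fibrewise inverses.
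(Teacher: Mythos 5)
Your proof is correct and follows essentially the same route as the paper: the paper likewise writes $f\star g=f\circ\tilde g$ with $\tilde g(u,t):=(g(u,t),t)$ for the first claim, and for the second it invokes Lemma \ref{derivative}(b) to get $g^{-1}\in C^q(\R_+^2,\R_+)$ before applying the first part. There is nothing to add.
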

\begin{proof} For $\tilde{g}(u,t):=(g(u,t),t)$, $f \star g= f \circ \tilde{g}$. Since $f \in C^q(\R_+^2, \R_+)$ and $\tilde{g} \in C^q(\R_+^2, \R_+^2)$, the first claim of the lemma follows. For the second claim, we apply Lemma \ref{derivative}(b) to deduce that $g^{-1} \in C^q(\R_+^2, \R_+)$. Hence applying the first part of the lemma completes the proof.
\end{proof}

 \begin{lem}\label{jointregularity}
Let $q \in \N$. Suppose $\psi(\cdot,t)\in \mathcal{F}$ and set $\bar{\psi}(\cdot,t) := \Upsilon (\psi(\cdot,t))$ for all $t \in \R_+$. The following two conditions are equivalent. \\
(i) It holds that $\psi(\cdot,t) \in \mathcal{D}^{(q)}$ for all $t \ge 0$, and $\psi \in C^q(\mathbb{R}_+^2,\mathbb{R}_+)^{I}$.\\
(ii) It holds that $\bar{\psi}(\cdot,t) \in \bar{\mathcal{D}}^{(q)}$ for all $t \ge 0$, and $\bar{\psi}  \in C^q(\mathbb{R}_+^2,\mathbb{R}_+)^{I}$.
\end{lem}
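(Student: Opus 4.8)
The plan is to decouple each of the two conditions into its fixed-time part (membership in $\mathcal{D}^q$ or $\bar{\mathcal{D}}^q$ for every $t$) and its joint space-time part (membership in $C^q(\R_+^2,\R_+)^I$). The fixed-time equivalence is immediate: since $\psi(\cdot,t)\in\mathcal{F}$ and $\bar{\psi}(\cdot,t)=\Upsilon(\psi(\cdot,t))$, Proposition \ref{upsinv} gives $\psi(\cdot,t)=\Gamma(\bar{\psi}(\cdot,t))$, and then Lemma \ref{regularity}(c) shows that $\psi(\cdot,t)\in\mathcal{D}^q$ if and only if $\bar{\psi}(\cdot,t)\in\bar{\mathcal{D}}^q$, separately for each $t$. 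The real content is therefore to transfer the joint regularity across the scattering map, for which I would lift the slice-wise identities $\bar{\psi}_i=\psi_i\circ\phi_i^{-1}$ and $\psi_i=\bar{\psi}_i\circ\gamma_i\circ\gamma_0^{-1}$ to their $\star$-versions $\bar{\psi}_i=\psi_i\star\phi_i^{-1}$ and $\psi_i=\bar{\psi}_i\star\gamma_i\star\gamma_0^{-1}$, and then apply Lemma \ref{derivative2} repeatedly.

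For the implication (i)$\Rightarrow$(ii), I would assume $\psi\in C^q(\R_+^2,\R_+)^I$ with $\psi(\cdot,t)\in\mathcal{D}^q$ for all $t$. Then $\phi_i(u,t)=u-\sum_{j=1}^I2(i\wedge j)\psi_j(u,t)$ is a linear combination of the $\psi_j$ and hence lies in $C^q(\R_+^2,\R_+)$, while Lemma \ref{regularity}(a), applied at each fixed $t$, gives $\phi_i(\cdot,t)\in\mathcal{C}^{\uparrow}$ and $\partial_u\phi_i(u,t)>0$ for all $(u,t)\in\R_+^2$. Applying Lemma \ref{derivative2} with $f=\psi_i$ and $g=\phi_i$ then yields $\bar{\psi}_i=\psi_i\star\phi_i^{-1}\in C^q(\R_+^2,\R_+)$, which, combined with the fixed-time equivalence, gives (ii).

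For the converse (ii)$\Rightarrow$(i), I would assume $\bar{\psi}\in C^q(\R_+^2,\R_+)^I$ with $\bar{\psi}(\cdot,t)\in\bar{\mathcal{D}}^q$ for all $t$, and first show by downward induction on $i$ from $i=I$ that the dynamical reconstruction maps, defined by $\gamma_I(z,t)=z$ and $\gamma_i(z,t)=z+\sum_{j>i}2(j-i)\bar{\psi}_j\star\gamma_j(z,t)$, all lie in $C^q(\R_+^2,\R_+)$: the base case is trivial, and the inductive step uses the first part of Lemma \ref{derivative2} to see that each $\bar{\psi}_j\star\gamma_j\in C^q(\R_+^2,\R_+)$. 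In particular $\gamma_0\in C^q(\R_+^2,\R_+)$, and Lemma \ref{regularity}(b) gives $\gamma_0(\cdot,t)\in\mathcal{C}^{\uparrow}$ with $\partial_z\gamma_0(z,t)>0$ for all $(z,t)$. One more application of the first part of Lemma \ref{derivative2} gives $\bar{\psi}_i\star\gamma_i\in C^q(\R_+^2,\R_+)$, and then the second part, with $f=\bar{\psi}_i\star\gamma_i$ and $g=\gamma_0$, gives $\psi_i=(\bar{\psi}_i\star\gamma_i)\star\gamma_0^{-1}=\bar{\psi}_i\star\gamma_i\star\gamma_0^{-1}\in C^q(\R_+^2,\R_+)$ by associativity of $\star$. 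With the fixed-time equivalence this yields (i).

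The routine steps are the repeated invocations of Lemma \ref{derivative2} and the verification that the relevant $u$- and $z$-derivatives are strictly positive on all of $\R_+^2$, which is precisely what the density condition defining $\mathcal{D}$ and $\bar{\mathcal{D}}$ buys us through Lemma \ref{regularity}(a),(b). The step requiring the most care is confirming that the slice-wise compositional identities genuinely hold as identities of functions of $(u,t)$ — that is, that the inverses $\phi_i^{-1}$ and $\gamma_0^{-1}$ taken separately in each time-slice assemble into exactly the $\star$-inverses appearing in Lemma \ref{derivative2} — so that the lemma applies legitimately; this is where the bijectivity statement of Proposition \ref{upsinv} and the slice-wise conclusions of Lemma \ref{regularity} must be knitted together.
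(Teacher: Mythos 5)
Your proposal is correct and follows essentially the same route as the paper: both split the statement into the fixed-time equivalence (handled by Lemma \ref{regularity}(c) together with Proposition \ref{upsinv}) and the joint space-time regularity, which is transferred via the $\star$-identities $\bar{\psi}_i=\psi_i\star\phi_i^{-1}$ and $\psi_i=\bar{\psi}_i\star\gamma_i\star\gamma_0^{-1}$ using Lemma \ref{derivative2} and the slice-wise positivity of $\partial_u\phi_i$ and $\partial_z\gamma_0$ from Lemma \ref{regularity}(a),(b). The downward induction for the $\gamma_i$ that you make explicit is exactly what the paper compresses into ``arguing similarly to the previous case.''
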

\begin{proof}
By Lemma \ref{regularity} (c), it is clear that the following two conditions are equivalent.\\
(i') It holds that $\psi(\cdot,t) \in \mathcal{D}^{(q)}$ for all $t \ge 0$.\\
(ii') It holds that $\bar{\psi}(\cdot,t) \in \bar{\mathcal{D}}^{(q)}$ for all $t \ge 0$.\\
Hence it remains to relate the joint regularity of $\psi$ and $\bar{\psi}$. By construction, we have that $\bar{\psi}_i= \psi_i \star \phi_i^{-1}$ and $\psi_i = \bar{\psi}_i \star \gamma_i \star \gamma_0^{-1}$, where, for each $t \in \R_+$, $\phi_i^{-1}(\cdot, t)$ is the inverse function of $\phi_i(\cdot,t) :\R_+ \to \R_+$, and $\gamma_0^{-1}(\cdot, t)$ is the inverse function of $\gamma_0(\cdot,t) :\R_+ \to \R_+$. Now suppose (i) holds. By definition, $\phi_i \in C^q(\mathbb{R}_+^2,\mathbb{R}_+)$ for $i=1,2,\dots, I$. Moreover, by Lemma \ref{regularity}(a), $\phi_i(\cdot, t) \in \mathcal{C}^{\uparrow}$ for all $t \in \R_+$, and $\partial_u \phi_i(u,t)>0$ for all $(u,t) \in \R_+^2$. Hence, by Lemma \ref{derivative2}, $ \bar{\psi}_i= \psi_i \star \phi_i^{-1} \in C^q(\R_+^2, \R_+)$ for $i=1,2,\dots, I$, and so (ii) holds. Next, suppose (ii) holds. Arguing similarly to the previous case, we have that $\bar{\psi}_i, \gamma_i, \gamma_0^{-1} \in C^q(\R_+^2, \R_+)$, and thus Lemma \ref{derivative2} yields that $( \bar{\psi}_i \star \gamma_i) \star \gamma_0^{-1} \in C^q(\R_+^2, \R_+)$. This confirms that (i) also holds.
\end{proof}

Next, we connect the density and regularity conditions for $\rho$ and $\psi$. Recall the definition of $\mathcal{D}_{\mathrm{density}}$ from above Theorem \ref{reallythemainthm}.

\begin{lem}\label{rhopsirelation}
Suppose $\rho=(\rho_i) \in C(\R_+,\R_+)^I$. The following two conditions are equivalent. \\
(i) It holds that $\rho \in \mathcal{D}_{\mathrm{density}}$. \\
(ii) It holds that $\psi \in \mathcal{D}^{(2)}$, where $\psi_i(u):=\int_0^u \rho_i(u')du'$.
\end{lem}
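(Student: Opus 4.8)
The plan is to exploit the relation $\psi_i'=\rho_i$, valid because each $\rho_i$ is continuous, in order to transfer both the smoothness and the density conditions between the $\rho$- and $\psi$-formulations. Setting $\psi_i(u):=\int_0^u\rho_i(u')\,du'$, I first note that $\psi_i\in\mathcal{C}$ holds automatically: $\psi_i$ is continuous, $\psi_i(0)=0$, and $\psi_i$ is non-decreasing since $\rho_i\geq0$. It therefore suffices to establish separately (I) the smoothness equivalence $\rho\in\mathcal{C}^1(\R_+,\R_+)^I\Leftrightarrow\psi\in(\mathcal{C}^2)^I$, and (II) that, granted this smoothness, the pair of conditions \eqref{cond1}--\eqref{cond2} is equivalent to the pair consisting of $\phi_I\in\mathcal{C}^{\uparrow}$ and the density condition defining $\mathcal{D}$.

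For part (I), the key is the identity $(\psi_i^{\R})'=\rho_i^{\R}$, with the extensions as in Subsection~\ref{anprelim}. If $\rho_i\in\mathcal{C}^1(\R_+,\R_+)$ then $\rho_i(0)=0$, whence $\psi_i^{\R}(u)=\int_0^u\rho_i^{\R}(u')\,du'$ for all $u\in\R$; since $\rho_i^{\R}\in C^1(\R,\R)$ this gives $\psi_i^{\R}\in C^2(\R,\R)$, i.e.\ $\psi_i\in\mathcal{C}^2$. Conversely, if $\psi_i\in\mathcal{C}^2$ then $(\psi_i^{\R})'$ is continuous with value $0$ at $0$ from the left and $\rho_i(0)$ from the right, forcing $\rho_i(0)=0$ and hence $(\psi_i^{\R})'=\rho_i^{\R}$; as this lies in $C^1(\R,\R)$, we recover $\rho_i\in\mathcal{C}^1(\R_+,\R_+)$.

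For part (II), I would use that whenever \eqref{cond1} holds one has $\phi_i'(u)=1-\sum_j2(i\wedge j)\rho_j(u)\geq\phi_I'(u)>0$ for every $i$, so each $\phi_i$ is strictly increasing and Cauchy's mean value theorem (dividing by the non-vanishing $\phi_i'$) yields
\[\sup_{u_1,u_2}\frac{\psi_i(u_1)-\psi_i(u_2)}{\phi_i(u_1)-\phi_i(u_2)}=\sup_{u}\frac{\rho_i(u)}{\phi_i'(u)},\]
with the reverse inequality coming from letting $u_1\to u_2$. This identifies the density condition of $\mathcal{D}$ with \eqref{cond2} termwise. In the forward direction, \eqref{cond2} moreover gives a uniform bound: since $\phi_k'\leq1$, $\sum_k2k\rho_k(u)\leq2\sum_kk\,\rho_k(u)/\phi_k'(u)<1$ uniformly in $u$, so $\phi_I'\geq\delta>0$ for some constant $\delta$, whence $\phi_I(u)\geq\delta u\to\infty$ and $\phi_I\in\mathcal{C}^{\uparrow}$.

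The main obstacle is the reverse implication in part (II), namely recovering \eqref{cond1} (that $\phi_I'>0$ \emph{everywhere}) from $\phi_I\in\mathcal{C}^{\uparrow}$ and the density condition. The difficulty is that $\phi_I\in\mathcal{C}^{\uparrow}$ only forces $\phi_I$ to be strictly increasing, hence $\phi_I'\geq0$ but possibly vanishing at isolated points; one then cannot divide by $\phi_i'$ a priori, so the displayed identity is not directly available and the argument threatens to be circular. I would break the circularity by a first-zero argument using continuity. Since $\rho_k(0)=0$ (from the smoothness established in part (I)), $\phi_I'(0)=1>0$; if $\phi_I'$ vanished somewhere, let $u_*$ be the infimum of its zero set, so that $\phi_I'>0$ on $[0,u_*)$ and, by continuity, $\phi_I'(u_*)=0$. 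On $[0,u_*)$ all $\phi_i'>0$, so $\rho_i(u)/\phi_i'(u)\leq s_i:=\sup_{u_1,u_2}(\psi_i(u_1)-\psi_i(u_2))/(\phi_i(u_1)-\phi_i(u_2))$, and the density condition $\sum_i i s_i<\tfrac12$ combined with $\phi_k'\leq1$ yields $\phi_I'(u)\geq1-2\sum_i i s_i>0$ throughout $[0,u_*)$; letting $u\to u_*$ contradicts $\phi_I'(u_*)=0$. Thus $\phi_I'>0$ everywhere, which is \eqref{cond1}, and the displayed identity then promotes the density condition to \eqref{cond2}, completing the equivalence.
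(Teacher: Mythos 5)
Your proof is correct, and most of it coincides with the paper's: the smoothness equivalence via $(\psi_i^{\R})'=\rho_i^{\R}$, the uniform positive lower bound on $\phi_I'$ coming from \eqref{cond2} (which gives $\phi_I\in\mathcal{C}^\uparrow$), and the identification of the density condition in $\mathcal{D}$ with \eqref{cond2} through
\[\sup_{u_1,u_2}\frac{\psi_i(u_1)-\psi_i(u_2)}{\phi_i(u_1)-\phi_i(u_2)}=\sup_u\frac{\rho_i(u)}{\phi_i'(u)}\]
(the paper phrases this as a change of variables to $\bar{\psi}_i=\psi_i\circ\phi_i^{-1}$ rather than via Cauchy's mean value theorem, but the content is the same). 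The one genuine divergence is the step you single out as the main obstacle: recovering \eqref{cond1} from $\phi_I\in\mathcal{C}^\uparrow$ together with the density condition. Your first-zero argument is valid, but the circularity you worry about does not arise in the paper's route (Lemma \ref{regularity}(a)): since the density condition in $\mathcal{D}$ is stated in terms of difference quotients rather than derivative ratios, one can bound, for any fixed $u$ and $\varepsilon\neq0$,
\[\sum_{i=1}^I i\,\frac{\psi_i(u+\varepsilon)-\psi_i(u)}{\varepsilon}\;\le\;\sum_{i=1}^I i\,\frac{\psi_i(u+\varepsilon)-\psi_i(u)}{\phi_i(u+\varepsilon)-\phi_i(u)}\;<\;\frac{1}{2},\]
using only that $\left|\phi_i(u+\varepsilon)-\phi_i(u)\right|\le|\varepsilon|$ and that each $\phi_i$ is strictly increasing (Lemma \ref{philem}), so the denominator is nonzero with no a priori control on $\phi_i'$ needed; letting $\varepsilon\to0$ gives $\sum_i i\psi_i'(u)<\tfrac12$ pointwise and hence $\phi_I'(u)>0$ everywhere in one stroke. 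So the paper's argument is shorter and avoids the connectedness argument altogether; your version is a correct, slightly more laborious alternative that reaches the same conclusion.
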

\begin{proof}
Suppose (i) holds. It then clearly holds that $\psi \in \mathcal{C}^I$. Moreover, since $\psi_i^{\R}(u)=\int_{-\infty}^u \rho_i^{\R}(u')du'$, and $\rho_i^{\R} \in C^1(\R,\R)$ by assumption, $\psi_i^{\R} \in C^2(\R,\R)$. Hence $\psi_i \in \mathcal{C}^{(2)}$. Next, since $\rho$ satisfies \eqref{cond1} and \eqref{cond2},
\[\sup_{u\in\mathbb{R}_+}\sum_{i=1}^I2i\rho_i(u)\leq \sum_{i=1}^I i \sup_{u\in\mathbb{R}_+}\frac{\rho_i(u)}{1-\sum_{j=1}^I2(i\wedge j)\rho_j(u)}<\frac{1}{2},\]
and so $\inf_{u\in\R_+}\phi_I'(u)=\inf_{u\in\R_+}(1-2\sum_{i}i\rho_i (u)) >0$. Hence $\phi_I\in \mathcal{C}^{\uparrow}$. Now we study the density condition. First, as in the proof of Lemma \ref{regularity}(a), we see that $\phi_i \in \mathcal{C}^{\uparrow(2)}$ for all $i$. Moreover, since $\phi_i'(u) \ge \phi_I'(u)>0$ for all $u \in \R_+$, applying Lemma \ref{derivative}(a,c) yields that $\phi_i^{-1} \in \mathcal{C}^{\uparrow(2)}$ and $\bar{\psi}_i := \psi_i \circ \phi_i^{-1} \in \mathcal{C}^{(2)}$. By making a change of variables as in the proof of Lemma \ref{tpprop}(a), we have that
\begin{align*}
\sup_{u_1,u_2}\frac{\psi_i(u_1)-\psi_i(u_2)}{\phi_i(u_1)-\phi_i(u_2)}=\sup_{z_1,z_2}\frac{\bar{\psi}_i(z_1)-\bar{\psi}_i(z_2)}{z_1-z_2},
\end{align*}
and since $\psi_i'(u)=\bar{\psi}_i'(\phi_i(u))\phi_i'(u)$, this expression can be rewritten as follows:
\[\sup_{z_1,z_2}\frac{\bar{\psi}_i(z_1)-\bar{\psi}_i(z_2)}{z_1-z_2} = \sup_{z}\bar{\psi}_i'(z)= \sup_{u} \frac{\psi_i'(u)}{\phi_i'(u)}=\sup_u \frac{\rho_i(u)}{1-\sum_{j}2(i\wedge j)\rho_j(u)}.\]
Therefore, appealing to \eqref{cond2}, we can conclude that $\psi \in \mathcal{D}^{(2)}$.

Next, suppose (ii) holds. Since $\psi^{\R}_i \in C^2(\R,\R)$ implies $\rho_i \in C^1(\R_+,\R_+)$, $\rho_i(0)=0$ and $\rho_i'(0)=0$, we find that $\rho \in \mathcal{C}^1(\R_+,\R_+)^I$. Also, by Lemma \ref{regularity}, $\phi_i \in \mathcal{C}^{\uparrow(2)}$ and $\phi_i' (u)>0$ for all $u\in\R_+$, and so  we have $1- \sum_{i}2i\rho_i(u)= \phi_I'(u) >0 $ for all $u \in \R_+$, which gives \eqref{cond1}. Moreover, as we have
\[\sup_u \frac{\rho_i(u)}{1-\sum_{j}2(i\wedge j)\rho_j(u)} = \sup_{u} \frac{\psi_i'(u)}{\phi_i'(u)} \le \sup_{u_1,u_2}  \frac{\psi_i(u_1)-\psi(u_2)}{\phi_i(u_1)-\phi_i(u_2)},\]
the condition at \eqref{cond2} also holds.
\end{proof}

As the last of the preliminaries, we give a regularity property and a partial differential equation for dynamics given by the simple shift operator defined at \eqref{thetatdef}. We recall that $v_i:=i$.

\begin{lem}\label{barpsidynamics}
Let $q \in \N$. Suppose $\bar{\psi}^0 \in \bar{\mathcal{D}}^{(q)}$, and define $\bar{\psi}(z,t):=(\theta_t \bar{\psi}^0)(z)$. It is then the case that $\bar{\psi} \in C^q(\R_+^2,\R_+)^I$ and
\begin{equation}\label{barpsiPDEini}
\begin{cases}
&\vspace{3pt} \displaystyle{\partial_t \bar{\psi}_i =- v_i \partial_z \bar{\psi}_i}, \qquad i=1,2,\dots,I,  \\
& \displaystyle{\bar{\psi}(\cdot,0)=\bar{\psi}^0.}
\end{cases}
\end{equation}
Moreover, $\bar{\psi}(\cdot,t) \in \bar{\mathcal{D}}^{(q)}$ for all $t \in \R_+$.
\end{lem}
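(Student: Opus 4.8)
The plan is to reduce the whole lemma to a single explicit identity for $\bar{\psi}_i$. First I would observe that, for each $i$ and all $(z,t) \in \R_+^2$,
\[
\bar{\psi}_i(z,t) = (\bar{\psi}^0_i)^{\R}(z - it),
\]
where $(\bar{\psi}^0_i)^{\R}$ is the extension to $\R$ introduced in Subsection \ref{anprelim}. This is verified by a short case analysis on the sign of $z - it$: when $z \geq it$ both sides equal $\bar{\psi}^0_i(z - it)$, while when $0 \leq z < it$ the left-hand side equals $\bar{\psi}^0_i(0) = 0$ (using $\bar{\psi}^0_i \in \mathcal{C}$) and the right-hand side equals $(\bar{\psi}^0_i)^{\R}(z - it) = 0$ since $z - it < 0$. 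The crucial point is that the truncation $(\cdot)\vee 0$ in the definition of $\theta_t$ at \eqref{thetatdef} is exactly absorbed by the convention $f^{\R} \equiv 0$ on $(-\infty,0)$, so that on $\R_+^2$ the function $\bar{\psi}_i$ is the restriction of a pure spatial translate of the single-variable function $(\bar{\psi}^0_i)^{\R}$.

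Given this identity, the first two assertions follow at once. For the joint regularity I would write $\bar{\psi}_i = (\bar{\psi}^0_i)^{\R} \circ A$ with $A(z,t) := z - it$ an affine, hence $C^\infty$, map $\R^2 \to \R$; since $\bar{\psi}^0_i \in \mathcal{C}^q$ means precisely $(\bar{\psi}^0_i)^{\R} \in C^q(\R,\R)$, the composition lies in $C^q(\R^2,\R)$, and its restriction yields $\bar{\psi}_i \in C^q(\R_+^2,\R_+)$, non-negativity being inherited from $\bar{\psi}^0_i$. For the partial differential equation, as $q \geq 1$ I may differentiate the identity directly, obtaining $\partial_z \bar{\psi}_i = ((\bar{\psi}^0_i)^{\R})'(z - it)$ and $\partial_t \bar{\psi}_i = -i\,((\bar{\psi}^0_i)^{\R})'(z - it)$, whence $\partial_t \bar{\psi}_i = -i\,\partial_z \bar{\psi}_i = -v_i \partial_z \bar{\psi}_i$. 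The initial condition $\bar{\psi}(\cdot,0) = \bar{\psi}^0$ is immediate since $\theta_0$ is the identity.

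For the final claim that $\bar{\psi}(\cdot,t) \in \bar{\mathcal{D}}^q$ for every $t$, I would split $\bar{\mathcal{D}}^q = \bar{\mathcal{D}} \cap (\mathcal{C}^q)^I$. Membership in $\bar{\mathcal{D}}$ for each fixed $t$ is exactly the content of Proposition \ref{tpprop}(b), so nothing new is needed there; in particular this already provides $(\theta_t\bar{\psi}^0)_i \in \mathcal{C}$. For the regularity component, the displayed identity gives $((\theta_t\bar{\psi}^0)_i)^{\R} = (\bar{\psi}^0_i)^{\R}(\cdot - it)$, a spatial translate of a $C^q(\R,\R)$ function and therefore again in $C^q(\R,\R)$, so that $(\theta_t \bar{\psi}^0)_i \in \mathcal{C}^q$ for each $i$ and each $t$. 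Combining the two pieces gives $\bar{\psi}(\cdot,t) \in \bar{\mathcal{D}}^q$.

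This lemma is essentially bookkeeping, so I do not anticipate a genuine obstacle; the single point requiring care — and the step I would write out most explicitly — is the verification of the displayed identity, since it is precisely there that the interaction between the truncation $(\cdot)\vee 0$ and the $\R$-extension convention must be handled correctly. Once that identity is in hand, every remaining assertion is a routine consequence of the chain rule together with Proposition \ref{tpprop}(b).
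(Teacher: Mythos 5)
Your proposal is correct and follows essentially the same route as the paper: both rest on the identity $\bar{\psi}_i(z,t)=(\bar{\psi}^0_i)^{\R}(z-v_it)$, deduce the joint $C^q$ regularity and the transport equation by viewing $\bar{\psi}_i$ as the restriction to $\R_+^2$ of a translate of the $C^q(\R,\R)$ extension, and obtain the preservation of $\bar{\mathcal{D}}^q$ by combining the translation identity for the extensions with Proposition \ref{tpprop}(b). No gaps.
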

\begin{proof}
By definition $\bar{\psi}_i(z,t)=\bar{\psi}_i^0((z-v_it) \vee 0)=(\psi^0_i)^{\R}(z-v_it)$. Since $(\psi^0_i)^{\R} \in C^q(\R,\R)$ by assumption, the function $\bar{\psi}_i^{\R^2}(z,t):=(\psi^0_i)^{\R}(z-v_it)$ on $\R^2$ is in $C^q(\R^2,\R)$, and satisfies
\[\partial_t\bar{\psi}_i^{\R^2} =- v_i \partial_z \bar{\psi}_i^{\R^2}\]
on $\R^2$. Since $\bar{\psi}_i(z,t)$ is the restriction of $\bar{\psi}_i^{\R^2}$ to $\R_+^2$, we thus have $\bar{\psi} \in C^q(\R_+^2,\R_+)^I$ and also \eqref{barpsiPDEini} holds. Finally, since $(\bar{\psi}_i(\cdot,t)^{\R})(z)=\bar{\psi}_i^{\R^2}(z,t)$ for all $t \in \R_+$, $\bar{\psi}_i(\cdot,t) \in \mathcal{C}^{(q)}$. In conjunction with Proposition \ref{tpprop}(b), this completes the proof.
\end{proof}

\subsection{Derivation of partial differential equations}\label{pdesubsec}

In this subsection, we put together the various results we have established so far to complete the proofs of Theorems \ref{reallythemainthm} and \ref{mainthm2}, as well as that of Corollary \ref{maincor1}. To this end, we start by giving a lemma that connects the free and effective speeds of solitons in a slightly different way to \eqref{effv}. In particular, we now define $\kappa_{ij}:=2( i \wedge j)$, and for $\rho\in\mathbb{R}_+^I$, introduce a matrix $M^*=M^*(\rho)$ by setting
\begin{equation}\label{m*matrix}
\left\{
  \begin{array}{ll}
    \displaystyle{M_{ii}^*(\rho)=1-\sum_{j \neq i}\kappa_{ij}\rho_j}, &\vspace{3pt} \\
    \displaystyle{M_{ij}^*(\rho)=\kappa_{ij}\rho_i}, & i\neq j.
  \end{array}
\right.
\end{equation}
Observe that, since $\kappa_{ij}=\kappa_{ji}$, $M^*$ is the transpose of $M=M(\rho)$, as defined at \eqref{matrix}. (The reason we define $M^*$ in the way we do is to highlight that the argument we give in the next two results applies more generally than to the symmetric case of interest in the present work, see Remark \ref{follrem} below.) Again recall that $v_i=i$.

\begin{lem}\label{M-matrix} If $\psi \in \mathcal{D}^{(1)}$, then
\[v_i \psi_i'=\sum_{j=1}^{I}M_{ij}^*(\psi')v^{\mathrm{eff}}_j(\psi') \psi_j'. \]
\end{lem}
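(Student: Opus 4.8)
The plan is to treat the claimed identity as a pointwise algebraic consequence of the defining equation \eqref{effectivespeed} for the effective speeds, evaluated at the density vector $\rho := \psi'(u)$ for each fixed $u \in \mathbb{R}_+$. First I would verify that $v^{\mathrm{eff}}(\rho)$ is in fact well-defined at this $\rho$. Since $\psi \in \mathcal{D}^1 \subseteq \mathcal{D}$, the computation carried out in the proof of Lemma \ref{regularity}(a) gives $\sum_{i=1}^I i\,\psi_i'(u) < \tfrac12$, equivalently $\sum_{i=1}^I 2i\rho_i < 1$; as each $\rho_i = \psi_i'(u) \geq 0$ (because $\psi_i$ is non-decreasing), this also forces $\rho_i < \tfrac{1}{2i} \leq 1$, so that $\rho \in [0,1]^I$. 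Hence Lemma \ref{minvert} applies, $M(\rho)$ is invertible, and $v^{\mathrm{eff}}(\rho) = M(\rho)^{-1}v$ is well-defined.

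With this in place, the remainder is a direct expansion. Using the definition \eqref{m*matrix} of $M^*$, namely $M^*_{ii}(\rho) = 1 - \sum_{j\neq i}2(i\wedge j)\rho_j$ and $M^*_{ij}(\rho) = 2(i\wedge j)\rho_i$ for $i\neq j$, I would split the right-hand side into its diagonal and off-diagonal contributions and factor out $\rho_i$, obtaining
\[
\sum_{j=1}^I M^*_{ij}(\rho)\,v^{\mathrm{eff}}_j(\rho)\,\rho_j = \rho_i\left(v^{\mathrm{eff}}_i(\rho) + \sum_{j\neq i}2(i\wedge j)\rho_j\bigl(v^{\mathrm{eff}}_j(\rho)-v^{\mathrm{eff}}_i(\rho)\bigr)\right).
\]
Because the $j=i$ summand in \eqref{effectivespeed} carries the factor $v^{\mathrm{eff}}_i(\rho)-v^{\mathrm{eff}}_i(\rho)=0$, that equation may be read with the sum running only over $j\neq i$, and rearranging it shows precisely that the bracketed expression above equals $v_i$. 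The right-hand side therefore collapses to $v_i\rho_i = v_i\psi_i'$, which is the desired identity.

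The step I would be most careful about is exactly this bookkeeping that passes between the $j\neq i$ sum produced by factoring and the full sum $j=1,\dots,I$ appearing in \eqref{effectivespeed}; it is harmless only because of the vanishing diagonal summand just noted. Beyond that there is no genuine obstacle: the identity is purely algebraic at each fixed $u$, requiring nothing about the regularity of $\psi$ in $u$ other than the existence of $\psi'$ and the membership $\psi \in \mathcal{D}$ needed to guarantee that the effective speeds exist. In particular, the transpose relationship between $M$ and $M^*$ plays no role in this direction; as flagged in the discussion following \eqref{m*matrix}, it is relevant only for interpreting the statement and for the generality advertised in the subsequent remark.
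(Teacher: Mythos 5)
Your proposal is correct and is essentially the paper's own argument: the paper multiplies the relation $v_i=\sum_{j}M_{ij}(\psi')v_j^{\mathrm{eff}}(\psi')$ (which is just \eqref{effectivespeed} in matrix form) by $\psi_i'$ and regroups into $M^*$, which is the same two-term bookkeeping you perform in the reverse direction starting from the right-hand side. Your preliminary check that $\sum_i 2i\psi_i'(u)<1$ so that $v^{\mathrm{eff}}(\psi'(u))$ is well-defined via Lemma \ref{minvert} is a harmless (and correct) addition that the paper leaves implicit.
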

\begin{proof}
By the definition of $v^{\mathrm{eff}}$ (i.e.\ \eqref{effv}), we have that
\[v_i=\sum_{j=1}^I M_{ij}(\psi') v_j^{\mathrm{eff}}(\psi').\]
Multiplying both sides by $\psi_i'$, we obtain
\begin{align*}
v_i \psi_i' & = \sum_{j=1}^{I} M_{ij}(\psi') v_j^{\mathrm{eff}}(\psi')\psi_i' \\
&=  \left(1- \sum_{j \neq i} \kappa_{ij} \psi_j' \right) v_i^{\mathrm{eff}}(\psi') \psi_i' +  \sum_{j \neq i}  \kappa_{ij} \psi_j'   v_j^{\mathrm{eff}}(\psi') \psi_i'\\
 & = M_{ii}^* (\psi') v_i^{\mathrm{eff}}(\psi') \psi_i'  +  \sum_{j \neq i}M_{ij}^* (\psi') v_j^{\mathrm{eff}}(\psi') \psi_j',
 \end{align*}
 which completes the proof.
 \end{proof}

The next lemma relates the partial differential equation for the integrated densities of solitons in space with the corresponding equation for solitons on their effective scale.

\begin{lem}\label{equivalenceofPDE}
Let $q \in \N$. Suppose $\psi(\cdot,t) \in \mathcal{F}$ and set $\bar{\psi}(\cdot,t) := \Upsilon (\psi(\cdot,t))$ for all $t \in \R_+$. The following two conditions are equivalent.  \\
(i) It holds that $\psi(\cdot,t) \in \mathcal{D}^{(q)}$ for all $t \in \R_+$, $\psi \in C^q(\mathbb{R}_+^2,\mathbb{R}_+)^{I}$ and
\begin{equation}\label{psiPDE2}
\partial_t \psi_i =- v_i^{\mathrm{eff}}(\partial_u \psi) \partial_u \psi_i.
\end{equation}
(ii) It holds that $\bar{\psi} (\cdot,t) \in \bar{\mathcal{D}}^{(q)}$ for all $t \in \R_+$, $\bar{\psi}  \in C^q(\mathbb{R}_+^2,\mathbb{R}_+)^{I}$ and
\begin{equation}\label{barpsiPDE}
\partial_t \bar{\psi}_i =- v_i \partial_z \bar{\psi}_i.
\end{equation}
\end{lem}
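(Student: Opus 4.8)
The plan is to obtain the PDE equivalence from two ingredients already in place: the joint regularity equivalence of Lemma \ref{jointregularity}, and the pointwise matrix identity of Lemma \ref{M-matrix} combined with the invertibility supplied by Lemma \ref{minvert}. First I would note that the regularity portions of (i) and (ii)---namely that $\psi(\cdot,t)\in\mathcal{D}^q$ for all $t$ with $\psi\in C^q(\R_+^2,\R_+)^I$, respectively that $\bar\psi(\cdot,t)\in\bar{\mathcal{D}}^q$ for all $t$ with $\bar\psi\in C^q(\R_+^2,\R_+)^I$---are equivalent by Lemma \ref{jointregularity}. Consequently it suffices to assume this common regularity and show that, under it, the PDE \eqref{psiPDE2} holds if and only if the PDE \eqref{barpsiPDE} holds.

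Next I would set up the chain-rule relations linking the two pictures. By definition of $\Upsilon$ we have $\bar\psi_i(z,t)=\psi_i(\phi_i^{-1}(z,t),t)$, where, for each fixed $t$, $\phi_i(u,t)=u-\sum_{j}2(i\wedge j)\psi_j(u,t)$. Under the shared regularity, Lemma \ref{regularity}(a) gives $\partial_u\phi_i(u,t)>0$, so by Lemma \ref{derivative}(b) the map $\phi_i^{-1}$ is jointly $C^q$ and the chain rule is licensed. Writing $u=\phi_i^{-1}(z,t)$ and differentiating the identity $\phi_i(\phi_i^{-1}(z,t),t)=z$, I obtain
\[\partial_z\bar\psi_i=\frac{\partial_u\psi_i}{\partial_u\phi_i},\qquad \partial_t\bar\psi_i=\partial_t\psi_i-\frac{\partial_u\psi_i}{\partial_u\phi_i}\,\partial_t\phi_i,\]
with all right-hand sides evaluated at $(u,t)$. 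Since $\partial_t\phi_i=-\sum_{j}2(i\wedge j)\,\partial_t\psi_j=-\sum_j\kappa_{ij}\partial_t\psi_j$, substituting these into \eqref{barpsiPDE} and clearing the positive factor $\partial_u\phi_i$ turns \eqref{barpsiPDE} into the pointwise relation
\[\partial_u\phi_i\,\partial_t\psi_i+\partial_u\psi_i\sum_{j}\kappa_{ij}\partial_t\psi_j=-v_i\,\partial_u\psi_i,\]
which is valid for all $(u,t)$ because $z\mapsto u$ is a bijection at each fixed time. Using $\partial_u\phi_i=1-\sum_j\kappa_{ij}\partial_u\psi_j$ and the definition \eqref{m*matrix} of $M^*$, this rearranges exactly to the linear system
\[\sum_{j=1}^I M^*_{ij}(\partial_u\psi)\,\partial_t\psi_j=-v_i\,\partial_u\psi_i,\qquad i=1,2,\dots,I.\]

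Finally I would identify the solutions of this system. Lemma \ref{M-matrix} states precisely that $\partial_t\psi_j=-v_j^{\mathrm{eff}}(\partial_u\psi)\partial_u\psi_j$ solves it, and this is \eqref{psiPDE2}; that already yields the implication (i)$\Rightarrow$(ii). For the converse, I would observe that at each $(u,t)$ the vector $\partial_u\psi(u,t)$ lies in $[0,1]^I$ and satisfies $\sum_i 2i\,\partial_u\psi_i<1$ (this pointwise bound being exactly what is extracted from the density condition defining $\mathcal{D}$ in the proof of Lemma \ref{regularity}(a)), so Lemma \ref{minvert} makes $M(\partial_u\psi)$, and hence its transpose $M^*(\partial_u\psi)$, invertible. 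The linear system therefore has a unique solution, forcing $\partial_t\psi_i=-v_i^{\mathrm{eff}}(\partial_u\psi)\partial_u\psi_i$, i.e.\ \eqref{psiPDE2}. Combining the chain-rule reduction with these two facts gives \eqref{psiPDE2} $\iff$ \eqref{barpsiPDE} under the common regularity, which completes the equivalence of (i) and (ii). The one genuine subtlety---and the step I would be most careful about---is the converse direction: the chain-rule computation by itself only shows that the effective-speed ansatz \emph{solves} the system, and it is the invertibility from Lemma \ref{minvert} that upgrades this to the \emph{unique} solution and thereby closes the equivalence; one must correspondingly verify that the hypotheses of Lemma \ref{minvert} hold pointwise, which is exactly where the density condition of $\mathcal{D}$ is used.
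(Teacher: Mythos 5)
Your proposal is correct and follows essentially the same route as the paper: reduce to the PDE equivalence via Lemma \ref{jointregularity}, use the chain rule on $\bar\psi_i=\psi_i\circ\phi_i^{-1}$ to rewrite \eqref{barpsiPDE} as the linear system $\sum_j M^*_{ij}(\partial_u\psi)\partial_t\psi_j=-v_i\partial_u\psi_i$, and then close the equivalence with Lemma \ref{M-matrix} in one direction and the invertibility of $M^*$ (Lemma \ref{minvert}) in the other. Your explicit remark that the pointwise hypothesis of Lemma \ref{minvert} is supplied by the density condition defining $\mathcal{D}$ (as in the proof of Lemma \ref{regularity}(a)) is a correct and welcome clarification of a step the paper leaves implicit.
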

\begin{proof} First note that under either condition (i) or (ii), Lemma \ref{jointregularity} yields that $\psi(\cdot,t) \in \mathcal{D}^{(q)}$ for all $t \in \R_+$, $\psi \in C^q(\mathbb{R}_+^2,\mathbb{R}_+)^{I}$, $\bar{\psi}(\cdot,t) \in \bar{\mathcal{D}}^{(q)}$ for all $t \in \R_+$ and $\bar{\psi}  \in C^q(\mathbb{R}_+^2,\mathbb{R}_+)^{I}$. Hence, we only need to prove that under these four conditions, \eqref{psiPDE2} and \eqref{barpsiPDE} are equivalent. This being the case, we will assume that these four conditions hold for the remainder of the proof.

By Lemma \ref{M-matrix}, we have that
\begin{equation}\label{relation1}
v_i \partial_u \psi_i = \sum_{j=1}^{I}M_{ij}^* (\partial_u \psi) v_j^{\mathrm{eff}}(\partial_u \psi) \partial_u \psi_j.
\end{equation}
Also, by direct computation,
\begin{align*}
\sum_{j=1}^IM_{ij}^*(\partial_u \psi) \partial_t \psi_j & =  \left(1 -\sum_{j \neq i} \kappa_{ij}  \partial_u  \psi_j\right)   \partial_t \psi_i  + \sum_{ j \neq i} \kappa_{ij}  \partial_u  \psi_i  \partial_t \psi_j \\
& =  \left(1 -\sum_{j=1 }^I \kappa_{ij}  \partial_u  \psi_j  \right) \partial_t \psi_i  + \sum_{j=1 }^I \kappa_{ij}  \partial_u  \psi_i  \partial_t \psi_j.
\end{align*}
Substituting $\partial_u \phi_i= 1- \sum_{j=1}^I \kappa_{ij}\partial_u\psi_j$ and $\partial_t \phi_i= - \sum_{j=1}^I \kappa_{ij}\partial_t\psi_j$ into the above relation, we obtain
\begin{equation}\label{relation2}
\sum_{j=1}^IM_{ij}^*(\partial_u \psi) \partial_t \psi_j = \partial_t \psi_i \partial_u \phi_i - \partial_u  \psi_i   \partial_t \phi_i.
\end{equation}
Furthermore, since $\psi_i(u,t)=\bar{\psi}_i(\phi_i(u,t),t)$, applying the chain rule yields
\[\partial_u \psi_i (u,t)= (\partial_z\bar{\psi}_i) ( \phi_i(u,t),t) (\partial_u \phi_i) (u,t)\]
and also
\[\partial_t \psi_i (u,t)= (\partial_t \bar{\psi}_i)  ( \phi_i(u,t),t) + (\partial_z\bar{\psi}_i) ( \phi_i(u,t),t) (\partial_t \phi_i) (u,t).\]
In particular, the latter two equalities are rearranged to give
\begin{equation}\label{derivativerelation1}
 (\partial_z\bar{\psi}_i) ( \phi_i(u,t),t) = \frac{ \partial_u \psi_i (u,t)}{\partial_u \phi_i (u,t)},
 \end{equation}
 and
\begin{equation}\label{derivativerelation2}
(\partial_t \bar{\psi}_i)  ( \phi_i(u,t),t)  = \partial_t \psi_i (u,t)- \frac{ \partial_u \psi_i (u,t)}{\partial_u \phi_i (u,t)} \partial_t \phi_i (u,t),
\end{equation}
where we remind the reader that, by Lemma \ref{regularity}, $\partial_u \phi_i (u,t) >0$ for all $u,t \in \R_+^2$ under the four conditions that we are assuming. Combining equations \eqref{relation1}, \eqref{relation2}, \eqref{derivativerelation1} and \eqref{derivativerelation2}, we thus find that
\begin{align*}
v_i (\partial_z\bar{\psi}_i) ( \phi_i(u,t),t) &= \frac{1}{\partial_u \phi_i } \sum_{j=1}^{I}M_{ij}^* (\partial_u \psi) v_j^{\mathrm{eff}}(\partial_u \psi) \partial_u \psi_j, \\
(\partial_t \bar{\psi}_i)  ( \phi_i(u,t),t)  & =   \frac{1}{\partial_u \phi_i} \sum_{j=1}^IM_{ij}^*(\partial_u \psi) \partial_t \psi_j,
\end{align*}
where we omit the variables $(u,t)$ in the right hand side.

Now, suppose \eqref{psiPDE2} holds. It then follows from the above relations that
\[(\partial_t \bar{\psi}_i)  ( \phi_i(u,t),t) =- v_i (\partial_z\bar{\psi}_i) ( \phi_i(u,t),t)\]
for all $(u,t)$, which implies \eqref{barpsiPDE}. On the other hand, if we assume \eqref{barpsiPDE}, then we have
\[\sum_{j=1}^IM_{ij}^*(\partial_u \psi) \partial_t \psi_j = -\sum_{j=1}^{I}M_{ij}^* (\partial_u \psi) v_j^{\mathrm{eff}}(\partial_u \psi) \partial_u \psi_j\]
for all $i$. Since $M(\partial_u \psi)$ is invertible by Lemma \ref{minvert}, its transpose $M^*(\partial_u \psi)$ is also invertible, and so \eqref{psiPDE2} also holds.
\end{proof}

\begin{rem}\label{follrem}
To show the equivalence of the two partial differential equations \eqref{psiPDE2} and \eqref{barpsiPDE} in the above proof, except the bijectivity between classes of functions with certain regularity, we only use the following relations:
\begin{align*}
 \phi_i(u) & =u-\sum_{j} \kappa_{ij} \psi_j(u), \\
\bar{\psi}_i  & =\psi_i \circ \phi_i^{-1}, \\
 v&=M(\psi')v^{\mathrm{eff}}(\psi'),
\end{align*}
with $M_{ii}(\psi'):=1-\sum_{i \neq j} \kappa_{ij} \psi'_j$ and $M_{ij} :=\kappa_{ij}\psi'_j$ for $i\neq j$, and the invertibility of the matrix $M^*$ defined by \eqref{m*matrix}. Note that we do not use the symmetry of $\kappa$. Hence, for more general integrable systems with isolated soliton speeds $v=(v_i)$ and phase shifts $\kappa=(\kappa_{ij})$, once we can establish the bijectivity of the map $\psi \mapsto \bar{\psi}$ given by the above relation in a class of functions with appropriate regularity, and the invertibility of $M^*$, we can apply the same argument.
\end{rem}

\begin{proof}[Proof of Theorem \ref{mainthm2}]
(a) Under the assumption, Lemma \ref{regularity} gives us that $\bar{\psi}^0:=\Upsilon(\psi^0) \in \bar{\mathcal{D}}^{(1)}$. If we define $\bar{\psi}_i(z,t):=(\theta_t \psi^0)_i (z)$, then Lemma \ref{barpsidynamics} implies $\bar{\psi} (\cdot,t) \in \bar{\mathcal{D}}^{(1)}$ for all $t \in \R_+$, $\bar{\psi}  \in C^1(\mathbb{R}_+^2,\mathbb{R}_+)^{I}$ and \eqref{barpsiPDE} holds. Since $\bar{\psi} (\cdot,t) =\Upsilon( \psi(\cdot,t))$, from Lemma \ref{equivalenceofPDE}, we can conclude that $\psi \in C^1(\mathbb{R}_+^2,\mathbb{R}_+)^{I}$, $\psi (\cdot,t) \in \mathcal{D}^{(1)}$ for all $t \in \R_+$ and \eqref{PDEpsi} holds. It remains to verify the uniqueness claim. Suppose there are two solutions $\psi$ and $\tilde{\psi}$ to \eqref{PDEpsi} satisfying $\psi,\tilde{\psi} \in C^1(\mathbb{R}_+^2,\mathbb{R}_+)^{I}$, $\psi (\cdot,t), \tilde{\psi}(\cdot,t) \in \mathcal{D}^{(1)}$ for all $t \in \R_+$. By Lemma \ref{equivalenceofPDE}, it is then the case that $\bar{\psi} (\cdot,t) :=\Upsilon( \psi(\cdot,t))$ and $\bar{\tilde{\psi}} (\cdot,t) =\Upsilon( \tilde{\psi}(\cdot,t))$ satisfy the partial differential equation \eqref{barpsiPDE} with the same initial condition $\Upsilon(\psi^0)$. Since the classical solution of \eqref{barpsiPDE} is clearly unique, $\bar{\psi} =\bar{\tilde{\psi}}$, and so $\psi=\tilde{\psi}$.\\
(b) By Lemma \ref{rhopsirelation}, we have $\psi^0 \in \mathcal{D}^{(2)}$. By the same argument as was used to establish (a), it follows that $\psi \in C^2(\mathbb{R}_+^2,\mathbb{R}_+)^{I}$, $\psi$ satisfies $\psi (\cdot,t) \in \mathcal{D}^{(2)}$ for all $t \in \R_+$, and \eqref{PDEpsi} holds. Therefore, $\rho \in  C^1(\R_+^2, \R)^I$ and by Lemma \ref{rhopsirelation}, $\rho(\cdot, t) \in \mathcal{D}_{\mathrm{density}}$ for all $t \in \R_+$. Also, by taking space derivatives of both sides of \eqref{PDEpsi}, $\rho$ satisfies \eqref{PDErho}. Finally, regarding uniqueness, suppose there are two solutions $\rho,\tilde{\rho}$ of \eqref{PDErho} satisfying $\rho(\cdot, t) \in \mathcal{D}_{\mathrm{density}}$ for all $t \in \R_+$ and $\rho, \tilde{\rho} \in C^1(\R_+^2,\R_+)$. Defining $\psi_i(u,t):=\int_0^u \rho_i(u',t)du'$ and $\tilde{\psi}_i(u,t):=\int_0^u \tilde{\rho}_i(u',t)du'$, we obtain that $\psi,\tilde{\psi} \in C^1(\mathbb{R}_+^2,\mathbb{R}_+)^{I}$ and, from Lemma \ref{rhopsirelation}, $\psi (\cdot,t), \tilde{\psi}(\cdot,t) \in \mathcal{D}^{(2)}$ for all $t \in \R_+$. Moreover, integrating \eqref{PDErho} with respect to $u$ gives \eqref{PDEpsi}, and so we can apply the uniqueness result for $\psi, \tilde{\psi}$ to complete the proof.
\end{proof}

\begin{lem}\label{convergencequivalence}
For each $N\in\mathbb{N}$, suppose $\eta$ is a random configuration with distribution $\mathbf{P}_N$ supported on $\Omega_I$ for some $I\in \mathbb{N}$ (independent of $N$). For $\rho=(\rho_i)_{i=1}^I\in \mathcal{D}_{\mathrm{density}}$, the following conditions are equivalent. \\
(i) For every $\varepsilon>0$ and $(F_i)_{i=1}^I \in C_0(\mathbb{R}_+,\mathbb{R})^I$,
\[\lim_{N\rightarrow\infty} \mathbf{P}_N\left(\sup_{i\in\{1,2,\dots,I\}} \left| \int_{\R_+} F_i(u) \pi_i^{N,0} (du)  - \int_{\R_+} F_i(u) \rho_i(u)du  \right|  > \varepsilon\right) =0.\]
(ii) For every $\varepsilon>0$ and $u_0\in(0,\infty)$,
\[\lim_{N\rightarrow\infty}\mathbf{P}_N\left(\sup_{i\in\{1,2,\dots,I\}}\sup_{u\leq u_0}\left|\psi_i^N(u,0)-\psi_i(u)\right|>\varepsilon\right)=0,\]
where $\psi_i(u)=\int_0^u \rho_i(u')du'$.
\end{lem}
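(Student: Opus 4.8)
The plan is to recognise this as the standard equivalence between vague convergence of the empirical measures $\pi_i^{N,0}$ (tested against compactly supported continuous functions) and locally uniform convergence of their distribution functions $\psi_i^N(\cdot,0)$, exploiting that the limiting distribution function $\psi_i(u)=\int_0^u\rho_i(u')du'$ is continuous (indeed $C^1$) because $\rho\in\mathcal{D}_{\mathrm{density}}\subseteq C^1(\mathbb{R}_+,\mathbb{R}_+)^I$, so the limiting measure $\rho_i(u)du$ has no atoms. Since $I<\infty$, the uniformity over $i\in\{1,2,\dots,I\}$ is essentially free: a maximum over finitely many quantities, each tending to $0$ in probability, still tends to $0$, so I would argue each implication for a fixed $i$ and intersect the finitely many high-probability events at the end. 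Note also that $\psi_i^N(0,0)=0=\psi_i(0)$, so no boundary term at the origin will ever appear.

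For (ii) $\Rightarrow$ (i), fix $F_i\in C_0(\mathbb{R}_+,\mathbb{R})$ with support in $[0,u_0]$. The one technical point is that $F_i$ is only continuous, so I would avoid Stieltjes integration by parts against $F_i$ directly: given $\delta>0$, choose $G\in C^1(\mathbb{R}_+,\mathbb{R})$ supported in $[0,u_0+1]$ with $\|F_i-G\|_\infty<\delta$. For the smooth $G$ one integrates by parts to get $\int_{\mathbb{R}_+}G\,d\pi_i^{N,0}=-\int_{\mathbb{R}_+}\psi_i^N(u,0)G'(u)\,du$, with the boundary terms vanishing by compact support and $\psi_i^N(0,0)=0$, and similarly $\int_{\mathbb{R}_+}G\rho_i=-\int_{\mathbb{R}_+}\psi_i G'$. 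The locally uniform convergence in (ii) on $[0,u_0+1]$ then forces $\int\psi_i^N(\cdot,0)G'\to\int\psi_i G'$ on the high-probability event of (ii). The two approximation errors are each controlled by $\delta\,\pi_i^{N,0}([0,u_0+1])=\delta\,\psi_i^N(u_0+1,0)$, which is bounded by $\delta(\psi_i(u_0+1)+\delta)$ on the same event; letting $\delta\to0$ yields (i).

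For (i) $\Rightarrow$ (ii), I would first extract pointwise-in-$u$ convergence in probability of $\psi_i^N(u,0)$ by sandwiching: for fixed $u$ and small $\delta$, pick continuous compactly supported $F^-\le\mathbf{1}_{[0,u]}\le F^+$ agreeing with the indicator outside a $\delta$-neighbourhood of $u$, so that $\int F^-\,d\pi_i^{N,0}\le\psi_i^N(u,0)\le\int F^+\,d\pi_i^{N,0}$; applying (i) to $F^\pm$ and then sending $\delta\to0$ using the continuity of $\psi_i$ pins down $\psi_i^N(u,0)\to\psi_i(u)$ in probability. I would then upgrade this to locally uniform convergence by the classical argument (as in P\'olya's theorem) that monotone functions converging pointwise to a continuous limit converge locally uniformly: choose a finite grid $0=a_0<a_1<\dots<a_K=u_0$ fine enough, using the uniform continuity of the finitely many $\psi_i$ on $[0,u_0]$, that each $\psi_i$ oscillates by less than $\varepsilon$ across each cell, and sandwich $\psi_i^N(u,0)$ between its values at adjacent grid points via monotonicity; on the high-probability event (finitely many points, finitely many $i$) that $\psi_i^N(\cdot,0)$ is within $\varepsilon$ of $\psi_i$ at every grid point, one obtains $\sup_{u\le u_0}|\psi_i^N(u,0)-\psi_i(u)|\le2\varepsilon$.

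I do not anticipate a genuine obstacle, as the content is entirely standard real analysis; the only points requiring care are the smooth-approximation step that legitimises integration by parts against a merely continuous test function, and the bookkeeping of the ``convergence in probability'' statements together with the maximum over $i\in\{1,2,\dots,I\}$, so that the finitely many exceptional events are intersected correctly. Both are routine precisely because $I$ is fixed and finite and because the limiting densities are continuous.
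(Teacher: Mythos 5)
Your proposal is correct and follows essentially the same route as the paper. The direction (i) $\Rightarrow$ (ii) is identical: sandwich $\mathbf{1}_{[0,u]}$ between compactly supported continuous test functions to extract pointwise convergence in probability of $\psi_i^N(u,0)$ to the continuous limit $\psi_i(u)$, then upgrade to locally uniform convergence via monotonicity and a finite grid (P\'olya-type) argument, with the supremum over the finitely many $i$ handled by intersecting events. The only cosmetic difference is in (ii) $\Rightarrow$ (i): the paper approximates $F_i$ uniformly by a simple function, so that $\int F_i\,d\pi_i^{N,0}$ is controlled directly by values of $\psi_i^N(\cdot,0)$ at finitely many points, whereas you approximate by a smooth $G$ and integrate by parts; both are legitimate, both bound the approximation error by $\delta$ times the total mass $\psi_i^N(u_0+1,0)$ on the support, and yours merely invokes slightly more machinery for the same effect. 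No gaps.
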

\begin{proof} Suppose (i) holds. For each $u\in \R_+$, by approximating the function $\mathbf{1}_{[0,u]}$ uniformly by functions in $C_0(\mathbb{R}_+,\mathbb{R})$, one can then deduce that
\begin{equation}\label{pointconvergence}
\lim_{N\rightarrow\infty}\mathbf{P}_N\left(\sup_{i\in\{1,2,\dots,I\}} \left|\psi_i^N(u,0)-\psi_i(u)\right|>\varepsilon\right)=0.
\end{equation}
Since each of the functions  $\psi_i^N(\cdot, 0)$ is monotonic, and each $\psi_i$ is continuous, one may then extend the above to a uniform convergence statement. Indeed, for any  $u_0,\delta\in(0,\infty)$, there exists a finite sequence $0=u_{(0)} < u_{(1)} < u_{(2)} < \dots < u_{(\ell)} = u_0$ such that
\[\sup_{i\in\{1,2,\dots,I\}}\sup_{n \in \{0,1,2,\dots,\ell-1\}} \left( \psi_i(u_{(n+1)})- \psi_i (u_{(n)})\right) < \delta.\]
Moreover, appealing to monotonicity, we have
\begin{align*}
 \sup_{u \leq u_0} \left|\psi_i^N(u,0)-\psi_i(u)\right|  &=\sup_{n \in \{0,1,2,\dots,\ell-1\}}  \sup_{u \in [u_{(n)}, u_{(n+1)}]} \left|\psi_i^N(u,0)-\psi_i(u)\right|  \\
& \le \sup_{n \in \{0,1,2,\dots,\ell\}}\left| \psi_i^N(u_{(n)},0)- \psi_i (u_{(n)})\right| \\
&\hspace{40pt}+  \sup_{n \in \{0,1,2,\dots,\ell-1\}} \left( \psi_i(u_{(n+1)})- \psi_i (u_{(n)})\right).
\end{align*}
Combining these inequalities with \eqref{pointconvergence} yields (ii).

Next, suppose (ii) holds. One can then use the fact that any $F \in  C_0(\mathbb{R}_+,\mathbb{R})$ is approximated uniformly by a simple function to check that (i) holds.
\end{proof}

\begin{proof}[Proof of Theorem \ref{reallythemainthm}]
Under the assumption, by Lemma \ref{convergencequivalence}, \eqref{initconv} holds with $\psi^0_i(u):=\int^u_0 \rho^0_i(u)du'$. Note in particular that $\psi^0\in\mathcal{D}^{(2)}$. Thus, by Theorem \ref{mainthm} and Lemma \ref{rhopsirelation}, for every $\varepsilon>0$, $t\in(0,\infty)$ and $u_0\in(0,\infty)$,
\[\lim_{N\rightarrow\infty}\mathbf{P}_N\left(\sup_{i\in\{1,2,\dots,I\}}\sup_{u\leq u_0}\left|\psi_i^N(u,t)-\psi_i(u,t)\right|>\varepsilon\right)=0,\]
where $\psi(u,t)$ is defined by \eqref{BBSflow}.
Now, by Theorem \ref{mainthm2}, we have that $\rho_i(u,t):=\partial_u\psi(u,t)$ satisfies $\rho(\cdot,t) \in \mathcal{D}_{\mathrm{density}}$ for all $t \in \R_+$. Hence applying Lemma \ref{convergencequivalence} again yields that, for every $\varepsilon>0$, $t\in(0,\infty)$ and $(F_i)_{i=1}^I \in C_0(\mathbb{R}_+,\mathbb{R})^I$,
\[\lim_{N\rightarrow\infty} \mathbf{P}_N\left(\sup_{i\in\{1,2,\dots,I\}} \left| \int_{\R_+} F_i(u)\pi_i^{N,t}(du) - \int_{\R_+} F_i(u) \rho_i(u,t)du  \right|  > \varepsilon\right) =0.\]
We moreover have from Theorem \ref{mainthm2} that $\rho$ is the unique solution of \eqref{PDErho} satisfying $\rho \in C^1(\R_+^2,\R_+)^I$ and $\rho(\cdot,t) \in \mathcal{D}_{\mathrm{density}}$ for all $t \in \R_+$.
\end{proof}

\begin{proof}[Proof of Corollary \ref{maincor1}] Since
\[\left|\frac{1}{N}\sum_{x=1}^{\lfloor Nu\rfloor }\eta\left(x,\lfloor Nt\rfloor \right)-\sum_{i=1}^{I}i\psi_i^N(u,t)\right|\leq\frac{I^2}{N},\]
the first claim readily follows from Theorem \ref{mainthm}. It similarly holds that, for any $F \in C_0(\mathbb{R}_+,\mathbb{R})$,
\[\left| \frac{1}{N} \sum_{x \in \N} F \left(\frac{x}{N}\right)\eta(x, \lfloor Nt \rfloor) - \sum_{i=1}^I i\int_{\R_+} F(u)\pi_i^{N,t} (du)\right|\leq\frac{I^2\sup_u|F(u)|}{N},\]
and so the second claim is a simple consequence of Theorem \ref{reallythemainthm}.
\end{proof}

\section*{Acknowledgements}

The research of both DC and MS was supported by JSPS Grant-in-Aid for Scientific Research (B), 19H01792.

\providecommand{\bysame}{\leavevmode\hbox to3em{\hrulefill}\thinspace}
\providecommand{\MR}{\relax\ifhmode\unskip\space\fi MR }
\providecommand{\MRhref}[2]{%
  \href{http://www.ams.org/mathscinet-getitem?mr=#1}{#2}
}
\providecommand{\href}[2]{#2}


\begin{thebibliography}{10}

\bibitem{CKST}
D.~A. Croydon, T.~Kato, M.~Sasada, and S.~Tsujimoto, \emph{Dynamics of the
  box-ball system with random initial conditions via {P}itman's
  transformation}, to appear in Mem. Amer. Math. Soc., preprint appears at
  arXiv:1806.02147, 2018.

\bibitem{CS}
D.~A. Croydon and M.~Sasada, \emph{Duality between box-ball systems of finite
  box and/or carrier capacity}, to appear in RIMS K\^{o}ky\^{u}roku Bessatsu,
  preprint appears at arXiv:1905.00189, 2019.

\bibitem{CSMont}
\bysame, \emph{Invariant measures for the box-ball system based on stationary
  {M}arkov chains and periodic {G}ibbs measures}, J. Math. Phys. \textbf{60}
  (2019), no.~8, 083301, 25.

\bibitem{Doyon}
B.~Doyon, \emph{Lecture notes on generalised hydrodynamics}, preprint appears
  at arXiv:1912.08496, 2019.

\bibitem{DSY}
B.~Doyon, H.~Spohn, and T.~Yoshimura, \emph{A geometric viewpoint on
  generalized hydrodynamics}, Nuclear Phys. B \textbf{926} (2018), 570--583.

\bibitem{El}
G.~A. El, \emph{The thermodynamic limit of the {W}hitham equations}, Phys.
  Lett. A \textbf{311} (2003), no.~4-5, 374--383.

\bibitem{EK}
G.~A. El and A.~M. Kamchatnov, \emph{Kinetic equation for a dense soliton gas},
  Phys. Rev. Lett. \textbf{95} (2005), no.~20, 204101, 4.

\bibitem{EKPZ}
G.~A. El, A.~M. Kamchatnov, M.~V. Pavlov, and S.~A. Zykov, \emph{Kinetic
  equation for a soliton gas and its hydrodynamic reductions}, J. Nonlinear
  Sci. \textbf{21} (2011), no.~2, 151--191.

\bibitem{FG}
P.~A. Ferrari and D.~Gabrielli, \emph{{BBS} invariant measures with independent
  soliton components}, preprint appears at arXiv:1812.02437, 2018.

\bibitem{FG2}
\bysame, \emph{Box-ball system: soliton and tree decomposition of excursions},
  preprint appears at arXiv:1906.06405, 2019.

\bibitem{Ferrari}
P.~A. Ferrari, C.~Nguyen, L.~Rolla, and M.~Wang, \emph{Soliton decomposition of
  the box-ball system}, preprint appears at arXiv:1806.02798, 2018.

\bibitem{IKT}
R.~Inoue, A.~Kuniba, and T.~Takagi, \emph{Integrable structure of box-ball
  systems: crystal, {B}ethe ansatz, ultradiscretization and tropical geometry},
  J. Phys. A \textbf{45} (2012), no.~7, 073001, 64.

\bibitem{KNTW}
S.~Kakei, J.~J.~C. Nimmo, S.~Tsujimoto, and R.~Willox, \emph{Linearization of
  the box-ball system: an elementary approach}, J. Integrable Syst. \textbf{3}
  (2018), no.~1, xyy002, 32.

\bibitem{KLbook}
C.~Kipnis and C.~Landim, \emph{Scaling limits of interacting particle systems},
  Grundlehren der Mathematischen Wissenschaften [Fundamental Principles of
  Mathematical Sciences], vol. 320, Springer-Verlag, Berlin, 1999.

\bibitem{Kondo}
K.~Kondo, \emph{Dynamics of the multicolor box-ball system with random initial
  conditions via {P}itman transformation}, forthcoming, 2020.

\bibitem{KL}
A.~Kuniba and H.~Lyu, \emph{Large deviations and one-sided scaling limit of
  multicolor box-ball system}, to appear in J. Stat. Phys., preprint appears at
  arXiv:1808.08074, 2019.

\bibitem{KLO}
A.~Kuniba, H.~Lyu, and M.~Okado, \emph{Randomized box-ball systems, limit shape
  of rigged configurations and thermodynamic {B}ethe ansatz}, Nuclear Phys. B
  \textbf{937} (2018), 240--271.

\bibitem{KMP}
A.~Kuniba, G.~Misguich, and V.~Pasquier, \emph{Generalized hydrodynamics in
  box-ball system}, forthcoming, 2020.

\bibitem{Lev}
L.~Levine, H.~Lyu, and J.~Pike, \emph{Double jump phase transition in a soliton
  cellular automaton}, preprint appears at arXiv:1706.05621, 2017.

\bibitem{LLPS}
J.~Lewis, H.~Lyu, P.~Pylvavskyy, and A.~Sen, \emph{Scaling limit of soliton
  lengths in a multicolor box-ball system}, preprint appears at
  arXiv:1911.04458, 2019.

\bibitem{Sp}
H.~Spohn, \emph{Generalized {G}ibbs ensembles of the classical {T}oda chain},
  preprint appears at arXiv:1902.07751, 2019.

\bibitem{takahashi1990}
D.~Takahashi and J.~Satsuma, \emph{A soliton cellular automaton}, J. Phys. Soc.
  Japan \textbf{59} (1990), 3514--3519.

\bibitem{TH}
S.~Tsujimoto and R.~Hirota, \emph{Ultradiscrete {K}d{V} equation}, J. Phys.
  Soc. Japan \textbf{67} (1998), no.~6, 1809--1810.

\bibitem{Zakharov}
V.~E. Zakharov, \emph{Kinetic equation for solitons}, Sov. Phys. JETP
  \textbf{33} (1971), no.~3, 538--541.

\end{thebibliography}
\end{document}